\documentclass[a4paper,reqno]{amsart}
\usepackage{amssymb} 
\usepackage{hyperref}
\usepackage{eucal}
\usepackage{graphicx}
\usepackage{epsfig}
\numberwithin{equation}{section}
\newtheorem{definition}{Definition}[section]
\newtheorem{theo}[definition]{Theorem}
\newtheorem{prop}[definition]{Proposition}
\newtheorem{remark}[definition]{Remark}

\setlength{\oddsidemargin}{0in} \setlength{\evensidemargin}{0in}
\setlength{\topmargin}{-1cm} \setlength{\headheight}{0cm} \vsize
27.9truecm \hsize 21.59truecm \hoffset -0.4truecm
\setlength{\textheight}{23cm} \setlength{\textwidth}{15.5cm}
\baselineskip 8 mm \baselineskip 8 mm

\newcommand{\lcf}{\lbrack\! \lbrack}
\newcommand{\rcf}{\rbrack\! \rbrack}\newcommand{\TT}{\mathcal{T}}
\newcommand{\GG}{\mathcal{G}}

\newcommand{\HH}{\mathcal{H}}
\newcommand{\dd}{\mathfrak{d}}
\def\a{\alpha}
\newcommand{\R}{\mathbb{R}}
\begin{document}

\title{The ubiquity of the symplectic hamiltonian equations in mechanics}

\author[P.\ Balseiro]{Paula Balseiro}
\address{Paula Balseiro:
Instituto de Ciencias Matem\'aticas (CSIC-UAM-UC3M-UCM),  Serrano 123, 28006
Madrid, Spain and CONICET Argentina.} \email{poi@mate.unlp.edu.ar}

\author[M.\ de Le\'on]{Manuel de Le\'on}
\address{Manuel de Le\'on:
ULL-CSIC Geometr\'{\i}a Diferencial y Mec\'anica Geom\'etrica\\
Instituto de Ciencias Matem\'aticas (CSIC-UAM-UC3M-UCM),  Serrano 123, 28006
Madrid, Spain} \email{mdeleon@imaff.cfmac.csic.es}

\author[J.\ C.\ Marrero]{Juan C.\ Marrero}
\address{Juan C.\ Marrero:
ULL-CSIC Geometr\'{\i}a Diferencial y Mec\'anica Geom\'etrica\\
Departamento de Matem\'atica Fundamental, Facultad de
Ma\-te\-m\'a\-ti\-cas, Universidad de la Laguna, La Laguna,
Tenerife, Canary Islands, Spain} \email{jcmarrer@ull.es}

\author[D.\ Mart\'{\i}n de Diego]{David Mart\'{\i}n de Diego}
\address{D.\ Mart\'{\i}n de Diego:
ULL-CSIC Geometr\'{\i}a Diferencial y Mec\'anica Geom\'etrica\\
Instituto de Ciencias Matem\'aticas (CSIC-UAM-UC3M-UCM),  Serrano 123, 28006
Madrid, Spain} \email{d.martin@imaff.cfmac.csic.es}

\thanks{This work has been partially supported by MEC (Spain)
Grants MTM 2006-03322, MTM 2007-62478, project ``Ingenio
Mathematica" (i-MATH) No. CSD 2006-00032 (Consolider-Ingenio 2010)
and S-0505/ESP/0158 of the CAM. The authors wish to thank David Iglesias for  helpful comments}

\begin{abstract}
In this paper,  we derive a ``hamiltonian formalism" for a
wide class of mechanical systems, including classical hamiltonian
systems, nonholonomic systems, some classes of servomechanism...  This
construction strongly relies in the geometry characterizing the
different systems. In particular, we obtain that the class of the
so-called algebroids covers a great variety of
mechanical  systems. Finally, as the main result, a hamiltonian symplectic
realization of  systems defined on
algebroids is obtained.
\end{abstract}

\keywords{canonical symplectic formalism,
algebroid, (generalized) nonholonomic mechanics, exact and closed symplectic
section}

\subjclass[2000]{70H05; 70G45; 53D05; 37J60}
\maketitle


\baselineskip 6 mm

\section{Introduction}

One of the most important equations in Mathematics and Physics are
certainly the Hamilton equations:
\[
\dot{q}^i=\frac{\partial H}{\partial p_i},\qquad
\dot{p}_i=-\frac{\partial H}{\partial q^i},
\]
where $(q^i, p_i)$ are canonical coordinates. Symplectic geometry
allows us to write these equations in an intrinsic way (see
\cite{AM,LeRo}),
\begin{equation}\label{hamilton}
i_{X_H}\omega_Q=dH\; .
\end{equation}
In equation (\ref{hamilton}), $H: T^*Q\to \R$ represents a
Hamiltonian function defined on the cotangent bundle $T^*Q$  of a configuration
manifold $Q$ and $\omega_Q$ is the canonical symplectic form of the
cotangent bundle (in canonical coordinates, $\omega_Q=dq^i\wedge
dp_i$). The skew-symmetry of the canonical symplectic form leads
to conservative properties for the Hamiltonian vector field $X_H$
(preservation of the energy). On the contrary, in other type of
systems this conservative behavior is not required. For instance,
from the symmetry of a riemmanian metric it follows  dissipative
properties for the gradient vector field (see \cite{CoVaCr}).

It is an universal belief that  Equation (\ref{hamilton}) is only
valid for free Hamiltonian systems. For other type of systems,
Equation (\ref{hamilton}) is, in general, not longer valid (for
instance, in the presence of nonholonomic constraints or
dissipative forces, or in the case of gradient systems). In these
cases,  it is necessary to modify Equation (\ref{hamilton}) adding
some extra-terms of different nature: dissipative forces,
constraint forces, etc... (see \cite{BaSo,c,CeGr,Cort,CoVaCr,m}).

Our approximation adopts a new point of view. First,  it is
necessary to understand the underlying geometry  of Equation
(\ref{hamilton}) which will permit us to conclude that Hamilton's
equations have an ubiquity  property: \emph{many different
mechanical systems can be described by a  symplectic equation
constructed in the same way than in the standard case}. Of course,
this construction relies on the different geometry behind each
particular problem. We will show in this paper the main lines of
that construction. We should remark that,  for the particular
construction of Equation (\ref{hamilton}) in each different
mechanical problem, it will be necessary to introduce some
sophisticated geometric techniques like algebroids, prolongation
of structures, lifting of connections
\cite{CaLa,GG,GrUr2,GrabUrba97,LMM} among others,
 which are based on its underlying geometry.

Let us describe our method with more details. If we extract the
geometric elements that appear in Equation  (\ref{hamilton}) we
observe that $\omega_Q=dq^i\wedge dp_i$ is derived from the
Liouville 1-form $\lambda_Q=p_idq^i$, more precisely, $\omega_Q=-d\lambda_Q$
(see \cite{AM,LeRo} for details). In symplectic geometry terms we
say that $(T^*Q, \omega_Q)$ is an exact symplectic manifold. This
structure induces a linear Poisson tensor field $\Pi_{T^*Q}$ on
$T^*Q$ defined by
\[
\Pi_{T^*Q}(dF, dG)=\omega_Q(X_F, X_G)\; ,
\]
where $X_F$ and $X_G$ are the hamiltonian vector fields
corresponding to the functions $F: T^*Q\to \R$ and $G: T^*Q\to
\R$, respectively. In canonical coordinates
\[
\Pi_{T^*Q}=\frac{\partial}{\partial q^i}\wedge
\frac{\partial}{\partial p_i}.
\]

A trivial, but interesting, comment is that the classical bracket
of vector fields (the standard Lie bracket) is induced by the
linear Poisson tensor $\Pi_{T^*Q}$ (and viceversa). In fact, there
exists a one-to-one correspondence between the space of vector
fields on $Q$ and the space of linear functions on $T^*Q$. Indeed,
for each vector field $X\in {\mathfrak X}(Q)$ the corresponding
linear function $\widehat{X}: T^*Q\to \R$ is given by
\[
\widehat{X}(\kappa_q)=\langle \kappa_q, X(q)\rangle,
\]
where $\langle\; ,\rangle$ is the natural pairing between vectors and
covectors, and $\kappa_q\in T^*_qQ$. Therefore, for $X, Y\in
{\mathfrak X}(Q)$, the bracket of the two vector fields $X$ and
$Y$ is characterized as the unique vector field associated to the
linear function $-\Pi_{T^*Q}(d\hat{X}, d\hat{Y})$. Observe that in
coordinates
\[
\Pi_{T^*Q}(d\hat{X}, d\hat{Y})=p_j\left(\frac{ \partial
X^j}{\partial q^i}Y^i- \frac{ \partial Y^j}{\partial
q^i}X^i\right)=-\widehat{[X, Y]}\; .
\]
In a schematic way we have
$$  \text{linear Poisson tensor $\Pi_{T^*Q}$}
\quad \longleftrightarrow\quad \text{standard Lie bracket on $Q$}.
$$
That is, we have that there exists a one-to-one correspondence
between linear Poisson tensors on $T^*Q$ and  Lie algebra structures of vector fields on $Q$ (see \cite{Cou}).

As a preliminary conclusion,  classical hamiltonian
formulations strongly relies on the standard Lie bracket of vector
fields (or equivalently, the linear Poisson tensor on $T^*Q$). Modifications of this bracket (or the associated linear tensor) will presumably change
the properties of the dynamics. For example, if we do not impose
the skew-symmetry of the bracket we will have a dissipative
behavior, since, in the cotangent bundle, we will obtain a
2-contravariant linear tensor field which is not necessarily
skew-symmetric \cite{CoVaCr}. Another property that it is possible
to drop is the Jacobi identity, which is related with the
preservation of the symplectic form by the flow of the hamiltonian
vector field. In many interesting cases, as for instance
nonholonomic mechanics (see \cite{CdLMM2007,CorMar,GG}), it is
well known that the Jacobi identity is equivalent to the
integrability of the constraints, that is, to holonomic mechanics.
Since our objective is to obtain a geometric framework including
all these cases, it is necessary to work without imposing Jacobi
identity, from the beginning,  to our tensor field or associated
bracket. Moreover, the role of the tangent bundle is not
essential, and we may change it for an arbitrary vector bundle,
and then the linear contravariant tensor field will be now defined
on its dual bundle $E^*$ (see \cite{Weinstein99}).

We will show that the category of  algebroids is
general enough to cover all the cases that we want to analyze. An
 algebroid (see \cite{GGU,GrUr2,GrabUrba97,IvOp,OrPl}) is, roughly
speaking, a vector bundle $\tau_E: E\to Q$,  equipped with a
bilinear bracket of sections $B_E: \Gamma(\tau_E)\times
\Gamma(\tau_E)\to \Gamma(\tau_E)$ and two vector bundle morphisms
$\rho_E^l: E\to TQ$ and $\rho_E^r: E\to TQ$ satisfying a
Leibniz-type property (see (\ref{Eqfun})). Observe that properties
like skew-symmetry or Jacobi identity are not considered in this
category. This general structure is equivalent to give a linear
2-contravariant tensor field $\Pi_{E^*}$ on its dual bundle $E^*$.
In conclusion, we have that
$$  \text{linear contravariant tensor field $\Pi_{E^*}$} \quad
\longleftrightarrow\quad \text{ algebroid structure on $E$} \,.
$$

The main objective of this paper is to show that the general
construction of the hamiltonian symplectic formalism in classical
mechanics remains essentially unchanged starting from the more
general framework of algebroids. This result, which is proved in three
Theorems (Theorems \ref{TeoExact}, \ref{TeoHamiltonian} and
\ref{Teocloded}), constitutes  the core of our paper. Additionally,
we show how to apply these new techniques  to several examples of
interest: (generalized)-nonholonomic mechanics, dissipative
systems, and gradient systems.

The paper is structured as follows. In Section
\ref{subSecAlmosLeibniz} we define the notion of an algebroid and
relate this concept with Hamilton equations for general linear
2-contravariant tensor fields. Moreover, some examples of interest
are considered: gradient extension of dynamical systems,
nonholonomic mechanics and generalized nonholonomic mechanics. In
Section \ref{section-exact}, it is introduced the notion of an
exact symplectic algebroid, structure that will be necessary to
formulate the main result of the paper: the construction of a
symplectic formulation of hamiltonian mechanics in the context of
algebroids in Sections \ref{section-main} and  \ref{Section5}.
Finally, we apply the precedent results to the examples considered
in Section \ref{subSecAlmosLeibniz}.

\bigskip

\section{Algebroids and Hamiltonian Mechanics}\label{subSecAlmosLeibniz}
\setcounter{equation}{0}

Let $\tau_E : E \rightarrow Q$ be a real vector bundle over a
manifold $Q$ and $\Gamma(\tau_{E})$ be the space of sections of
$\tau_{E}: E \rightarrow Q$.

\begin{definition}
An {algebroid structure} on $E$ is a $\R$-bilinear bracket
$$B_E : \Gamma(\tau_E) \times \Gamma(\tau_E) \rightarrow
\Gamma(\tau_E)$$ together with two vector bundles morphisms
$\rho^l_E, \rho^r_E: E \rightarrow TQ$ ({left} and {right anchors})
such that
\begin{equation}
B_E (f \sigma, f'\sigma') = f \rho_E^l(\sigma)(f') \sigma' -  f'
\rho_E^r(\sigma')(f) \sigma + ff' B_E(\sigma, \sigma') \label{Eqfun}
\end{equation} for $f, f' \in C^{\infty}(Q)$ and $\sigma, \sigma' \in \Gamma(\tau_E)$.
\end{definition}

The algebroid structure on $\tau_E : E \rightarrow Q$ was defined in
\cite{GGU,GrUr2,GrabUrba97} and it is called a {\it Leibniz
algebroid}
 in \cite{IvOp,OrPl}.

If the $\R$-bilinear bracket $B_E$ is skew-symmetric we have a
\emph{skew-symmetric algebroid structure} \cite{GGU} (an
\emph{almost Lie algebroid structure} in the terminology of
\cite{LeMaMa} or an almost-Lie structure in the terminology of
\cite{Po}). In such a case, the left anchor coincides with the
right anchor: $\rho^l_E=\rho^r_E$. In the sequel, we will denote
the bracket of sections in this skew-symmetric case by $\lcf\; ,\;
\rcf_E$. On the other hand if the bracket $\lcf\; ,\; \rcf_E$
defines a Lie algebra structure on the space $\Gamma(\tau_E)$ then
the pair $(\lcf\; ,\; \rcf_E, \rho_E= \rho^l_E = \rho^r_E)$ is a
\emph{Lie algebroid structure} on the vector bundle $\tau_E: E
\rightarrow Q$ (see \cite{Mac}).

Another interesting case  is when the $\R$-bilinear bracket $B_E$
is symmetric, then  we have a \emph{symmetric algebroid
structure}. In such a case, $\rho^l_E=-\rho^r_E$.

Now, note that there exists a one-to-one correspondence between
the space $\Gamma(\tau_{E})$ of sections of the vector bundle
$\tau_{E}: E \to Q$ and the space of linear functions on $E^*$. In
fact, if $\sigma \in \Gamma(\tau_{E})$ then the corresponding
linear function $\widehat{\sigma}$ on $E^*$ is given by
\[
\widehat{\sigma}(\kappa) =
\kappa(\sigma(\tau_{E^*}(\kappa)))=\langle \kappa,
\sigma(\tau_{E^*}(\kappa))\rangle , \; \; \mbox{ for } \kappa \in
E^*,
\]
where $\tau_{E^*}: E^* \to Q$ is the vector bundle projection.

An algebroid structure $(B_E, \rho^l_E , \rho^r_E)$ on a vector
bundle $\tau_E: E \rightarrow Q$ induces a linear tensor $\Pi_{E^*}$
of type (2,0) on $E^*$. In fact, if $\{\cdot , \cdot
\}_{\Pi_{E^*}}: C^{\infty}(E^*) \times C^{\infty}(E^*) \rightarrow
C^{\infty}(E^*)$ is the induced bracket of functions given by
$$\{\varphi, \psi \}_{\Pi_{E^*}} = \Pi_{E^*} ( d \varphi, d \psi ),
\ \ \mbox{for} \ \varphi, \psi \in C^{\infty}(E^*),$$ then we have
that
$$\{\widehat{\sigma}, \widehat{\sigma'}
\}_{\Pi_{E^*}} = - \widehat{B_E( \sigma , \sigma' )},  \ \ \ \ \ \
\ \{\widehat{\sigma}, f' \circ \tau_{E^*} \}_{\Pi_{E^*}} = -
\rho_E^l(\sigma)(f') \circ \tau_{E^*}$$
\begin{equation} \{ f \circ
\tau_{E^*}, \widehat{\sigma'}\}_{\Pi_{E^*}} = \rho_E^r(\sigma')(f)
\circ \tau_{E^*}, \ \ \ \ \ \ \ \{ f \circ \tau_{E^*}, f' \circ
\tau_{E^*} \}_{\Pi_{E^*}} =0, \label{Pistar}
\end{equation}
for $\sigma, \sigma' \in \Gamma(\tau_E)$ and $f, f' \in
C^{\infty}(Q)$ (see \cite{GGU,GrUr2,GrabUrba97}).

A curve $\gamma: I \to E$ is $\rho^l_E$-\emph{admissible}
(respectively, $\rho^r_E$-\emph{admissible}) if $\displaystyle
\frac{d}{dt}(\tau_{E} \circ \gamma) = \rho^l_{E} \circ \gamma$
(respectively, $\displaystyle \frac{d}{dt}(\tau_{E} \circ \gamma) =
\rho^r_{E} \circ \gamma$).

In the particular case when $E$ is a skew-symmetric algebroid it
follows that $\Pi_{E^*}$ is a linear 2-vector on $E^*$ (or an
\emph{almost Poisson structure} on $E^*$ in the terminology of
\cite{LeMaMa}). If $E$ is a Lie algebroid, the bracket $\{ \cdot ,
\cdot \}_{\Pi_{E^*}}$ satisfies the Jacobi identity and
$\Pi_{E^*}$ is a Poisson structure on $E^*$ (see
\cite{Cou,LeMaMa,Li,Weinstein99}).

Now let $H : E^* \rightarrow \R$ be a \emph{Hamiltonian function}
on $E^*$. Then one may consider the \emph{Hamiltonian vector
field} $\HH_H^{\Pi_{E^*}}$ of $H$ with respect to $\Pi_{E^*}$,
that is,
$$\HH_H^{\Pi_{E^*}} (F) = - \{ H, F \}_{\Pi_{E^*}} , \ \
\mbox{for} \ F \in C^{\infty}(E^*).$$

The integral curves of the vector field $\HH_H^{\Pi_{E^*}}$ are
the solutions of the \emph{Hamilton equations} for $H$. Next, we
will obtain some local expressions.

Suppose that $(q^i)$ are local coordinates on $Q$ and that
$\{\sigma_{\a}\}$ is a local basis of the space $\Gamma(\tau_E)$
such that \begin{equation} B_E(\sigma_{\a}, \sigma_{\beta}) =
{(B_E)}_{\a \beta}^{\gamma} e_{\gamma}, \ \ \ \rho_E^l (\sigma_{\a})
= {(\rho_E^l)}_{\a}^i \frac{\partial}{\partial q^i}, \ \ \ \rho_E^r
(\sigma_{\beta}) = {(\rho_E^r)}_{\beta}^j \frac{\partial}{\partial
q^j} .\label{Lostrfu} \end{equation}

The local functions ${(B_E)}_{\a \beta}^{\gamma},
{(\rho_E^l)}_{\a}^i$ and ${(\rho_E^r)}_{\beta}^j$ are called the
\emph{local structure functions} of algebroid $\tau_E: E \rightarrow
Q$.

Denote by $(q^i, p_{\a})$ the induced local coordinates on $E^*$.
Then using (\ref{Pistar}) and (\ref{Lostrfu}), it follows that
$$\Pi_{E^*} = {(\rho_E^r)}_{\a}^i \frac{\partial}{\partial q^i}
\otimes \frac{\partial}{\partial p_{\a}} - {(\rho_E^l)}_{\beta}^j
\frac{\partial}{\partial p_{\beta}} \otimes \frac{\partial}{\partial
q^j} - {(B_E)}_{\a \beta}^{\gamma} p_{\gamma}
\frac{\partial}{\partial p_{\a}} \otimes \frac{\partial}{\partial
p_{\beta}}.$$

Therefore, the Hamiltonian vector field of $H$ is given by
\begin{equation} \HH_H^{\Pi_{E^*}} = {(\rho_E^l)}_{\a}^i
\frac{\partial H}{\partial p_{\a}}\frac{\partial}{\partial q^i}  -
\left( {(\rho_E^r)}_{\beta}^j \frac{\partial H}{\partial q^j} -
{(B_E)}_{\a \beta}^{\gamma} p_{\gamma} \frac{\partial H}{\partial
p_{\a}}  \right) \frac{\partial}{\partial p_{\beta}}, \label{LocHvf}
\end{equation} which implies that the local expression of the
Hamilton equations is $$\frac{dq^i}{dt} = {(\rho_E^l)}_{\a}^i
\frac{\partial H}{\partial p_{\a}}, \ \ \ \ \ \frac{dp_{\beta}}{dt}
= - \left\{ {(\rho_E^r)}_{\beta}^j \frac{\partial H}{\partial q^j} -
{(B_E)}_{\a \beta}^{\gamma} p_{\gamma} \frac{\partial H}{\partial
p_{\a}}  \right\}.$$

\begin{remark}
{\rm Working with not in general skew-symmetric or symmetric
tensors it is possible to distinguish two different dynamics (one
on the left and one on the right). Therefore, we can obtain also a
different Hamiltonian vector field $\widetilde{\HH}_H^{\Pi_{E^*}}$
of $H$ with respect to $\Pi_{E^*}$:
$$\widetilde{\HH}_H^{\Pi_{E^*}} (F) =\{ F, H \}_{\Pi_{E^*}} , \ \ \mbox{for} \ F \in C^{\infty}(E^*).$$
In coordinates, \[ \widetilde{\HH}_H^{\Pi_{E^*}} =
{(\rho_E^r)}_{\a}^i \frac{\partial H}{\partial
p_{\a}}\frac{\partial}{\partial q^i}  - \left(
{(\rho_E^l)}_{\beta}^j \frac{\partial H}{\partial q^j} +
{(B_E)}_{\beta\a}^{\gamma} p_{\gamma} \frac{\partial H}{\partial
p_{\a}}  \right) \frac{\partial}{\partial p_{\beta}}.
\]
In the sequel we only consider the vector field
$\HH_H^{\Pi_{E^*}}$ since the analysis for
$\widetilde{\HH}_H^{\Pi_{E^*}}$ is similar.

 }
 \end{remark}

\subsection{First example. A symmetric case: Gradient extension of a dynamical system}\label{grad-ext}

Let $Q$ be an $n$-dimensional manifold. Let ${\mathcal G}$ be a
riemannian metric on $Q$, i.e, a positive-definite  symmetric
$(0,2)$-tensor on $Q$. Associated to ${\mathcal G}$ we have the
associated musical isomorphisms
\begin{eqnarray*}
&&\flat_{\mathcal G}: {\mathfrak X}(Q)\longrightarrow \Lambda^1(Q), \qquad \flat_{\mathcal G}(X)(Y)={\mathcal G}(X,Y),\\
&&\sharp_{\mathcal G}(\mu)=\flat_{\mathcal G}^{-1}(\mu)
\end{eqnarray*}
where $X, Y\in {\mathfrak X}(Q)$ and $\mu\in \Lambda^1(Q)$. In
coordinates $(q^i)$ on $Q$,  the metric is expressed as ${\mathcal
G}={\mathcal G}_{ij}(q)dq^i\otimes dq^j$.

Fixed a function $f\in C^{\infty}(M)$, it is defined the gradient
vector field associated to $f$ as $\hbox{grad}_{\mathcal
G}(f)=\sharp_{\mathcal G}(df)$. In coordinates,
\[
\hbox{grad}_{\mathcal G}(f)={\mathcal G}^{ij}\frac{\partial
f}{\partial q^j}\frac{\partial}{\partial q^i}
\]
where $({\mathcal G}^{ij})$ is the inverse matrix of $({\mathcal G}_{ij})$.

Associated with the metric ${\mathcal G}$ there is an affine
connection $\nabla^{\mathcal G}$, called the \emph{Levi-Civita
connection} determined by:
\begin{equation}\label{levi}
\begin{array}{l}
 [X,Y]= \nabla^{\mathcal G}_X Y-\nabla^{\mathcal G}_Y X \hbox{   (symmetry)}\\
 X( {\mathcal G}(Y,Z))={\mathcal G}(\nabla^{\mathcal G}_X Y, Z)+{\mathcal G}(Y, \nabla^{\mathcal G}_X Z)\hbox{   (metricity)}\; ,
 \end{array}
 \end{equation}
where $X, Y, Z\in {\mathfrak X}(Q)$. Locally, $ \nabla^{\mathcal
G}_{\frac{\partial}{\partial q^j}}\frac{\partial}{\partial
q^k}=\Gamma_{jk}^i\frac{\partial}{\partial q^i} $ where the
Christoffel symbols $\Gamma_{jk}^i$ of $\nabla^{\mathcal G}$ are given by
\[
\Gamma_{jk}^i=\frac{1}{2}{\mathcal G}^{ik'}\left( \frac{\partial
{\mathcal G}_{k'j}}{\partial q^k}+ \frac{\partial {\mathcal
G}_{k'k}}{\partial q^j}-\frac{\partial {\mathcal G}_{jk}}{\partial
q^{k'}}\right)\; .
\]

Consider now the \emph{symmetric product}:
\[
B_{TQ}(X,Y)=\nabla^{\mathcal G}_X Y +\nabla^{\mathcal G}_Y X\; \quad
X, Y\in {\mathfrak X}(Q).
\]
Locally,
\[
B_{TQ}(\frac{\partial}{\partial q^j},\frac{\partial}{\partial
q^k})=\left(\Gamma_{jk}^i+\Gamma_{kj}^i\right)\frac{\partial}{\partial
q^i}=2\Gamma_{jk}^i\frac{\partial}{\partial q^i}.
\]
It is well known that the symmetric product is an element crucial in
the study of various aspects  of mechanical control systems such us
controllability, motion planning, (see for example \cite{BulLewis}) and also characterize when a
distribution is geodesically invariant \cite{Lewis}. Now define the left and
right anchors by, $\rho_{TQ}^l=\hbox{id}_{TQ}$ and
$\rho_{TQ}^r=-\hbox{id}_{TQ}$. The tangent bundle equipped with
$(B_{TQ}, \rho^l_{TQ} , \rho^r_{TQ})$ is a  (symmetric) algebroid.

This structure induces  a linear tensor $\Pi_{T^*Q}$ of type (2,0)
on $T^*Q$. In local coordinates $(q^i, p_i)$ on $T^*Q$, the
bracket relations induced by this tensor field are:
\begin{eqnarray*}
 &\{q^i\, , q^j\}_{\Pi_{T^*Q}}=0,\qquad &\{q^i\, , p_j\}_{\Pi_{T^*Q}}=-\delta_{j}^{i}\\
 &\{p_i\, , q^j\}_{\Pi_{T^*Q}}=-\delta_{i}^{j},\qquad &\{p_i\, , p_j\}_{\Pi_{T^*Q}}=-2p_k\Gamma_{ij}^k\; .
 \end{eqnarray*}
Given an arbitrary vector field $X\in {\mathfrak X}(Q)$ one may
define the function $H_X:T^*Q\longrightarrow \R$ by
$H_X(\kappa)=\langle \kappa, X_q\rangle$, for $\kappa \in T^*_qQ$,
that is, $H_X=\widehat{X}$. In coordinates, $H_X(q^i,
p_i)=p_iX^i(q)$.

The hamiltonian vector field $\HH_{H_X}^{\Pi_{T^*Q}}$ is
 \begin{equation*} \HH_{H_X}^{\Pi_{T^*Q}} = X^i(q)\frac{\partial }{\partial q^i}  +
p_k\left( \frac{\partial X^k}{\partial q^j} + 2\Gamma_{ij}^k X^i
\right) \frac{\partial}{\partial p_{j}}, \label{LocHvf-1}
\end{equation*}
 The equations for its integral curves  are:
 \begin{eqnarray}
 \dot{q}^i&=&X^i(q)\nonumber \\
 \dot{p}_j&=&p_k\left( \frac{\partial X^k}{\partial q^j} + 2\Gamma_{ij}^k X^i \right)\;
 . \label{HamilGrad}
 \end{eqnarray}
These equations are the \emph{gradient extension} of the nonlinear
equation $\dot{q}^i=X^i(q)$ (see \cite{CoVaCr}).

\subsection{Second example. An antisymmetric case: Nonholonomic Mechanics}\label{qaz}

Let $\tau_{E}:E \to Q$ be a Lie algebroid over a manifold $Q$ and
denote by $(\lcf\cdot ,\cdot \rcf_E, \rho_{E})$ the Lie algebroid
structure on $E$.

Following \cite{LeMaMa}, a \emph{mechanical system subjected to
linear nonholonomic constraints on $E$} is a pair $(L, D)$, where
$L: E \to \R$ is a \emph{Lagrangian function of mechanical type},
that is,
\begin{equation}\label{lagrangian}
L(e) = \displaystyle \frac{1}{2} {\mathcal G}(e, e) -
V(\tau_{E}(e)), \; \; \; \mbox{ for } e\in E,
\end{equation}
with  ${\mathcal G}: E \times_{Q} E \to \R$ a bundle metric on $E$
and $D$ the total space of a vector subbundle $\tau_{D}: D \to Q$
of $E$ such that $rank D = m$. The vector subbundle $D$ is said to
be the \emph{ constraint subbundle}.

Denote by $i_{D}: D \to E$ the canonical inclusion and consider
the orthogonal decomposition $E = D \oplus D^{\perp}$ and the
associated orthogonal projectors $P: E \to D$ and $Q: E \to
D^{\perp}$.

The Levita-Civita connection $\nabla^{\mathcal G}:
\Gamma(\tau_E)\times \Gamma(\tau_E)\to \Gamma(\tau_E)$ associated
to the bundle metric ${\mathcal G}$ is defined in a similar way
than in (\ref{levi}) (see \cite{CorMar}). It is determined by the
formula:
\[
\begin{array}{rcl}
2 {\mathcal G}(\nabla_{\sigma}^{\mathcal G}\sigma', \sigma'') & = &
\rho_{E}(\sigma)({\mathcal G}(\sigma', \sigma'')) + \rho_{E}(\sigma')({\mathcal G}(\sigma, \sigma''))
- \rho_{E}(\sigma'')({\mathcal G}(\sigma, \sigma')) \\
&& +  {\mathcal G}(\sigma, \lcf \sigma'', \sigma'\rcf_{E}) + {\mathcal G}(\sigma',\lcf \sigma'',
\sigma\rcf_{E}) - {\mathcal G}(\sigma'', \lcf \sigma', \sigma\rcf_{E})
\end{array}
\]
for $\sigma, \sigma', \sigma'' \in \Gamma(\tau_E)$. The solutions
of the nonholonomic problem are the $\rho_E$-admissible curves
$\gamma: I\longrightarrow D$ such that (see \cite{CdLMM2007,CorMar})
\begin{equation}\label{noholonoma}
\nabla^{\mathcal G}_{\gamma(t)}(\gamma(t)) + \hbox{grad}_{\mathcal
G}V(\tau_D(\gamma(t)))\in D^{\perp}_{\tau_D(\gamma(t))}\; .
\end{equation}
Here, $\hbox{grad}_{\mathcal G}V$ is the section of $\tau_{E}: E
\rightarrow Q$ which is characterized by
\[
{\mathcal G}(\hbox{grad}_{\mathcal G}V, \sigma) =
\rho_{E}(\sigma)(V), \; \; \mbox{ for } \sigma \in
\Gamma(\tau_{E}).
\]

Now, we will derive the equations of motion (\ref{noholonoma}) using
the general procedure introduced in Section
\ref{subSecAlmosLeibniz}. First, we define on the vector bundle
$\tau_D: D\longrightarrow Q$ the following \emph{skew-symmetric
algebroid structure}:
\begin{equation}\label{alLieD}
\lcf\sigma, \sigma'\rcf_D= P(\lcf i_{D}\circ \sigma, i_{D}\circ \sigma')\rcf_E), \;
\; \; \rho_{D}(\sigma) = \rho_{E}(i_{D}\circ \sigma),
\end{equation}
for $\sigma, \sigma' \in \Gamma(\tau_{D})$.

This skew-symmetric algebroid induces a linear
almost-Poisson tensor field $\Pi_{D^*}$ on the dual bundle $D^*$. In
\cite{LeMaMa}, it is shown that this structure is also induced from
the linear Poisson bracket $\{\cdot , \cdot \}_{\Pi_{E^*}}$ on
$E^*$:
\begin{equation}\label{laPbDstar}
\{\varphi, \psi\}_{\Pi_{D^*}} = \{\varphi \circ i_{D}^*,\psi \circ
i_{D}^*\}_{\Pi_{E^*}} \circ P^*,
\end{equation}
for $\varphi, \psi \in C^{\infty}(D^*)$, where $i_{D}^*: E^* \to
D^*$ and $P^*: D^* \to E^*$ are the dual maps of the monomorphism
$i_{D}: D \to E$ and the projector $P: E \to D$, respectively.

Next, suppose that $(q^i)$ are local coordinates on an open subset
$U$ of $Q$ and that $\{\sigma_{\a}\} = \{\sigma_{a}, \sigma_{A}\}$
is a basis of sections of the vector bundle $\tau_{E}^{-1}(U) \to
U$ such that $\{\sigma_{a}\}$ (respectively, $\{\sigma_{A}\}$) is
an orthonormal basis of sections of the vector subbundle
$\tau_{D}^{-1}(U) \to U$ (respectively, $\tau_{D^{\perp}}^{-1}(U)
\to U$). We will denote by $(q^i, v^{\a}) = (q^i, v^{a}, v^A)$ the
corresponding local coordinates on $E$. Observe, now, that in this
coordinates the equations defining $D$ as a vector subbundle of
$E$ are:
\[
v^A = 0, \; \; \; \mbox{ for all } A.
\]

On the other hand, $\lcf\sigma_b,\sigma_c \rcf_D=C_{bc}^a\sigma_a$
and $\rho_D(\sigma_a) = (\rho_D)^i_a\frac{\partial}{\partial q^i} =
(\rho_E)^i_a\frac{\partial}{\partial q^i}$, where
$C_{\alpha\beta}^\gamma$ and $(\rho_E)^i_\a$ are the structure
functions of the Lie algebroid $\tau_E: E\longrightarrow Q$ with
respect to the local basis $\{\sigma_{\alpha}\}$. If we denote by
$(q^i, p_{a})$ the induced local coordinates on $D^*$, then $\{\; ,
\; \}_{\Pi_{D^*}}$ is determined by the following bracket relations:
\[
\{q^i\, , q^j\}_{\Pi_{D^*}}=0,\qquad \{q^i\, ,
p_a\}_{\Pi_{D^*}}=-\{p_a, q^i\}_{\Pi_{D^*}} = (\rho_D)^i_a, \qquad
 \{p_a\, , p_b\}_{\Pi_{D^*}}=-p_c C^c_{ab}\; .
 \]

 Now, denote by $\langle\; ,\rangle_{\mathcal G}: E^* \times_{Q} E^* \to \R$ the bundle metric
 on $E^*$ induced by ${\mathcal G}$.   Then, define the hamiltonian function $H_{E^*}:
 E^*\longrightarrow  \R$ (\emph{the Hamiltonian energy}) by:
 \[
 H_{E^*}(\kappa_q)=\frac{1}{2}\langle\kappa_q ,\kappa_q\rangle_{\mathcal G}+ V(q),\qquad \kappa_q\in E^*_q
 \]
 Next, we consider \emph{the constrained Hamiltonian function}
 $H_{D^*}$ on $D^*$ given by
 \begin{equation}\label{zxc}
 H_{D^*}=H_{E^*}\circ P^*: D^*\longrightarrow \R.
 \end{equation}

 Therefore,
  \begin{equation}\label{HhLambda-1}
\HH_{H_{D^*}}^{\Pi_{D^*}} = (\rho_D)^i_{a} \displaystyle
\frac{\partial H_{D^*}}{\partial p_{a}} \frac{\partial}{\partial
q^{i}} - ((\rho_D)^{i}_{b} \frac{\partial H_{D^*}}{\partial q^i} -
C_{ab}^{c}p_{c} \frac{\partial H_{D^*}}{\partial p_{a}})
\frac{\partial}{\partial p_{b}},
\end{equation}
and the  Hamilton equations are
\[
\displaystyle \frac{dq^i}{dt} = (\rho_D)^i_{a} \frac{\partial
H_{D^*}}{\partial p_{a}}, \; \; \; \frac{dp_{b}}{dt} = -
((\rho_D)^{i}_{b} \frac{\partial H_{D^*}}{\partial q^i} -
C_{ab}^{c}p_{c} \frac{\partial H_{D^*}}{\partial
p_{a}}).
\]

 In the induced local coordinates:
\[
H_{D^*}(q^i, p_a)=\frac{1}{2}\sum_{a=1}^m p_a^2+V(q^i),
\]
and the Hamilton equations are
 \begin{equation}\label{13'}
\displaystyle \frac{dq^i}{dt} = (\rho_D)^i_{a} p_a, \; \; \;
\frac{dp_{b}}{dt} = - \left((\rho_D)^{i}_{b} \frac{\partial
V}{\partial q^i} - C_{ab}^{c}p_{c} p_a\right).
\end{equation}

\begin{remark}{\rm
\emph{The Legendre transformation} associated with the Lagrangian
function $L$ is just the musical isomorphism $\flat_{\mathcal G}:
E \rightarrow E^*$ induced by the bundle metric ${\mathcal G}$
(for the definition of the Legendre transformation associated with
a Lagrangian function $L: E \rightarrow \mathbb{R}$, see
\cite{LMM}). \emph{The constrained Legendre transformation}
$\mbox{Leg}_{(L, D)}: D \rightarrow D^*$ associated with the
nonholonomic system $(L, D)$ is the vector bundle isomorphism
induced by the restriction of ${\mathcal G}$ to the vector
subbundle $\tau_{D}: D \rightarrow Q$ (see \cite{LeMaMa}). In
other words,
\[
\mbox{Leg}_{(L, D)} = i_{D}^* \circ \mbox{Leg}_{L} \circ i_{D} =
i_{D}^* \circ \flat_{\mathcal G} \circ i_{D}.
\]
Thus, it is easy to prove that
\[
\mbox{Leg}_{L}(q^i, v^{\alpha}) = (q^i, v^{\alpha}), \; \; \;
\mbox{Leg}_{(L, D)}(q^i, v^a) = (q^i, v^a).
\]
On the other hand, from (\ref{noholonoma}), it follows that a
curve
\[
\gamma: I \rightarrow D, \; \; \; \; \gamma(t) = (q^{i}(t),
v^{a}(t))
\]
is a solution of the nonholonomic problem if and only if
\[
\displaystyle \frac{dq^i}{dt} = (\rho_D)^{i}_{a}v^a, \; \; \;
\frac{dv^b}{dt} = - \left((\rho_D)^i_b \frac{\partial V}{\partial q^i} -
C_{ab}^c v^c v^a \right).
\]
Therefore, we deduce that a curve $\gamma: I \rightarrow D$ is a
solution of the nonholonomic problem if and only if the curve
$\mbox{Leg}_{(L, D)} \circ \gamma: I \rightarrow D^*$ is a
solution of the Hamilton equations (\ref{13'}).

Consequently, Eqs. (\ref{13'}) may be considered as the
\emph{nonholonomic Hamilton equations} for the constrained system
$(L, D)$.}
\end{remark}

\subsection{Third Example. Mixed cases:}
\ \

\noindent 2.3.1. {\bf Generalized nonholonomic systems on Lie
algebroids.}

\medskip

 In this section we will discuss Lagrangian systems on a Lie
algebroid $\tau_E: E \rightarrow Q$ subjected to generalized
nonholonomic constraints (see \cite{BaSo,c}).

As in the classical nonholonomic case, the \emph{kinematic
constraints} are described by a vector subbundle $\tau_D : D
\rightarrow Q$ of $\tau_E : E \rightarrow Q$. Therefore, to
determine the dynamics it is only necessary to fix a bundle of
reaction forces which vanishes on a vector subbundle
   $\tilde D$ of $E$. We have that, in general, $D \not=\tilde{D}$.
The case $D=\tilde{D}$, classical nonholonomic mechanics, was
studied in Subsection \ref{qaz}. The case $D\not= \tilde{D}$
appears in many interesting problems, for instance when the
restriction is realized by the action of a servo mechanism
\cite{m} or for rolling tyres as in \cite{c}.

We will call $\tau_ {\tilde D} : \tilde D \rightarrow Q$ the
\emph{variational subbundle}. It is important to note that $\tilde
D$ can not be deduced from the kinematic constraints, as the
virtual displacements are in classical nonholonomic mechanics.

Now, let ${\mathcal G}: E \times_{Q} E \rightarrow \mathbb{R}$ be
a bundle metric on the vector bundle $\tau_{E}: E \rightarrow Q$
and $V: Q \rightarrow \mathbb{R} \in C^{\infty}(Q)$. If
$\hbox{rank} D = \hbox{rank} \tilde{D}$ and $L: E \rightarrow
\mathbb{R}$ is the Lagrangian function of mechanical type given by
\[
L(e) = \displaystyle \frac{1}{2} {\mathcal G}(e, e) -
V(\tau_{E}(e)), \; \; \; \mbox{ for } e \in E,
\]
then the triple $(L, D, \tilde{D})$ will be called a
\emph{mechanical system subject to generalized linear nonholonomic
constraints} on $E$.

We will assume that $\tilde D$ satisfies the \emph{compatibility
condition}
\begin{equation}
E_q = D_q \oplus  \tilde{D}_q^{\perp}, \ \ \ \ \ \forall \, q \in Q,
\label{compCond}
\end{equation}
with $\tilde D_q^{\perp}$ the orthogonal complement of ${\tilde
D}_{q}$ in $E_{q}$ with respect to scalar product ${\mathcal
G}_{q}$. It is obvious that, in particular, this property holds in
the classical non-holonomic case $\tilde D = D$. In the general
case, we have that the equations of motion of such a system are
given by $\delta L_{\gamma(t)} \in \tilde
D^0_{\tau_D(\gamma(t))}$, for a $\rho_E$-admissible curve $\gamma
: I \rightarrow D$, or equivalently,
\begin{equation} \left\{ \begin{array}{l}
\displaystyle \frac{d}{dt}(\tau_{E} \circ \gamma) = \rho_{E}^{l}
\circ \gamma, \\ \vspace{-0.3cm} \\
 {\nabla}^{\mathcal G}_{\gamma(t)} \gamma(t)
+ grad_{\mathcal G}V(q(t)) \in \tilde D^{\perp}_{q(t)}, \\ \vspace{-0.3cm} \\
\gamma(t) \in D_{q(t)},
\end{array} \right. \label{gnhmotion}
\end{equation}
where $q = \tau_{D} \circ \gamma$.

Now, suppose that $(q^i)$ are local coordinates on an open subset
$U$ of $Q$ and that $\{\sigma_{\a}\} = \{\sigma_{a}, \sigma_{A}\}$
is a basis of sections of the vector bundle $\tau_{E}^{-1}(U) \to
U$ such that $\{\sigma_{a}\}$ (respectively, $\{\sigma_{A}\}$) is
an orthonormal basis of sections of the vector subbundle
$\tau_{D}^{-1}(U) \to U$ (respectively,
$\tau_{\tilde{D}^{\perp}}^{-1}(U) \to U$). In other words, we have
a local basis of sections adapted to the decomposition $E=D \oplus
\tilde D^{\perp}$. We will denote by $(q^i, v^{\a}) = (q^i, v^{a},
v^A)$ the corresponding local coordinates on $E$. The equations
defining $D$ as a vector subbundle of $E$ are:
\[
v^A = 0, \; \; \; \mbox{ for all } A.
\]
Thus, the equations of motion (\ref{gnhmotion}) are given by
\begin{equation} \left\{ \begin{array}{l} \dot v^c =  v^a v^b \widetilde C_{bc}^a -
(\rho^r_D)_c^i \frac{\partial V}{\partial q^i} \\
\dot q^i = (\rho^l_D)_a^i v^a.
\end{array} \right. \label{gnhcoord}
\end{equation} where the functions $\widetilde C_{ab}^c, (\rho^l_D)_b^i$ and $(\rho^r_D)_b^i$
are properly deduced in Appendix B.

Based on the compatibility condition, it seems natural to consider
some decompositions of the original vector bundle $E$. In
particular, we will use
\[
E=D\oplus D^{\perp} \hbox{  and  } E = \tilde{D} \oplus D^{\perp}
\]
with associated projectors:
 \begin{equation}P: D \oplus
D^{\perp} \rightarrow D \ \ \mbox{and} \ \ \Pi: \tilde{D} \oplus
D^{\perp} \rightarrow \tilde D \label{ProjP} \end{equation}
respectively, and the correspondent inclusions $i_D : D \rightarrow
E$ and $i_{\widetilde{D}} : \tilde{D} \rightarrow E$.

Next, denote by $(\lcf \cdot, \cdot \rcf_{E}, \rho_{E})$ the Lie
algebroid structure on the vector bundle $\tau_{E}: E \rightarrow
Q$. Then the bracket on $D$ given by
\begin{equation}
B_D(\sigma,\sigma'):= P(\lcf i_{D}(\sigma)\, , \,
i_{\widetilde{D}} \circ \Pi (\sigma')\rcf_E ), \; \; \mbox{ for }
\sigma, \sigma' \in \Gamma(\tau_{D})  \label{GNHbracket}
\end{equation} and the anchors maps
\begin{eqnarray}
\rho^l_{D} & := & \rho_E \circ i_D \label{GNHanchorl}\\
\rho^r_{D} & := & \rho_E \circ i_{\widetilde D} \circ \Pi
\label{GNHanchorr}
\end{eqnarray} define an algebroid structure $(B_D, \rho^l_{D} , \rho^r_{D})$
on the vector bundle $\tau_D : D \rightarrow Q$ with local
structure functions $\widetilde C_{ab}^c$, $(\rho^l_{D})_a^i$ and
$(\rho^r_{D})^i_a$ (see Appendix B). Note that, in general,
$\widetilde{C}_{ab}^c\not = - \widetilde{C}_{ba}^c$.

On the other hand, if we denote by $(q^i, p_{a})$ the
corresponding local coordinates on $D^*$ then the linear bracket
$\{\; , \; \}_{\Pi_{D^*}}$ on $D^*$ is determined by the following
relations:
\[
\{q^i\, , q^j\}_{\Pi_{D^*}}=0,\qquad \{q^i\, , p_a\}_{\Pi_{D^*}}=
(\rho^r_{D})^i_a, \qquad \{p_a\, ,
q^i\}_{\Pi_{D^*}}=-(\rho^l_{D})^i_a\qquad  \{p_a\, ,
p_b\}_{\Pi_{D^*}}=-p_c \widetilde{C}^c_{ab}\; .
 \]

Now, we take the Hamiltonian function $H_{E^*}: E^* \rightarrow
\mathbb{R}$ (the Hamiltonian energy) defined by
\[
H_{E^*}(\kappa_q) = \displaystyle \frac{1}{2} <\kappa_q, \kappa_q
>_{\mathcal G} + V(q), \; \; \; \mbox{ for } \kappa_{q} \in E^*_q
\]
where $< \cdot, \cdot >_{\mathcal G}$ is the bundle metric on
$\tau_{E^*}: E^* \rightarrow \mathbb{R}$ induced by ${\mathcal
G}$. Then, \emph{the constrained Hamiltonian function} $H_{D^*}:
D^* \rightarrow \mathbb{R}$ is given by
\[
H_{D^*} = H_{E^*} \circ P^*.
\]
Thus,
\begin{equation} \HH_{H_{D^*}}^{\Pi_{D^*}} = {(\rho_D^l)}_{a}^i
\frac{\partial H_{D^*}}{\partial p_{a}}\frac{\partial}{\partial
q^i}  - \left( {(\rho_D^r)}_{b}^j \frac{\partial H_{D^*}}{\partial
q^j} - \tilde{C}_{ab}^c p_{c} \frac{\partial H_{D^*}}{\partial
p_{a}}  \right) \frac{\partial}{\partial p_{b}}, \label{LocHvf-2}
\end{equation} which implies that the local expression of the
Hamilton equations is
$$\frac{dq^i}{dt} = {(\rho_D^l)}_{a}^i \frac{\partial
H_{D^*}}{\partial p_{a}}, \ \ \ \ \ \frac{dp_{b}}{dt} = - \left\{
{(\rho_D^r)}_{b}^j \frac{\partial H_{D^*}}{\partial q^j} -
\tilde{C}^c_{ab} p_{c} \frac{\partial H_{D^*}}{\partial p_{a}}
\right\}.$$

 In the induced local coordinates:
\[
H_{D^*}(q^i, p_a)=\frac{1}{2}\sum_{a=1}^m p_a^2+V(q^i),
\]
and the Hamilton equations are
 \begin{equation}\label{21'}
\displaystyle \frac{dq^i}{dt} = \sum_{a=1}^m(\rho_D^l)_a^i p_a, \;
\; \; \frac{dp_{b}}{dt} = - \left((\rho_D^r)^i_{a} \frac{\partial
V}{\partial q^i} - \sum_{b=1}^m\tilde{C}_{ab}^{c}p_{c} p_a\right).
\end{equation}

\begin{remark}{\rm
Note that if $\hbox{Leg}_{(L, D)}: D \rightarrow D^*$ is the
constrained Legendre transformation, that is,
\[
\hbox{Leg}_{(L, D)} = i_{D}^* \circ \flat_{\mathcal G} \circ i_{D}
\]
then, as the nonholonomic case (see Section \ref{qaz}), we deduce
that a curve $\gamma: I \rightarrow D$ is a solution of the motion
equations (\ref{gnhcoord}) if and only if the curve
$\hbox{Leg}_{(L, D)} \circ \gamma: I \rightarrow D^*$ is a
solution of the Hamilton equations (\ref{21'}). Thus, Eqs.
(\ref{21'}) may be considered as \emph{the generalized
nonholonomic Hamilton equations} for the generalized linear
nonholonomic system $(L, D, \tilde{D})$. }
\end{remark}

\ \

\noindent 2.3.2. {\bf Lagrangian mechanics for modifications of the
standard Lie bracket.}

\medskip

Now, we analyze another example that is of our interest since it
has a non skew-symmetric bracket. Consider the case of the
standard tangent bundle $\tau_{TQ}: TQ \to Q$, a Lagrangian $L$ of
mechanical type
\[
L(v)= \displaystyle \frac{1}{2} {\mathcal G}(v, v) -
V(\tau_{TQ}(v)), \; \; \; \mbox{ for } v\in TQ,
\]
and an arbitrary $(1,2)$-tensor field $T$:
\[
T: {\mathfrak X}(Q)\times {\mathfrak X}(Q)\to {\mathfrak X}(Q)
\]
It is easy to show that if we modify the standard Lie bracket
$[\cdot, \cdot]$ on ${\mathfrak X}(Q)$ by
\[
B_{TQ}(X, Y)=[X, Y]+ T(X,Y)
\]
then $(TQ,B_{TQ}, id_{TQ}, id_{TQ})$ is an algebroid.

If we take local coordinates $(q^i)$ and
\[
T(\frac{\partial}{\partial q^i}, \frac{\partial}{\partial
q^j})=T_{ij}^k \frac{\partial}{\partial q^k}
\]
then
\[
B_{TQ}(\frac{\partial}{\partial q^i}, \frac{\partial}{\partial
q^j})=T_{ij}^k \frac{\partial}{\partial q^k}
\]
and, therefore, $(B_{TQ})_{ij}^k=T_{ij}^k$. Thus, the linear bracket
$\{\cdot, \cdot\}_{\Pi_{T^*Q}}$ on $T^*Q$ is characterized by the
following relations
\[
\{q^i , q^j\}_{\Pi_{T^*Q}}=0,\qquad \{q^i,
p_j\}_{\Pi_{T^*Q}}=-\{p_i , q^j\}_{\Pi_{T^*Q}}=\delta_i^j, \qquad
 \{p_i , p_j\}_{\Pi_{D^*}}=-p_k T^k_{ij}\; .
 \]

In this case, the Hamiltonian function $H_{T^*Q}: T^*Q \rightarrow
\mathbb{R}$ is given by
\[
H_{T^*Q}(\kappa) = \displaystyle \frac{1}{2} <\kappa, \kappa
>_{\mathcal G} + V(\tau_{T^*Q}(\kappa)), \; \; \; \mbox{ for }
\kappa \in T^*Q.
\]
Thus, if ${\mathcal G} = {\mathcal G}_{ij} dq^i \otimes dq^j$ it
follows that
\[
H_{T^*Q}(q^i, p_i)=\frac{1}{2}{\mathcal G}^{ij}(q)p_ip_j+V(q)
\]
and then the associated Hamilton equations are
 \begin{equation}\label{H-modif}
\displaystyle \frac{dq^i}{dt} = {\mathcal G}^{ik}p_k, \; \; \;
\frac{dp_{j}}{dt} = - \frac{1}{2}\frac{\partial {\mathcal
G}^{ik}}{\partial q^j}p_ip_k-\frac{\partial V}{\partial q^j} -
T_{ij}^k{\mathcal G}^{il}p_kp_l.
\end{equation}
Note that $\{ H_{T^*Q}, H_{T^*Q}\}_{\Pi_{T^*Q}}=-T^k_{ij}{\mathcal
G}^{il}{\mathcal G}^{jm}p_kp_lp_m$ and the dynamics has in general a
dissipative behavior. An interesting case, is when  the tensor field
$T$ is skew-symmetric $(T(X,Y)=-T(Y,X)$ for all $X, Y\in {\mathfrak
X}(Q)$) then the hamiltonian $H_{T^*Q}$ is preserved,
$dH_{T^*Q}/dt=0$ along the flow. An important example, when this
condition is fulfilled, is the following one. Consider, as above, a
riemannian manifold $(Q, {\mathcal G})$ and an arbitrary affine
connection $\nabla$. Take the $(1,2)$ tensor field $S$ which encodes
the difference between it and the Levi-Civita connection
corresponding to the riemannian metric, that is,
 \[
 \nabla_X Y=\nabla^{\mathcal G}_XY+S(X,Y).
\]
This tensor field is called the \emph{contorsion tensor field}
(see \cite{Schou}; and also \cite{CCMD,Cort}).

Now, consider as $T(X,Y)=S(X,Y)-S(Y,X)$, and the  bracket of vector
fields:
\begin{equation}\label{nabla-modif}
B_{TQ}(X, Y)=\nabla_X Y-\nabla_Y X=[X, Y]+ S(X,Y)-S(Y,X).
\end{equation}
We obtain (\ref{H-modif}) but now the flow preserves the
Hamiltonian function. Equations (\ref{H-modif}) are important in
the modellization of generalized Chaplygin systems
(\cite{CCMD,Cort} and references therein), where now the
connection $\nabla$ is a metric connection with torsion.

\bigskip

\section{Exact symplectic algebroids and Hamiltonian vector fields}\label{section-exact}
\setcounter{equation}{0}

 In this section we will introduce the
notion of an {\it exact symplectic algebroid} and we will prove that
for such an algebroid $\tau_{{\bar E}}: {\bar E} \rightarrow {\bar Q}$, if $F$ is a real
$C^{\infty}$-function on the base manifold ${\bar Q}$, then $F$ induces a
Hamiltonian vector field on ${\bar Q}$.

First, we will give the definitions of the two differentials of a
real $C^{\infty}$-function $F$ on the base manifold ${\bar Q}$ of an
arbitrary vector bundle $\tau_{\bar E} : {\bar E} \rightarrow {\bar Q}$ with an algebroid
structure $(B_{\bar E}, \rho^l_{\bar E}, \rho^r_{\bar E})$. We will also give the
definition of the differential of a section of the vector bundle
$\tau_{{\bar E}^*} : {\bar E}^* \rightarrow {\bar Q}$.

In fact, the \emph{left differential $d^l_{\bar E}F$ of $F$} is given by
\begin{equation}\label{23'}
(d_{\bar E}^lF)(\sigma) = \rho_{\bar E}^l(\sigma)(F),
\end{equation}
and the \emph{right differential} $d^r_{\bar E}F$ of $F$ is
\begin{equation}\label{23''}
(d_{\bar E}^rF)(\sigma) = \rho_{\bar E}^r(\sigma)(F),
\end{equation}
for $\sigma \in \Gamma(\tau_{\bar E}).$

On the other hand, if $\kappa : {\bar Q} \rightarrow {\bar E}^*$ is a section of
the dual vector bundle $\tau_{{\bar E}^*} : {\bar E}^* \rightarrow {\bar Q}$ then
\emph{the differential of $\kappa$} is the section $d^{lr}_{\bar E}
\kappa$ of the vector bundle $\tau_{\otimes_2^0 {\bar E}^*} : \otimes_2^0
{\bar E}^* \rightarrow {\bar Q}$ defined by \begin{equation} (d^{lr}_{\bar E}
\kappa)(\sigma, \sigma') = \rho_{\bar E}^l(\sigma)(\kappa (\sigma')) -
\rho_{\bar E}^r(\sigma')(\kappa(\sigma)) - \kappa (B_{\bar E}(\sigma, \sigma'))
\label{Difflr} \end{equation} for $ \sigma, \sigma' \in
\Gamma(\tau_{\bar E})$. Note that, from (\ref{Eqfun}), it follows that
$d^{lr}_{\bar E} \kappa \in \Gamma(\tau_{\otimes_2^0 {\bar E}^*}).$

If $(q^i)$ are local coordinates on ${\bar Q}$, $\{\sigma_{\a} \}$ is a
local basis of $\Gamma(\tau_{\bar E})$, $\{\sigma^{\a} \}$ is the dual
basis of $\Gamma(\tau_{{\bar E}^*})$ and $\kappa = \kappa_{\gamma} \sigma^{\gamma}$
then
$$d_{\bar E}^lF = {(\rho_{\bar E}^l)}_{\a}^i \frac{\partial F}{\partial q^i} \sigma^{\a}, \ \ \ \ \
d_{\bar E}^rF = {(\rho_{\bar E}^r)}_{\a}^i \frac{\partial F}{\partial q^i} \sigma^{\a} $$
$$d_{\bar E}^{lr} \kappa = \left\{ {(\rho_{\bar E}^l)}_{\beta}^i \frac{\partial \kappa _{\gamma}}{\partial q^i}
 -  {(\rho_{\bar E}^r)}_{\gamma}^i \frac{\partial \kappa_{\beta}}{\partial q^i}
  +  {(B_{\bar E})}_{\beta \gamma}^{\mu} \kappa_{\mu}  \right\} \sigma^{\beta} \otimes \sigma^{\gamma},$$
where ${(B_{\bar E})}_{\beta \gamma}^{\mu}, {(\rho_{\bar E}^l)}_{\beta}^i$ and
${(\rho_{\bar E}^r)}_{\gamma}^i$ are the local structure functions of
$\tau_{\bar E}: {\bar E} \rightarrow {\bar Q}$ with respect to the local coordinates
$(q^i)$ and the local basis $\{\sigma_{\a} \}$. Furthermore, if $F
\in C^{\infty}({\bar Q})$ and $\kappa \in \Gamma(\tau_{{\bar E}^*})$, we have that
$$d^{lr}_{\bar E}(F \kappa) = F d^{lr}_{\bar E} \kappa + d_{\bar E}^l F \otimes \kappa - \kappa \otimes
d^r_{\bar E} F.$$

\begin{definition} A vector bundle $\tau_{\bar E} : \bar E \rightarrow \bar Q$
with an algebroid structure $(B_{\bar{E}}, \rho^l_{\bar{E}},
\rho^r_{\bar{E}})$ is said to be exact  symplectic if there exists
$\lambda_{\bar{E}} \in \Gamma(\tau_{\bar{E}^*})$ such that the
tensor of type $(0,2)$ \ $\Omega_{\bar E}$ defined by
$\Omega_{\bar{E}} = - d^{lr}_{{\bar{E}}} \lambda_{{\bar{E}}}\in
\Gamma(\tau_{\otimes_2^0 ({\bar{E}})^*})$ is skew-symmetric and
nondegenerate.
\end{definition}

Thus, if $\tau_{{\bar{E}}} : {\bar{E}} \rightarrow {\bar{Q}}$ is an
exact symplectic algebroid then the map
$\flat^{l}_{\Omega_{{\bar{E}}}} : \Gamma(\tau_{\bar{E}}) \rightarrow
\Gamma(\tau_{\bar{E}^*})$ given by
$$ \flat^l_{\Omega_{{\bar{E}}}}(X) = i_X \Omega_{{\bar{E}}}=X\lrcorner\, \Omega_{{\bar{E}}},
\ \ \ \mbox{for } X \in \Gamma(\tau_{\bar{E}})$$
 is an isomorphism of $C^{\infty}({\bar{Q}})$-modules. Therefore, for a real $C^{\infty}$-function
 ${\bar{H}}$ on ${\bar{Q}}$ (\emph{a Hamiltonian function}) we may consider the
 \emph{right Hamiltonian section}
 $\HH^{({\Omega_{{\bar{E}}},r})}_{{\bar{H}}}$ of ${\bar{H}}$ defined by
\begin{equation}\label{24'}
 \HH^{({\Omega_{{\bar{E}}},r})}_{{\bar{H}}} = (\flat^l_{\Omega_{{\bar{E}}}})^{-1}(d^r_{{\bar{E}}} {\bar{H}})\in
 \Gamma(\tau_{\bar{E}})
 \end{equation}
 and the \emph{left-right Hamiltonian vector field} $\HH^{({\Omega_{{\bar{E}}},lr})}_{{\bar{H}}}$ of ${\bar{H}}$ on $\bar{Q}$ given by
  $$\HH^{({\Omega_{{\bar{E}}},lr})}_{{\bar{H}}} = \rho_{\bar E}^l \left( \HH^{({\Omega_{{\bar{E}}},r})}_{{\bar{H}}} \right)\in {\mathfrak X}(\bar{Q}).$$

The integral curves of
$\HH^{({\Omega_{{\bar{E}}},lr})}_{{\bar{H}}}$ are called the
solutions of the \emph{Hamilton equations} for ${\bar{H}}$.

\bigskip

\section{An exact symplectic formulation of the Hamiltonian dynamics on an
algebroid}\label{section-main} \setcounter{equation}{0}

 In this
section, we will propose an exact symplectic formulation of the
Hamiltonian dynamics on an algebroid. First, we will see how the
bracket of sections of an arbitrary algebroid structure $(B_E,
\rho_E^l, \rho_E^r)$ on a vector bundle $\tau_{E}: E \to Q$ may be
written in terms of a suitable $\rho_E^l$-connection and a
$\rho_E^r$-connection (for the definition of a
$\rho_{E}^l$-connection and a $\rho_{E}^r$-connection, see Appendix
A). More precisely, we will prove the following result:

\begin{prop} \label{PropBracket}

\begin{enumerate}

\item Let $\tau_E : E \rightarrow Q$ be a vector bundle and $\rho^l_E : E \rightarrow TQ$,
$\rho^r_E : E \rightarrow TQ$ two anchor maps. If $D^l :
\Gamma(\tau_E)  \times \Gamma(\tau_E) \rightarrow \Gamma(\tau_E)$
(respectively, $D^r: \Gamma(\tau_E) \times \Gamma(\tau_E)
\rightarrow \Gamma(\tau_E)$) is a $\rho^l_E$-connection
(respectively, $\rho_E^r$-connection) on $\tau_E : E \rightarrow Q$
then $(B_E, \rho_E^l, \rho_E^r)$ is an algebroid structure on
$\tau_E : E \rightarrow Q$, where $B_E: \Gamma(\tau_E) \times
\Gamma(\tau_E) \rightarrow \Gamma(\tau_E)$ is the bracket defined by
    \begin{equation} B_E(\sigma, \sigma') = D^l_{\sigma} \sigma' - D^r_{\sigma'} \sigma \ \ \
    \mbox{for} \  \sigma, \sigma' \in \Gamma(\tau_E). \label{BE} \end{equation}

\item Let $(B_E, \rho_E^l, \rho_E^r)$ be an algebroid structure on a real vector
bundle $\tau_E : E \rightarrow Q$. Then, there exists a $\rho^l_E$-connection $D^l$ (respectively,
$\rho_E^r$-connection $D^r$) such that $$B_E(\sigma, \sigma') = D^l_{\sigma} \sigma'
- D^r_{\sigma'} \sigma \ \ \ \mbox{for} \  \sigma, \sigma' \in \Gamma(\tau_E).$$
\end{enumerate}
\end{prop}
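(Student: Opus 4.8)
The plan is to prove the two parts separately, exploiting the fact that the Leibniz-type identity \eqref{Eqfun} for an algebroid bracket decouples into contributions from the left anchor and the right anchor. For part (1), I would simply take $B_E$ defined by \eqref{BE} and verify \eqref{Eqfun} directly. Recall (Appendix A) that a $\rho_E^l$-connection $D^l$ satisfies $D^l_{f\sigma}\sigma' = f D^l_\sigma\sigma'$ and $D^l_\sigma(f'\sigma') = f' D^l_\sigma\sigma' + \rho_E^l(\sigma)(f')\,\sigma'$, and similarly for $D^r$ with $\rho_E^r$. Substituting $f\sigma$ for $\sigma$ and $f'\sigma'$ for $\sigma'$ into $B_E(\sigma,\sigma') = D^l_\sigma\sigma' - D^r_{\sigma'}\sigma$, the tensoriality in the first slot of $D^l$ and in the first slot of $D^r$ (which here is the $\sigma'$ slot) produces the factors $ff'$ in front of $D^l_\sigma\sigma'$ and $D^r_{\sigma'}\sigma$, while the Leibniz rule in the remaining slots produces exactly the two derivative terms $f\rho_E^l(\sigma)(f')\,\sigma'$ and $-f'\rho_E^r(\sigma')(f)\,\sigma$. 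This matches \eqref{Eqfun} term by term; since $\rho_E^l,\rho_E^r$ are already vector bundle morphisms by hypothesis, $(B_E,\rho_E^l,\rho_E^r)$ is an algebroid structure. This part is a routine computation with no obstacle.

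For part (2), the task is the converse: given $(B_E,\rho_E^l,\rho_E^r)$, manufacture connections $D^l$ and $D^r$ realizing \eqref{BE}. The natural approach is to first fix an auxiliary genuine linear connection $\nabla$ on $\tau_E: E\to Q$ (which exists, e.g.\ via a partition of unity), and then correct it. I would define
\begin{equation}
D^l_\sigma\sigma' = \nabla_{\rho_E^l(\sigma)}\sigma' + \tfrac{1}{2}B_E(\sigma,\sigma') - \tfrac{1}{2}\bigl(\nabla_{\rho_E^l(\sigma)}\sigma' - \nabla_{\rho_E^r(\sigma')}\sigma\bigr),\nonumber
\end{equation}
and symmetrically
\begin{equation}
D^r_{\sigma'}\sigma = \nabla_{\rho_E^r(\sigma')}\sigma - \tfrac{1}{2}B_E(\sigma,\sigma') - \tfrac{1}{2}\bigl(\nabla_{\rho_E^l(\sigma)}\sigma' - \nabla_{\rho_E^r(\sigma')}\sigma\bigr),\nonumber
\end{equation}
so that by construction $D^l_\sigma\sigma' - D^r_{\sigma'}\sigma = B_E(\sigma,\sigma')$. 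The content of the proof is then to check that $D^l$ is genuinely a $\rho_E^l$-connection and $D^r$ a $\rho_E^r$-connection, i.e.\ that each is $C^\infty(Q)$-linear in the appropriate slot and Leibniz in the other. Here one uses \eqref{Eqfun} to control $B_E(f\sigma, f'\sigma')$ and compares with the behaviour of $\nabla_{\rho_E^l(f\sigma)}(f'\sigma')$; the cross-terms are arranged precisely so that what is subtracted off cancels the non-tensorial pieces. A cleaner variant, which I would actually prefer, is to split the defect more asymmetrically: set $D^l_\sigma\sigma' := \nabla_{\rho_E^l(\sigma)}\sigma'$ and then define $D^r_{\sigma'}\sigma := D^l_\sigma\sigma' - B_E(\sigma,\sigma') = \nabla_{\rho_E^l(\sigma)}\sigma' - B_E(\sigma,\sigma')$; one must then verify that this $D^r$ is a $\rho_E^r$-connection, which is where \eqref{Eqfun} enters in an essential way.

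The main obstacle is in part (2): one must confirm that the candidate $D^r$ (or $D^l$) has the correct tensoriality and Leibniz behaviour \emph{with respect to the anchor it is supposed to match}. Concretely, in the variant above one computes $D^r_{f'\sigma'}(f\sigma) = \nabla_{\rho_E^l(f\sigma)}(f'\sigma') - B_E(f\sigma, f'\sigma')$ and must show this equals $f' D^r_{\sigma'}\sigma + \rho_E^r(\sigma')(f)\,f'\sigma$ — i.e.\ tensorial in $\sigma'$ and Leibniz in $\sigma$ with the \emph{right} anchor. Expanding $\nabla_{\rho_E^l(f\sigma)}(f'\sigma') = ff'\nabla_{\rho_E^l(\sigma)}\sigma' + f\rho_E^l(\sigma)(f')\,\sigma'$ and subtracting \eqref{Eqfun}, the $\rho_E^l(\sigma)(f')$ terms cancel identically, the $ff'$ terms assemble into $f' D^r_{\sigma'}\sigma$, and the surviving term is exactly $f'\rho_E^r(\sigma')(f)\,\sigma$, as required. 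Since this is a purely local algebraic cancellation driven by \eqref{Eqfun}, there is no genuine difficulty once the bookkeeping is done carefully; the only subtlety is choosing the split so that the asymmetry of \eqref{Eqfun} (the $\rho_E^l$ acting on one argument, $\rho_E^r$ on the other) is respected, which is why one connection is built from $\nabla\circ\rho_E^l$ and the other is then forced.
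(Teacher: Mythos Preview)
Your proposal is correct and follows essentially the same route as the paper. For part (1) both you and the paper simply invoke the connection axioms to verify \eqref{Eqfun}. For part (2) your ``cleaner variant'' is exactly the paper's argument: the paper picks an arbitrary $\rho_E^l$-connection $\tilde D^l$ and $\rho_E^r$-connection $\tilde D^r$, observes that the defect $T(\sigma,\sigma')=B_E(\sigma,\sigma')-\tilde D^l_\sigma\sigma'+\tilde D^r_{\sigma'}\sigma$ is $C^\infty(Q)$-bilinear (hence a section of $\otimes^0_2E^*$), and then sets $D^l=\tilde D^l$, $D^r_{\sigma'}\sigma=\tilde D^r_{\sigma'}\sigma-T(\sigma,\sigma')$; since correcting a connection by a tensor yields a connection, no further verification is needed. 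Your explicit check that $D^r$ is a $\rho_E^r$-connection is the same content unpacked, and your choice $\tilde D^l=\nabla_{\rho_E^l(\cdot)}$ is just a concrete instance of the paper's ``arbitrary $\rho_E^l$-connection'' (cf.\ Remark~\ref{remRhoconn}).
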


\begin{proof}
\ \
\begin{enumerate} \item From (\ref{Condcon}) (in Appendix A) and (\ref{BE}), it follows
that $(B_E, \rho_E^l, \rho_E^r)$ is an algebroid structure on the
vector bundle $\tau_E :E \rightarrow Q$.
\item Suppose that $\tilde D^l : \Gamma(\tau_E)  \times \Gamma(\tau_E) \rightarrow \Gamma(\tau_E)$
(respectively, $\tilde D^r: \Gamma(\tau_E) \times \Gamma(\tau_E)
\rightarrow \Gamma(\tau_E)$) is an arbitrary $\rho^l_E$-connection
(respectively, $\rho_E^r$-connection) on $\tau_E : E \rightarrow
Q$. Note that the vector bundle $\tau_{E}: E \rightarrow Q$ admits
a $\rho_{E}^{l}$-connection and a $\rho_{E}^{r}$-connection (see
Remark \ref{remRhoconn} in Appendix A). Then, we may introduce the
map $T: \Gamma(\tau_E)  \times \Gamma(\tau_E) \rightarrow
\Gamma(\tau_E)$ given by $$T(\sigma, \sigma') = B_E(\sigma,
\sigma') - \tilde D^l_{\sigma} \sigma' + \tilde D^r_{\sigma'}
\sigma \ \ \ \mbox{for} \  \sigma, \sigma' \in \Gamma(\tau_E).$$

   It is easy to check that $T$ is a section of the vector bundle $\tau_{\otimes_2^0 E^*} : \otimes_2^0 E^* \rightarrow Q$. Thus,
$$B_E(\sigma, \sigma') = D^l_{\sigma} \sigma' - D^r_{\sigma'} \sigma \ \ \ \mbox{for} \  \sigma, \sigma' \in \Gamma(\tau_E),$$ where $D^l$ and $D^r$ are the $\rho^l_E$-connection and the $\rho_E^r$-connection, respectively,
defined by
$$ D^l_{\sigma} \sigma' = \tilde D^l_{\sigma} \sigma' \ \ \  D^r_{\sigma'} \sigma = \tilde D^r_{\sigma'} \sigma - T(\sigma, \sigma').$$
\end{enumerate}
\end{proof}

Let $\tau_E : E \rightarrow Q$ be an algebroid with structure $(B_E,
\rho^l_E, \rho^r_E),$ $H:E^* \rightarrow \R$ be a Hamiltonian
function on $E^*$ and $\HH^{\Pi_{E^*}}_H$ be the corresponding
Hamiltonian vector field on $E^*$ (see section
\ref{subSecAlmosLeibniz}).

Let us, also, consider the vector bundle $\tau_{\TT_l^EE^*} :
\TT_l^EE^* \rightarrow E^*$ over $E^*$ whose fiber at the point
$\kappa_q \in E^*_q$ is $$(\TT_l^EE^*)_{\kappa_q} = \{ (\sigma_q, \tilde
X_{\kappa_q}) \in E_q \times T_{\kappa_q}E^* \ / \ \rho^l_E(\sigma_q) =
(T_{\kappa_q} \tau_{E^*})(\tilde X_{\kappa_q}) \}.$$

$\TT_l^EE^*$ is called the \emph{the left $E$-tangent bundle to
$E^*$} (see \cite{LMM,Medina} and references therein).

Now, we will prove the two main results of our paper.

\begin{theo} \label{TeoExact}
The vector bundle $\tau_{\TT_l^EE^*} : \TT_l^EE^* \rightarrow E^*$
admits an exact symplectic algebroid structure.
\end{theo}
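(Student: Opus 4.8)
The plan is to exhibit an explicit exact symplectic algebroid structure on $\tau_{\TT_l^E E^*} : \TT_l^E E^* \to E^*$ by mimicking, in the algebroid setting, the classical construction of the canonical $1$-form $\lambda_Q$ on $T^*Q$ and of $\omega_Q = -d\lambda_Q$. First I would equip $\TT_l^E E^*$ with an algebroid structure: a bracket $B_{\TT_l^E E^*}$ and left/right anchors $\rho^l,\rho^r : \TT_l^E E^* \to T E^*$. The natural choice for the anchors is the projection onto the second factor, $\rho^l(\sigma_q,\tilde X_{\kappa_q}) = \rho^r(\sigma_q,\tilde X_{\kappa_q}) = \tilde X_{\kappa_q}$ (as in the Lie-algebroid prolongation $\TT^E E^*$). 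For the bracket, I would use Proposition~\ref{PropBracket}: pick a $\rho_E^l$-connection $D^l$ and a $\rho_E^r$-connection $D^r$ on $\tau_E : E \to Q$ realizing $B_E(\sigma,\sigma') = D^l_\sigma\sigma' - D^r_{\sigma'}\sigma$, lift these to connections on the pullback-type bundle $\TT_l^E E^*$ over $E^*$, and define the bracket on $\TT_l^E E^*$ by the same formula $B = \bar D^l - \bar D^r$. By part (1) of Proposition~\ref{PropBracket} this automatically yields an algebroid structure, so no Leibniz identity needs to be checked by hand.

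Next I would define the Liouville section $\lambda_{\TT_l^E E^*} \in \Gamma(\tau_{(\TT_l^E E^*)^*})$ by the tautological formula
\begin{equation*}
\lambda_{\TT_l^E E^*}(\kappa_q)(\sigma_q,\tilde X_{\kappa_q}) = \kappa_q(\sigma_q),
\end{equation*}
the direct analogue of $\lambda_Q(\alpha_q)(v) = \alpha_q(T\pi(v))$. Then set $\Omega_{\TT_l^E E^*} := -\,d^{lr}_{\TT_l^E E^*}\lambda_{\TT_l^E E^*}$, which by the remark following \eqref{Difflr} is automatically a section of $\tau_{\otimes^0_2(\TT_l^E E^*)^*}$. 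It remains to verify the two defining conditions of an exact symplectic algebroid: that $\Omega_{\TT_l^E E^*}$ is skew-symmetric and nondegenerate. The cleanest way is to compute everything in adapted local coordinates: starting from local coordinates $(q^i)$ on $Q$, a local basis $\{\sigma_\alpha\}$ of $\Gamma(\tau_E)$ with structure functions $(B_E)_{\alpha\beta}^\gamma,(\rho_E^l)_\alpha^i,(\rho_E^r)_\alpha^i$, and the induced coordinates $(q^i,p_\alpha)$ on $E^*$, one builds the standard local basis $\{\mathcal X_\alpha,\mathcal P_\alpha\}$ of $\Gamma(\tau_{\TT_l^E E^*})$ (the analogue of the basis used for $\TT^E E^*$ in \cite{LMM,Medina}), where $\mathcal X_\alpha$ covers $(\sigma_\alpha,(\rho_E^l)^i_\alpha\partial/\partial q^i)$ and $\mathcal P_\alpha$ covers $(0,\partial/\partial p_\alpha)$. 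With respect to the dual basis $\{\mathcal X^\alpha,\mathcal P^\alpha\}$ one expects
\begin{equation*}
\lambda_{\TT_l^E E^*} = p_\alpha\,\mathcal X^\alpha,
\end{equation*}
and then a direct application of the coordinate formula for $d^{lr}_{\bar E}\kappa$ (using the structure functions of $\TT_l^E E^*$, which are inherited from those of $E$ on the $q$-dependent part and vanish on the $p$-directions) should give $\Omega_{\TT_l^E E^*} = \mathcal X^\alpha \otimes \mathcal P^\alpha - \mathcal P^\alpha \otimes \mathcal X^\alpha + (B_E)_{\alpha\beta}^\gamma p_\gamma\,\mathcal X^\alpha\otimes\mathcal X^\beta$, exactly the algebroid analogue of $\omega_Q = dq^i\wedge dp_i$ corrected by the structure functions. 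Skew-symmetry is then visible term by term (the $\mathcal X\otimes\mathcal X$ block is skew because $(B_E)_{\alpha\beta}^\gamma p_\gamma$ gets antisymmetrized once one also adds the $d^{lr}$ correction coming from the non-skew part of $B_E$), and nondegeneracy follows because the matrix of $\flat^l_{\Omega}$ in these bases is block triangular with invertible off-diagonal identity blocks.

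The main obstacle I anticipate is twofold. First, one must check that $\Omega_{\TT_l^E E^*}$ is genuinely skew-symmetric even though $B_E$ need not be: this is where the precise interplay between $d^{lr}$ (which is itself not symmetric under $\sigma\leftrightarrow\sigma'$ because $\rho_E^l\neq\rho_E^r$) and the two-anchor structure of $\TT_l^E E^*$ must be used, and it forces the anchors on $\TT_l^E E^*$ to be chosen equal rather than inheriting a left/right asymmetry — so part of the work is making the correct structural choice so that the tautological $\lambda$ produces a skew $\Omega$. Second, nondegeneracy must be checked as a $C^\infty(E^*)$-module isomorphism $\flat^l_\Omega : \Gamma(\tau_{\TT_l^E E^*}) \to \Gamma(\tau_{(\TT_l^E E^*)^*})$, which amounts to showing the local matrix above is pointwise invertible; this is routine once the local expression is in hand. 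I would therefore organize the proof as: (i) define the algebroid structure via lifted connections and Proposition~\ref{PropBracket}; (ii) define $\lambda_{\TT_l^E E^*}$ tautologically; (iii) compute $\lambda_{\TT_l^E E^*}$ and $\Omega_{\TT_l^E E^*} = -d^{lr}\lambda_{\TT_l^E E^*}$ in the adapted basis; (iv) read off skew-symmetry and nondegeneracy from the local expression.
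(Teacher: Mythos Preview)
Your overall plan---lift connections via Proposition~\ref{PropBracket}, define the tautological Liouville section, compute $\Omega=-d^{lr}\lambda$ in an adapted basis, and read off skew-symmetry and nondegeneracy---is exactly the paper's strategy. The one place where your proposal goes wrong is the choice of anchors on $\TT_l^E E^*$, and this is not a minor detail: it is the step that makes or breaks skew-symmetry.

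You propose $\rho^l_{\TT_l^EE^*}=\rho^r_{\TT_l^EE^*}=$ projection onto the second factor, and you argue that equal anchors are ``forced'' by the requirement that $\Omega$ be skew. The opposite is true. With equal anchors, formula~(\ref{Difflr}) gives
\[
(d^{lr}\lambda)(Z,Z') \;=\; \rho(Z)\bigl(\lambda(Z')\bigr)-\rho(Z')\bigl(\lambda(Z)\bigr)-\lambda\bigl(B_{\TT_l^EE^*}(Z,Z')\bigr),
\]
whose first two terms are already skew, so skew-symmetry of $\Omega$ reduces to skew-symmetry of $\lambda\circ B_{\TT_l^EE^*}$. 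On horizontal lifts this quantity equals $\widehat{B_E(\sigma,\bar\sigma)}=(B_E)_{\alpha\beta}^{\gamma}p_{\gamma}$, which is \emph{not} skew in $(\alpha,\beta)$ unless $B_E$ itself is skew. Your hoped-for ``$d^{lr}$ correction coming from the non-skew part of $B_E$'' does not exist once the anchors are set equal: there is nothing left to cancel the symmetric part.

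The paper's fix is to take the anchors \emph{unequal}:
\[
\rho^l_{\TT_l^EE^*}(\sigma_q,\tilde X_{\kappa_q})=\tilde X_{\kappa_q},\qquad
\rho^r_{\TT_l^EE^*}(\sigma_q,\tilde X_{\kappa_q})=\tilde X_{\kappa_q}-(\sigma_q)^{(D^l){\mathbf h}}_{\kappa_q}+(\sigma_q)^{(D^r){\mathbf h}}_{\kappa_q}.
\]
Then $\rho^r$ applied to a $(D^l)$-horizontal lift returns the $(D^r)$-horizontal lift, and the horizontal--horizontal block of $\Omega$ becomes
\[
-\widehat{D^l_{\sigma}\bar\sigma}+\widehat{D^r_{\bar\sigma}\sigma}+\widehat{B_E(\sigma,\bar\sigma)}=0
\]
by the very decomposition $B_E=D^l-D^r$. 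This yields the clean local form $\Omega_{\TT_l^EE^*}=\bigl((\sigma_{\alpha})^{(D^l){\mathbf h}}_l\bigr)^{*}\wedge\bigl((\sigma^{\alpha})^{\mathbf v}_l\bigr)^{*}$, from which skew-symmetry and nondegeneracy are immediate. So the asymmetry of $B_E$ is absorbed into an asymmetry of the anchors on the prolongation, not wished away.
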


\begin{theo} \label{TeoHamiltonian}
If $\Omega_{\TT_l^EE^*}$ is the exact symplectic structure on the
algebroid $\tau_{\TT_l^EE^*} : \TT_l^EE^* \rightarrow E^*$ then the
left-right Hamiltonian vector field of a hamiltonian function $H:
E^*\to \R$ with respect to $\Omega_{\TT_l^EE^*}$ is just
$\HH^{\Pi_{E^*}}_H$, that is,
\[
\HH^{({\Omega_{{\TT_l^EE^*}},lr)}}_{H}=\HH^{\Pi_{E^*}}_H.
\]

\end{theo}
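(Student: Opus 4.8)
The plan is to build the exact symplectic structure on $\TT_l^EE^*$ explicitly from the canonical Liouville-type section, exactly mimicking the construction of $\lambda_Q$ and $\omega_Q$ on $T^*Q$, and then simply compute both sides of the asserted identity in coordinates. First I would set up adapted coordinates and a local basis of sections of $\tau_{\TT_l^EE^*}$: starting from coordinates $(q^i)$ on $Q$, a local basis $\{\sigma_\a\}$ of $\Gamma(\tau_E)$ with structure functions ${(B_E)}_{\a\beta}^\gamma$, ${(\rho_E^l)}_\a^i$, ${(\rho_E^r)}_\a^i$, and the induced coordinates $(q^i,p_\a)$ on $E^*$. A natural local basis of sections of $\TT_l^EE^*\to E^*$ is given by elements $\TT_\a = (\sigma_\a, {(\rho_E^l)}_\a^i\partial/\partial q^i + (\text{terms in }\partial/\partial p)\,)$ together with the vertical sections $\VV_\a = (0, \partial/\partial p_\a)$; one checks these lie in the fiber because the vertical vectors project to zero under $T\tau_{E^*}$. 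Let $\{\TT^\a, \VV^\a\}$ denote the dual basis of $\Gamma(\tau_{(\TT_l^EE^*)^*})$.

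Next I would write down the Liouville section $\lambda_{\TT_l^EE^*}\in\Gamma(\tau_{(\TT_l^EE^*)^*})$ by the tautological prescription $\lambda_{\TT_l^EE^*}(\kappa_q)(\sigma_q,\tilde X_{\kappa_q}) = \kappa_q(\sigma_q)$; in the basis above this is $\lambda_{\TT_l^EE^*} = p_\a \TT^\a$. Theorem \ref{TeoExact} (which I may assume) furnishes $\Omega_{\TT_l^EE^*} = -d^{lr}_{\TT_l^EE^*}\lambda_{\TT_l^EE^*}$, skew-symmetric and nondegenerate; applying the coordinate formula for $d^{lr}$ from Section \ref{section-exact}, together with the algebroid structure functions of $\TT_l^EE^*$ (these encode the bracket of sections $\TT_\a,\VV_\a$, which in turn is governed by $B_E$ and the anchors of $E$ via the prolongation construction), yields an explicit expression of the form $\Omega_{\TT_l^EE^*} = \TT^\a\wedge\VV^\a + \tfrac12{(B_E)}_{\a\beta}^\gamma p_\gamma\,\TT^\a\otimes\TT^\beta + (\dots)$, i.e. a ``canonical symplectic form'' twisted by the bracket, exactly parallel to the $\Pi_{T^*Q}$-vs-$\omega_Q$ duality recalled in the Introduction.

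Then I would compute $\HH^{(\Omega_{\TT_l^EE^*},r)}_H = (\flat^l_{\Omega_{\TT_l^EE^*}})^{-1}(d^r_{\TT_l^EE^*}H)$. Here $d^r_{\TT_l^EE^*}H$ is the right differential of $H:E^*\to\R$ with respect to the right anchor $\rho^r_{\TT_l^EE^*}$ of the algebroid $\TT_l^EE^*$; from the explicit description of that anchor one gets $d^r_{\TT_l^EE^*}H = ({(\rho_E^r)}_\a^i\partial H/\partial q^i - {(B_E)}_{\a\beta}^\gamma p_\gamma\,\partial H/\partial p_\beta)\,\TT^\a + \partial H/\partial p_\a\,\VV^\a$ (the sign/bracket terms are dictated by the $(\star)$ relations (\ref{Pistar})). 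Inverting $\flat^l_{\Omega_{\TT_l^EE^*}}$ against this and then pushing forward by the left anchor $\rho^l_{\TT_l^EE^*}$ to land in $\mathfrak{X}(E^*)$, I would match the result term by term against the formula (\ref{LocHvf}) for $\HH^{\Pi_{E^*}}_H$, namely ${(\rho_E^l)}_\a^i(\partial H/\partial p_\a)\partial/\partial q^i - ({(\rho_E^r)}_\beta^j\partial H/\partial q^j - {(B_E)}_{\a\beta}^\gamma p_\gamma\partial H/\partial p_\a)\partial/\partial p_\beta$. Since both are $E^*$-vector fields and agree in every chart, they coincide globally.

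The main obstacle is bookkeeping rather than conceptual: one must pin down precisely the algebroid structure functions of $\TT_l^EE^*$ (the bracket and the two anchors on the prolongation) and make sure the left/right asymmetry is tracked consistently through $d^{lr}$, the definition of $\flat^l_{\Omega}$, and the choice of $d^r H$ in (\ref{24'}) — a single misplaced left-vs-right or a sign in ${(B_E)}_{\a\beta}^\gamma$ versus ${(B_E)}_{\beta\a}^\gamma$ would break the identity. A clean way to avoid brute-force index chasing is to argue characterization-free: show that $\flat^l_{\Omega_{\TT_l^EE^*}}(\xi) = d^r_{\TT_l^EE^*}H$ forces $\rho^l_{\TT_l^EE^*}(\xi)$ to satisfy $\rho^l_{\TT_l^EE^*}(\xi)(F) = -\{H,F\}_{\Pi_{E^*}}$ for all $F\in C^\infty(E^*)$ by testing against the coordinate functions $q^i$ and $p_\a$ only, using the bracket relations (\ref{Pistar}); this reduces the whole verification to the four brackets among $q^i,p_\a$, which is short.
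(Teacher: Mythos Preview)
Your overall strategy --- compute everything in an adapted local basis of $\Gamma(\tau_{\TT_l^EE^*})$, write $\Omega_{\TT_l^EE^*}$ and $d^r_{\TT_l^EE^*}H$ explicitly, invert $\flat^l_{\Omega}$, apply $\rho^l_{\TT_l^EE^*}$, and match against (\ref{LocHvf}) --- is exactly the paper's approach. The difference is in how the basis is chosen and where the $(B_E)$-terms live. The paper does not leave the ``terms in $\partial/\partial p$'' unspecified: it first invokes Proposition~\ref{PropBracket} to write $B_E(\sigma,\sigma')=D^l_\sigma\sigma'-D^r_{\sigma'}\sigma$ for a $\rho^l_E$-connection $D^l$ and a $\rho^r_E$-connection $D^r$, and then takes the basis $\{(\sigma_\a)^{(D^l){\mathbf h}}_l,(\sigma^\a)^{\mathbf v}_l\}$ built from $D^l$-horizontal lifts. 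In that basis the symplectic section is in pure Darboux form, $\Omega_{\TT_l^EE^*}=((\sigma_\a)^{(D^l){\mathbf h}}_l)^*\wedge((\sigma^\a)^{\mathbf v}_l)^*$, with \emph{no} $(B_E)$-twist at all --- contrary to the form you sketched. The bracket terms instead enter through the right anchor: by construction $\rho^r_{\TT_l^EE^*}$ sends $(\sigma_\a)^{(D^l){\mathbf h}}_l$ to the \emph{$D^r$}-horizontal lift $\sigma_\a^{(D^r){\mathbf h}}$, so $d^r_{\TT_l^EE^*}H$ picks up $(D^r)$-coefficients, and after applying $\rho^l_{\TT_l^EE^*}$ the final expression contains the combination $(D^l)^\gamma_{\a\beta}-(D^r)^\gamma_{\beta\a}=(B_E)^\gamma_{\a\beta}$, which is precisely what matches (\ref{LocHvf}).

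The practical point is that the algebroid structure on $\TT_l^EE^*$ (both anchors and the bracket) is \emph{defined} in the proof of Theorem~\ref{TeoExact} via the connections $D^l,D^r$; it is not a canonical prolongation independent of those choices. So you cannot compute $d^r_{\TT_l^EE^*}H$ or $\rho^l_{\TT_l^EE^*}$ without engaging with $D^l$ and $D^r$ explicitly, and your ``characterization-free'' shortcut of testing on $q^i,p_\a$ still requires knowing what $\rho^r_{\TT_l^EE^*}$ does to your basis sections. Once you pin the basis to the $(D^l)$-horizontal lifts, your plan and the paper's proof coincide line for line.
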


\begin{proof}

[Theorem  \ref{TeoExact}]

Using Proposition \ref{PropBracket}, we deduce that there exists a
$\rho_{E}^{l}$-connection $D^{l}$ and a $\rho_{E}^r$-connection
$D^r$ on $\tau_{E}: E \rightarrow Q$ such that
$$B_E(\sigma,
\bar \sigma) = D^l_{\sigma} \bar \sigma - D^r_{\bar \sigma} \sigma ,
\ \ \ \ \ \ \mbox{for } \sigma, \bar \sigma \in
\Gamma(\tau_E).$$

Denote by $$^{D^l{\mathbf h}}_{\kappa_q} : E_q \rightarrow
T_{\kappa_q} E^*, \hspace*{4cm} ^{D^r{\mathbf h}}_{\kappa_q} : E_q
\rightarrow T_{\kappa_q} E^*$$ the $(D^l)$-horizontal lift and the
$(D^r)$-horizontal lift, respectively, at the point $\kappa_q \in
E^*_q$ (see Appendix A, for a detailed description of horizontal
and vertical lifts).

Thus, if $\sigma \in \Gamma (\tau_E)$ we can consider the
corresponding $(D^l)$-horizontal lift $\sigma^{(D^l){\mathbf h}}
\in {\mathfrak X}(E^*)$ (respectively, the corresponding
$(D^r)$-horizontal lift $\sigma^{(D^r){\mathbf h}} \in {\mathfrak
X}(E^*)$) given by
\[
\sigma^{(D^l){\mathbf h}}(\kappa_q)=(\sigma(q))^{D^l{\mathbf
h}}_{\kappa_q}\quad \hbox{   and   }\quad \sigma^{(D^r){\mathbf
h}}(\kappa_q)=(\sigma(q))^{D^r{\mathbf h}}_{\kappa_q}, \; \;
\mbox{ for } \kappa_q \in E^*_q.
\]
 Now, we introduce the  $D^l$-horizontal lift of $\sigma$ to
 $\Gamma(\tau_{\TT_{l}^EE^*})$ as the section of $\tau_{\TT_{l}^EE^*}:
 {\TT_{l}^EE^*} \rightarrow E^*$ defined by
$$\sigma^{(D^l){\mathbf h}}_l(\kappa_q) = (\sigma(q), \sigma^{(D^l){\mathbf h}}(\kappa_q)),
\ \ \ \ \ \ \mbox{for } \kappa_q \in E^*_q.$$ Note that the vector
field $\sigma^{(D^l)h}$ is $\tau_{E^*}$-projectable on the vector
field $\rho_{E}^l(\sigma) \in {\mathfrak X}(Q)$ (see Appendix A)
and, therefore, $\sigma_{l}^{(D^l){\mathbf h}}(\kappa_{q}) \in
(\TT_{l}^{E}E^*)_{\kappa_{q}}$. Moreover, from (\ref{A.38'}), it
follows that
\begin{equation}\label{25'}
\sigma^{(D^l){\mathbf h}}_{l} + \bar{\sigma}_{l}^{(D^l){\mathbf
h}} = (\sigma + \bar{\sigma})_{l}^{(D^l){\mathbf h}}, \; \; \;
(f\sigma)_{l}^{(D^l){\mathbf h}} = (f \circ
\tau_{E^*})\sigma_{l}^{(D^l){\mathbf h}},
\end{equation}

On the other hand, if $\nu \in \Gamma(\tau_{E^*})$ then the
vertical lift $\nu^{\mathbf v}_l \in \Gamma(\TT_l^EE^*)$ of $\nu$
to $\TT_l^EE^*$ is given by  $$\nu^{\mathbf v}_l(\kappa_q) = (0,
\nu^{\mathbf v}(\kappa_q)), \ \ \ \ \ \ \mbox{for } \kappa_q \in
E^*_q,$$ where $\nu^{\mathbf v} \in {\mathfrak X}(E^*)$ is the
standard vertical lift of $\nu$ (see Appendix A). In this case, we have also that
(see (\ref{A.39'}))
\begin{equation}\label{25''}
\nu^{\mathbf v}_{l} + \bar{\nu}^{\mathbf v}_{l} = (\nu +
\bar{\nu})^{\mathbf v}_{l}, \; \; \; (f\nu)^{\mathbf v}_{l} = (f
\circ \tau_{E^*}) \nu^{\mathbf v}_{l},
\end{equation}
for $\nu, \bar{\nu} \in \Gamma(\tau_{E^*})$ and $f \in
C^{\infty}(Q)$.

 It is clear that if $\{\sigma_{\a}\}$ is a local basis of
$\Gamma(\tau_E)$ and $\{\sigma ^{\a}\}$ is the dual basis fo
$\Gamma(\tau_{E^*})$,  then $\{(\sigma_{\a})^{(D^l){\mathbf h}}_l,
(\sigma^{\a})^{\mathbf v}_l \}$ is a local basis of
$\Gamma(\TT_l^EE^*)$.

Next, let $R$ be a tensor of type $(1, 1)$ on $\tau_{E}: E
\rightarrow Q$, that is, $R: \Gamma(\tau_{E}) \rightarrow
\Gamma(\tau_{E})$ is a $C^{\infty}(Q)$-linear map. Then, there
exists a unique vector field $R^{\mathbf v}$ on $E^*$, \emph{the
vertical lift of R}, such that
\[
R^{\mathbf v} (f \circ \tau_{E^*}) = 0 \ \ \mbox{and} \ \ R^{\mathbf
v}(\widehat \gamma) = \widehat{R( \gamma)}
\]
for $f \in C^{\infty}(Q)$ and $\gamma \in \Gamma(\tau_E)$. Here, \,
$\widehat{}$ \, denotes the linear function on $E^*$ induced by a
section in $\Gamma(\tau_E)$.

The vertical lift of the tensor $R$ to $\TT_l^EE^*$ is denoted by
$R^{\mathbf v}_l$ and given by
$$R^{\mathbf v}_l(\kappa_q) = (0, R^{\mathbf v}(\kappa_q)),  \ \ \ \ \ \ \mbox{for } \kappa_q \in E^*_q.$$

Now, we will define the algebroid structure on $\TT_l^EE^*$. First,
consider the two anchor maps $\rho^l_{\TT_l^EE^*} : \TT_l^EE^*
\rightarrow TE^*$ and $\rho^r_{\TT_l^EE^*} : \TT_l^EE^* \rightarrow
TE^*$ given by
\begin{eqnarray*} \rho^l_{\TT_l^EE^*} (\sigma_q, \tilde X_{\kappa_q}) &=& \tilde X_{\kappa_q},
 \\ \rho^r_{\TT_l^EE^*} (\sigma_q, \tilde X_{\kappa_q}) &=& \tilde X_{\kappa_q} -
 (\sigma_q)^{(D^l){\mathbf h}}_{\kappa_q} + (\sigma_q)^{(D^r){\mathbf h}}_{\kappa_q}, \end{eqnarray*}
for $(\sigma_q, \tilde X_{\kappa_q}) \in (\TT_l^EE^*)_{\kappa_q}$,
with $\kappa_q \in E^*_q$.

Note that
\begin{equation}
\begin{array}{rr}
 \rho^l_{\TT_l^EE^*} (\sigma^{(D^l){\mathbf h}}_l) =
\sigma^{(D^l){\mathbf h}}, \ \ \ \ \ &   \ \ \ \ \ \rho^l_{\TT_l^EE^*}(\nu^{\mathbf v}_l) = \nu^{\mathbf v},  \\
\rho^r_{\TT_l^EE^*} (\sigma^{(D^l){\mathbf h}}_l) =
\sigma^{(D^r){\mathbf h}}, \ \ \ \ \ &   \ \ \ \ \
\rho^r_{\TT_l^EE^*}(\nu^{\mathbf v}_l) = \nu^{\mathbf v},
\end{array} \label{Anclas}
\end{equation}
 for $ \sigma \in \Gamma(\tau_E)$ and $\nu \in \Gamma(\tau_{E^*}).$


Next, we define a bracket $B_{\TT_{l}^EE^*}$ on the space
$\Gamma(\tau_{\TT^{E}_{l}E^*})$ by lifting the bracket $B_{E}$ on
$\Gamma(\tau_{E})$.

Let $R$ be a tensor of type $(1,3)$ on $\tau_{E}: E \rightarrow
Q$, that is, $R: \Gamma(\tau_{E}) \times \Gamma(\tau_{E}) \times
\Gamma(\tau_{E}) \rightarrow \Gamma(\tau_{E})$ is a
$C^{\infty}(Q)$-linear map. Then, using (\ref{Eqfun}),
(\ref{25'}), (\ref{25''}) and (\ref{Condcon}), we deduce that
there exists a unique bracket $B_{\TT_{l}^EE^*}$ on
$\Gamma(\TT_{l}^EE^*)$ such that
\begin{equation}
\begin{array}{l}
B_{\TT_l^EE^*} (\sigma^{(D^l){\mathbf h}}_l , \bar
\sigma^{(D^l){\mathbf h}}_l) = B_E( \sigma , \bar
\sigma)^{(D^l){\mathbf h}}_l + R(\sigma, \bar \sigma,
\cdot)^{\mathbf v}_l  \\
B_{\TT_l^EE^*} (\sigma^{(D^l){\mathbf h}}_l , \nu_l^{\mathbf v}) = (D^l_{\sigma}  \nu)^{\mathbf v}_l ,  \\
B_{\TT_l^EE^*} (\kappa_l^{\mathbf v}, \bar \sigma^{(D^l){\mathbf h}}_l) = -(D^r_{\bar \sigma}  \kappa)^{\mathbf v}_l  \\
B_{\TT_l^EE^*}(\kappa_l^{\mathbf v}, \nu_l^{\mathbf v}) = 0
\end{array} \label{BracketEtE*}
\end{equation}
for $\sigma, \bar{\sigma} \in \Gamma(\tau_{E})$ and $\kappa, \nu \in
\Gamma(\tau_{E^*})$. Note that $R(\sigma, \bar{\sigma}, \cdot):
\Gamma(\tau_{E}) \rightarrow \Gamma(\tau_{E})$ is a tensor of type
$(1, 1)$ on $\tau_{E}: E \rightarrow Q$. Furthermore, from
(\ref{Anclas}) and (\ref{BracketEtE*}), it follows that
$(B_{\TT_l^EE^*}, \rho^l_{\TT_l^EE^*}, \rho^r_{\TT_l^EE^*})$ is an
algebroid structure on $\TT_l^EE^*$.

Now, we will endow the algebroid $(\TT_{l}^EE^*, B_{\TT_l^EE^*},
\rho^l_{\TT_l^EE^*}, \rho^r_{\TT_l^EE^*})$ with an exact symplectic
structure. In fact, the dual vector bundle $\tau_{(\TT_l^EE^*)^*} :
(\TT_l^EE^*)^* \rightarrow E^*$ admits a canonical section
$\lambda_{\TT_l^EE^*}$, the \emph{Liouville section}, defined by
$$\lambda_{\TT_l^E E^*} (\kappa_q) (\sigma_q, \tilde X_{\kappa_q}) =
\kappa_q(\sigma_q),$$ for $\kappa_q \in E^*_q$ and $(\sigma_q, \tilde
X_{\kappa_q}) \in (\TT_l^EE^*)_{\kappa_q}.$

Note that
\begin{equation}
\lambda_{\TT_l^EE^*}(\sigma^{(D^l){\mathbf h}}_l) =
\widehat{\sigma}, \ \ \ \lambda_{\TT_l^EE^*}(\kappa^{\mathbf v}_l)
=0 \label{Liouloc}
\end{equation}
for $\sigma \in \Gamma(\tau_E)$ and $\kappa \in
\Gamma(\tau_{E^*})$.

Consider the tensor of type $(0, 2)$ on $\tau_{{\mathcal
T}_{l}^EE^*}: \TT_{l}^EE^* \to E^*$ given by
\[
\Omega_{\TT_l^EE^*} = -d^{lr}_{\TT_l^EE^*} \left(
\lambda_{\TT_l^EE^*} \right).
\]

From (\ref{Difflr}), (\ref{Anclas}), (\ref{BracketEtE*}),
(\ref{Liouloc}) and (\ref{DefDh}), we deduce that
\begin{eqnarray}
& & \Omega_{\TT_l^EE^*}(\sigma^{(D^l){\mathbf h}}_l ,
\bar{\sigma}^{(D^l){\mathbf h}}_l) = - \sigma^{(D^l){\mathbf
h}}(\widehat{\bar{\sigma}}) + (\bar{\sigma})^{(D^r){\mathbf
h}}(\widehat{\sigma}) + \lambda_{\TT_l^EE^*} \left( B_E( \sigma ,
\bar{\sigma})^{(D^l){\mathbf h}}_l +
R(\sigma, \bar{\sigma}, \cdot)^{\mathbf v}_l \right) \nonumber \\
& & \hspace{3.55cm} = - \widehat{D^l_{\sigma} \bar{\sigma}} +
\widehat{D^r_{\bar{\sigma}}
\sigma} + \widehat{B_E( \sigma , \bar{\sigma})} = 0 \nonumber \\
& & \Omega_{\TT_l^EE^*} (\sigma^{(D^l){\mathbf h}}_l , \nu_l^{\mathbf
v}) = \nu^{\mathbf v}(\widehat{\sigma}) \circ \tau_E +
\lambda_{\TT_l^EE^*}
\left( (D^l_{\sigma} \nu)^{\mathbf v}_l \right) = \nu(\sigma) \circ \tau_{E^*}  \label{OmegaEtE*} \\
& & \Omega_{\TT_l^EE^*} (\kappa_l^{\mathbf v},
\bar{\sigma}^{(D^l){\mathbf h}}_l) = - \kappa^{\mathbf
v}(\widehat{\bar{\sigma}}) \circ \tau_E - \lambda_{\TT_l^EE^*}
\left( (D^r_{\bar{\sigma}} \kappa)^{\mathbf v}_l \right) = - \kappa(\bar{\sigma}) \circ \tau_{E^*} \nonumber \\
& & \Omega_{\TT_l^EE^*}(\kappa_l^{\mathbf v}, \nu_l^{\mathbf v}) =
0 \nonumber
\end{eqnarray}
 for $\sigma$, $ \bar{\sigma} \in \Gamma(\tau_E) $ and $\kappa$, $\nu \in \Gamma(\tau_{E^*})$.
 As a consequence, $\Omega_{\TT_l^EE^*}$ is a skew-symmetric tensor.

In addition, if $\{\sigma_{\a} \}$ is a local basis of
$\Gamma(\tau_E)$, we have that $\{ {(\sigma_{\a})}^{(D^l){\mathbf
h}}_l, {(\sigma^{\a})}^{\mathbf v}_l \}$ is a local basis of
$\Gamma(\tau_{\TT_l^EE^*})$ and
\begin{equation} \Omega_{\TT_l^EE^*} = {\left(
{(\sigma_{\a})}^{(D^l){\mathbf h}}_l \right) }^* \wedge { \left(
{(\sigma^{\a})}^{\mathbf v}_l \right) }^* \label{Explocalsym} \end{equation}
where $\left\{ {\left( {(\sigma_{\a})}^{(D^l){\mathbf h}}_l \right) }^* ,  {
\left( {(\sigma^{\a})}^{\mathbf v}_l \right) }^* \right\}$ is the dual basis
of $\{ {(\sigma_{\a})}^{(D^l){\mathbf h}}_l, {(\sigma^{\a})}^{\mathbf v}_l \}$.
Therefore, it is clear that $\Omega_{\TT_l^EE^*}$ is nondegenerate.

This ends the proof of our theorem.
\end{proof}

\begin{remark}{\rm
In the particular case when $E$ is an skew-symmetric algebroid (that
is, the bracket $B_{E}$ is skew-symmetric), the exact symplectic
structure $\Omega_{\TT_{l}^EE^*}$ was considered by Popescu et al
\cite{PoPo} in order to develop a symplectic description of the
Hamiltonian mechanics on skew-symmetric algebroids. In this case,
the exact symplectic structure $\Omega_{\TT_{l}^EE^*}$ does not
depend on the chosen connection.}
\end{remark}

Now, suppose that $E$ is a Lie algebroid with Lie algebroid
structure $(\lcf \cdot, \cdot \rcf_{E}, \rho_{E})$. Then,
$(\Gamma(\tau_E), \lcf \cdot, \rcf_E)$ is a real Lie algebra and
$$\lcf \sigma, f \sigma' \rcf_{E}  = f \lcf \sigma,  \sigma'
\rcf_{E} + \rho_{E}(\sigma)(f) \sigma', \;
$$ for  $\sigma, \sigma' \in \Gamma(\tau_E)$  and  $f \in C^{\infty}(Q)$.
Moreover, we have that the anchor map $\rho^l_E = \rho^r_E =
\rho_E$ is a Lie algebra morphism, i.e. $\rho_E \left( \lcf
\sigma, \sigma' \rcf_E \right) = [\rho_E(\sigma) ,
\rho_E(\sigma')]$, where $[\cdot, \cdot]$ is the standard bracket
of vector fields. In addition, it is well known that
$\TT_l^EE^*=\TT^EE^*$ also admits a Lie algebroid structure (see,
for instance, \cite{LMM}).

Now, we will see that our construction permits to recover this Lie
algebroid structure.

To do this, consider an arbitrary bundle metric $\GG: E \times_Q E
\rightarrow \R$ on $E$ and denote by $\nabla^{\GG}: \Gamma(E)
\times \Gamma(E) \rightarrow \Gamma(E)$ the {\it Levi Civita}
connection induced by $\GG$. Then, we have that
\begin{equation}\label{notorsion}
\lcf\sigma, \sigma' \rcf_E = \nabla^{\GG}_{\sigma} \sigma' -
{\nabla}^{\GG}_{\sigma'} \sigma, \; \;  \mbox{ for } \sigma,
\sigma' \in \Gamma(\tau_E).
\end{equation}
On the other hand \emph{ the curvature of the connection}
${\nabla}^\GG$  is the tensor field of type (1,3) on $\tau_{E}: E
\rightarrow Q$
$$R^{\nabla^{\GG}}: \Gamma(\tau_E) \times \Gamma(\tau_E)
\times \Gamma(\tau_E) \rightarrow \Gamma(\tau_E)$$ defined for each
$\sigma, \sigma', \sigma'' \in \Gamma(\tau_E)$ as
$$R^{\nabla^{\GG}} (\sigma, \sigma',  \sigma'') =
\nabla^{\GG}_{\sigma} (\nabla^{\GG}_{\sigma'} \sigma'') -
\nabla^{\GG}_{\sigma'} (\nabla^{\GG}_{\sigma} \sigma'') -
\nabla^{\GG}_{\lcf \sigma, \sigma' \rcf_E} \sigma''.$$

Using (\ref{DefDh}) and the fact that $\rho_{E}$ is a Lie algebra
morphism, it is easy to prove that
$$ [\sigma^{\nabla^{\GG}{\mathbf h}}, (\sigma')^{\nabla^{\GG}{\mathbf h}}]
(f \circ \tau_{E^*}) = \left(\rho_E  \lcf \sigma, \sigma' \rcf_E \right)(f) \circ \tau_{E^*} $$
and
$$[\sigma^{\nabla^{\GG}{\mathbf h}}, (\sigma')^{\nabla^{\GG}{\mathbf h}}] (\widehat{\sigma''}) =
\left( \lcf \sigma, \sigma' \rcf_E \right)^{\nabla^{\GG}{\mathbf
h}} (\widehat{\sigma''}) + R^{\nabla^{\GG}}(\sigma, \sigma',
\cdot)^{\mathbf v}(\widehat{\sigma''}),
$$
for $\sigma, \sigma', \sigma'' \in \Gamma(\tau_{E})$, where
$\sigma^{\nabla^{\GG}{\mathbf h}}$ denotes the
$\nabla^{\GG}$-horizontal lift of $\sigma$ to ${\mathfrak
X}(E^*)$.
 Thus, we conclude that
$$[\sigma^{\nabla^{\GG}{\mathbf h}},
(\sigma')^{\nabla^{\GG}{\mathbf h}}] = \lcf \sigma, \sigma' \rcf_E
^{\nabla^{\GG}{\mathbf h}} + \left(R^{\nabla^{\GG}} (\sigma,
\sigma', \cdot) \right)^{\mathbf v}.$$ By a similar argument,
using (\ref{DefDh}) and (\ref{A.39.0}), we have that
$$[\sigma^{\nabla^{\GG}{\mathbf h}}, \nu^{\mathbf v}] =
( \nabla^{\GG}_{\sigma} \nu)^{\mathbf v} \ \ \ \ \ \mbox{and} \ \ \
\ \ [\kappa^{\mathbf v}, \nu^{\mathbf v}] = 0, \;\;\; \mbox{ for }
\kappa, \nu\in \Gamma(\tau_{E^*}).$$

Therefore, if we replace in (\ref{BracketEtE*}) the tensor $R$ by
the curvature $R^{\nabla^{\GG}}$ of the connection $\nabla^{\GG}$
then we obtain that
\begin{eqnarray}
& & B_{\TT^EE^*} (\sigma^{\nabla^{\GG}{\mathbf h}}_l ,
\sigma'^{\nabla^{\GG}{\mathbf h}}_l) = (\lcf \sigma ,
\sigma'\rcf_E , [\sigma^{\nabla^{\GG}{\mathbf h}},
(\sigma')^{\nabla^{\GG}{\mathbf h}}] ),
 \nonumber \\
& & B_{\TT^EE^*} (\sigma^{\nabla^{\GG}{\mathbf h}}_l , \nu_l^{\mathbf v}) =
(0, [{\sigma}^{\nabla^{\GG}{\mathbf h}},   \nu^{\mathbf v}])   =
- B_{\TT^EE^*} (\nu_l^{\mathbf v} , \sigma^{\nabla^{\GG}{\mathbf h}}_l ),\nonumber \\
& & B_{\TT^EE^*}(\kappa_l^{\mathbf v}, \nu_l^{\mathbf v}) = 0.
\nonumber
\end{eqnarray}
 Consequently, if $\sigma$ (respectively, $\sigma'$) is a section
 of $\tau_{E}: E \rightarrow Q$ and $X$ (respectively, $X'$) is a
 vector field on $E^*$ which is $\tau_{E^*}$-projectable on
 $\rho_{E}(\sigma)$ (respectively, $\rho_{E}(\sigma')$) then
 $(\sigma, X)$ and $(\sigma', X')$ are sections of
 $\tau_{\TT^EE^*}: \TT^{E}E^* \rightarrow E^*$ and
 \begin{equation}B_{\TT^EE^*} ((\sigma, X), (\sigma', X'))
= (\lcf \sigma, \sigma' \rcf_E, [X, X']). \label{UsualBracket}
\end{equation}
This implies that $B_{\TT^EE^*}$ is the canonical Lie bracket on
$\Gamma(\tau_{\TT^EE^*})$ (see \cite{LMM}).

 In Section \ref{Section5} we will use the following properties of the curvature of the connection $\nabla^{\GG}$:
\begin{equation}R^{\nabla^{\GG}}(\sigma, \sigma') \sigma'' = - R^{\nabla^{\GG}}(\sigma', \sigma)\sigma'' \label{CurvSkew} \end{equation} and
\begin{equation} R^{\nabla^{\GG}}(\sigma, \sigma') \sigma'' + R^{\nabla^{\GG}}(\sigma', \sigma'') \sigma + R^{\nabla^{\GG}}(\sigma'', \sigma) \sigma' =0
\qquad \hbox{(first Bianchi identity)}.\label{CurvBianchi}
\end{equation}
Note that (\ref{CurvBianchi}) follows using (\ref{notorsion}) and
the fact that $\lcf \cdot, \cdot \rcf_{E}$ satisfies the Jacobi
identity.

\begin{remark} \label{RemBracket2}
{\rm A situation which will be useful in the examples is the case
when we start with a vector bundle $\tau_E: E \rightarrow Q$ with a
skew-symmetric algebroid structure $(B_E, \rho_E)$ such that the
anchor map $\rho_{E}: E \rightarrow TQ$ is a skew-symmetric
algebroid morphism, that is,
\[ \rho_E \left( \lcf \sigma, \sigma' \rcf_E
\right) = [\rho_E(\sigma) , \rho_E(\sigma')]\; .
\]
Observe that this condition does not imply that $E\to Q$ is a Lie algebroid as in the previous remark.
 Under this weaker condition it is still
possible to choose the tensor $R$ in such a way  the bracket defined
in equation (\ref{BracketEtE*}) is  again the usual bracket defined
in (\ref{UsualBracket}). }\end{remark}

\begin{proof}

[Theorem \ref{TeoHamiltonian}]

Suppose that $(q^i)$ are local coordinates on $Q$,
$\{\sigma_{\a} \}$ is a local basis of $\Gamma(\tau_E)$ and that
$(B_E)^{\gamma}_{\a \beta}$, $(\rho^l_E)^i _{\a}$ and $(\rho^r_E)^i
_{\a}$ are the local structure functions of $E$ with respect to the
local coordinates $(q^i)$ and to the basis $\{\sigma_{\a} \}$. Then, from Proposition \ref{PropBracket},
it is clear that
\begin{eqnarray*} (B_E)^{\gamma}_{\a \beta} = {(D^l)}^{\gamma}_{\a
\beta} - {(D^r)}^{\gamma}_{\beta \a} \label{BGamma}
\end{eqnarray*} with $$D^l_{\sigma_{\a}} \sigma_{\beta} = {(D^l)}^{\mu}_{\a
\beta} \sigma_{\mu}, \ \ \ \ \ \ \ \ \ \ D^r_{\sigma_{\a}}
\sigma_{\beta} = {(D^r)}^{\mu}_{\a \beta} \sigma_{\mu}. $$

Moreover, if $(q^i, p_{\a})$ are the corresponding local
coordinates on $E^*$, we have that (see (\ref{SigmaalpDh}) and
(\ref{A.39.2}))
\begin{equation}\label{LocExp}
(\sigma_{\a})^{(D^l){\mathbf h}} =
(\rho^l_E)^i_{\a}\frac{\partial}{\partial q^i} +
{(D^l)}^{\gamma}_{\a \beta} p_{\gamma} \frac{\partial}{\partial
p_{\beta}},  \ \ \ \ \ \ \ \ (\sigma_{\a})^{(D^r){\mathbf h}} =
(\rho^r_E)^i_{\a}\frac{\partial}{\partial q^i} +
{(D^r)}^{\gamma}_{\a \beta} p_{\gamma} \frac{\partial}{\partial
p_{\beta}}, \end{equation}
$$(\sigma^{\a})^{\mathbf v} = \frac{\partial}{\partial p_{\a}}.$$

Now, let $H: E^*\longrightarrow \R$ be a hamiltonian function. From
(\ref{23''}), (\ref{Anclas}) and (\ref{LocExp}), it follows that
$$d^r_{\TT_l^EE^*} H = \left( \frac{\partial H}{\partial q^i}(\rho^r_E)^i_{\a}
+ \frac{\partial H}{\partial p_{\beta}} {(D^r)}^{\gamma}_{\a
\beta} p_{\gamma} \right) \left({(\sigma_{\a})}_l^{(D^l){\mathbf
h}} \right) ^* + \frac{\partial H}{\partial p_{\alpha}}
\left({(\sigma^{\a})}^{\mathbf v}_l \right) ^*.$$  Therefore, from
(\ref{24'}) and (\ref{Explocalsym}), we obtain that the right
Hamiltonian section $\HH^{( \Omega_{\TT_l^EE^*},r)}_H$ of $H$ with
respect to $\Omega_{\TT_{l}^EE^*}$ is
$$\HH^{(
\Omega_{\TT_l^EE^*},r)}_H = \frac{\partial H}{\partial p_{\a}}
(\sigma_{\a})^{(D^l){\mathbf h}}_l - \left( \frac{\partial
H}{\partial q^i}(\rho^r_E)^i_{\a} + \frac{\partial H}{\partial
p_{\beta}} {(D^r)}^{\gamma}_{\a \beta} p_{\gamma} \right)
{(\sigma^{\a})}^{\mathbf v}_l.$$

Using (\ref{Anclas}) and (\ref{LocExp}), the right Hamiltonian section yields
the left-right Hamiltonian vector field of $H$ which is
\begin{eqnarray}
\rho^l_{\TT_l^EE^*} ( \HH^{( \Omega_{\TT_l^EE^*},r)}_H) &=&
\frac{\partial H}{\partial p_{\a}} {(\rho_E^l)}_{\a}^i
\frac{\partial}{\partial q^i} - \left( \frac{\partial H}{\partial
q^i}(\rho^r_E)^i_{\beta} - \frac{\partial H}{\partial p_{\a}}
({(D^l)}^{\gamma}_{\a \beta} - {(D^r)}^{\gamma}_{\beta \a}
)p_{\gamma} \right)
\frac{\partial}{\partial p_{\beta}}\, , \nonumber \\
&=& \frac{\partial H}{\partial p_{\a}} {(\rho_E^l)}_{\a}^i
\frac{\partial}{\partial q^i} - \left( \frac{\partial H}{\partial
q^i}(\rho^r_E)^i_{\beta} - \frac{\partial H}{\partial p_{\a}}
(B_E)^{\gamma}_{\a \beta} p_{\gamma} \right)
\frac{\partial}{\partial p_{\beta}}\, \label{LRhamiltonian}.
\end{eqnarray}

Consequently, from (\ref{LocHvf}) and (\ref{LRhamiltonian}), we deduce that
$$\HH^{( \Omega_{\TT_l^EE^*},lr)}_H = \rho^l_{\TT_l^EE^*} (\HH^{(
\Omega_{\TT_l^EE^*},r)}_H) = \HH^{\Pi_{E^*}}_H.$$
\end{proof}

\begin{remark}{\rm Theorem \ref{TeoHamiltonian} was proved by Popescu
{\it et al.} \cite{PoPo} for the particular case when $E$ is a
 skew-symmetric algebroid and by de Le\'on {\it et al.} \cite{LeMaMa} for
 the particular case when $E$ is a Lie algebroid.
} \end{remark}

\bigskip

\section{Closedness of the exact symplectic section}\label{Section5}
\setcounter{equation}{0}

In order to analyze the closedness of the exact symplectic section $\Omega_{\TT_{l}^EE^*}$ we have to define a differential over tensors of type (0,2) on the algebroid $\TT_{l}^EE^*$. It is clear that it is not possible to induce a direct extension of the differential defined in (\ref{Difflr}). Therefore, the idea is to define a differential over skew-symmetric tensors and another differential over symmetric tensors of type (0,2).

\

Consider an algebroid structure $(B_{\bar E}, \rho^l_{\bar E},
\rho^r_{\bar E})$ over the vector bundle $\tau_{\bar E}: {\bar E}
\rightarrow \bar Q$. Then, it is induced over the vector bundle
$\tau_{\bar E}:{\bar E}\rightarrow \bar Q$ a skew-symmetric algebroid
$(B_{\bar E}^A , \rho^A_{\bar E})$ given by
$$B_{\bar E}^A (\sigma, \bar \sigma) = \frac{1}{2} \left(B_{\bar E}(\sigma, \bar \sigma) -
B_{\bar E}(\bar \sigma, \sigma) \right), \qquad \ \rho_{\bar E}^A
(\sigma) = \frac{1}{2} \left( \rho_{\bar E}^l(\sigma) +
\rho^r_{\bar E}(\sigma) \right), $$ and also a symmetric algebroid
$(B_{\bar E}^S , \rho^S_{\bar E})$
$$B_{\bar E}^S (\sigma, \bar \sigma) = \frac{1}{2} \left(B_{\bar E}(\sigma, \bar \sigma) + B_{\bar E}(\bar \sigma, \sigma) \right) \qquad \ \rho_{\bar E}^S (\sigma) = \frac{1}{2} \left( \rho_{\bar E}^l(\sigma) - \rho^r_{\bar E}(\sigma) \right)$$for $\sigma, \bar \sigma \in \Gamma(\tau_{\bar E}).$

Then, on a skew-symmetric tensor $T^A \in
\Gamma(\tau_{\bigwedge^2\bar E^*})$ the {\it skew-symmetric
differential} $d_{\bar E} ^A$ is defined by
\begin{eqnarray*} (d_{\bar E}^A \, T^A) (\sigma, \bar \sigma, \bar{\bar \sigma}) &=& \rho_{\bar E}^A(\sigma)(T^A (\bar \sigma, \bar{\bar \sigma})) - \rho_{\bar E}^A(\bar \sigma)(T^A (\sigma, \bar{\bar \sigma})) + \rho_{\bar E}^A(\bar{\bar \sigma})(T^A (\sigma, \bar \sigma))\\ && - T^A(B_{\bar E}^A(\sigma, \bar \sigma), \bar{\bar \sigma}) + T^A(B_{\bar E}^A(\sigma, \bar{\bar \sigma}), \bar \sigma) - T^A(B_{\bar E}^A(\bar \sigma, \bar{\bar \sigma}), \sigma)\end{eqnarray*}
and on a symmetric tensor $T^S$ in $\Gamma(\tau_{\otimes_2^0\bar
E^*})$ the {\it symmetric differential} $d_{\bar E}^S$ is
\begin{eqnarray*} (d_{\bar E}^S \, T^S) (\sigma, \bar \sigma, \bar{\bar \sigma}) &=& \rho_{\bar E}^S(\sigma)(T^S (\bar \sigma, \bar{\bar \sigma})) + \rho_{\bar E}^S(\bar \sigma)(T^S (\sigma, \bar{\bar \sigma})) + \rho_{\bar E}^S(\bar{\bar \sigma})(T^S(\sigma, \bar \sigma))\\ && - T^S(B_{\bar E}^S(\sigma, \bar \sigma), \bar{\bar \sigma}) - T^S(B_{\bar E}^S(\sigma, \bar{\bar \sigma}), \bar \sigma) - T^S(B_{\bar E}^S(\bar \sigma, \bar{\bar \sigma}), \sigma),\end{eqnarray*} for $\sigma, \bar \sigma, \bar{\bar \sigma} \in \Gamma(\tau_{\bar E})$.

Note that $d_{\bar E}^A \, T^A$ (respectively, $d_{\bar E}^S\,
T^S$) is a skew-symmetric tensor of type (0,3) (respectively, a
symmetric tensor of type (0,3)).

Now, we may extend the definition of the differential to any
(0,2)-tensor in $\Gamma(\tau_{\otimes_2^0\bar E^*})$. If $T$ is a
section of the vector bundle $\tau_{\otimes_2^0\bar E^*}:
\otimes_2^0\bar E^* \rightarrow \bar Q$ then the differential of $T$ is
the section $d_{\bar E}^{AS} T$ of the vector bundle
$\tau_{\otimes_3^0\bar E^*} : \otimes_3^0\bar E^* \rightarrow \bar Q$
defined by
\begin{equation}( d_{\bar E}^{AS}\, T) (\sigma, \bar \sigma,
\bar{\bar \sigma})  = ( d_{\bar E}^A \, T^A) (\sigma, \bar \sigma,
\bar{\bar \sigma}) + ( d_{\bar E}^S\, T^S) (\sigma, \bar \sigma,
\bar{\bar \sigma}) \end{equation} where $$T^A (\sigma, \bar
\sigma) = \frac{1}{2}(T(\sigma, \bar \sigma) - T(\bar \sigma,
\sigma)) \quad \mbox{and} \quad T^S (\sigma, \bar \sigma) =
\frac{1}{2}(T(\sigma, \bar \sigma) + T(\bar \sigma, \sigma)), $$
for $\sigma, \bar \sigma, \bar{\bar \sigma} \in \Gamma(\tau_{\bar
E})$. Note that $T^A$ and $T^S$ are the skew-symmetric and symmetric part of the tensor $T$.

\begin{theo}\label{Teocloded}
Consider an algebroid structure $(B_E, \rho^l_E, \rho^r_E)$ over
the vector bundle $\tau_E: E \rightarrow Q$ and the algebroid
structure $(B_{\TT_l^EE^*}, \rho^l_{\TT_l^EE^*},
\rho^r_{\TT_l^EE^*})$ induced over the vector bundle
$\tau_{\TT_l^EE^*}: {\TT_l^EE^*} \rightarrow E^*$ as in
(\ref{Anclas}) and (\ref{BracketEtE*}). If the (1,3)-tensor $R$ in
equation (\ref{BracketEtE*}) verifies the following relations
\begin{equation}R(\sigma, \bar \sigma) \bar{\bar \sigma} = -
R(\bar \sigma, \sigma)\bar{\bar \sigma} \label{CurvSkew1}
\end{equation} and
\begin{equation} R(\sigma, \bar \sigma) \bar{\bar \sigma} + R(\bar \sigma, \bar{\bar \sigma}) \sigma + R(\bar{\bar \sigma}, \sigma) \bar \sigma =0 \qquad \hbox{(first Bianchi identity)},\label{CurvBianchi1} \end{equation}
 for all $\sigma, \bar \sigma, \bar{\bar \sigma}\in \Gamma({E})$,
 then the exact symplectic section $\Omega_{\TT_l^EE^*}$ is closed, that is $$d_{\TT_l^EE^*}^{A}\, \Omega_{\TT_l^EE^*} = 0. $$
\end{theo}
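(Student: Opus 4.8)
The plan is to verify $d^A_{\TT_l^EE^*}\Omega_{\TT_l^EE^*}=0$ directly, by testing the skew-symmetric differential against all triples drawn from the local basis $\{(\sigma_\a)^{(D^l){\mathbf h}}_l,(\sigma^\a)^{\mathbf v}_l\}$ of $\Gamma(\tau_{\TT_l^EE^*})$. Since $d^A_{\TT_l^EE^*}\Omega_{\TT_l^EE^*}$ is a skew-symmetric $(0,3)$-tensor, it suffices by $C^\infty$-linearity to compute it on the four types of triples: (i) three horizontal lifts, (ii) two horizontal and one vertical lift, (iii) one horizontal and two vertical lifts, (iv) three vertical lifts. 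First I would record the data I already have: from (\ref{OmegaEtE*}) the only nonvanishing components of $\Omega_{\TT_l^EE^*}$ are $\Omega_{\TT_l^EE^*}(\sigma^{(D^l){\mathbf h}}_l,\nu^{\mathbf v}_l)=\nu(\sigma)\circ\tau_{E^*}$ and its skew counterpart; from (\ref{BracketEtE*}) the brackets of basis lifts; and from (\ref{Anclas}) the anchors. Because $\Omega_{\TT_l^EE^*}$ is already skew, $\Omega^A=\Omega_{\TT_l^EE^*}$ and $\Omega^S=0$, so only the skew-symmetric differential $d^A$ with the averaged anchor $\rho^A_{\TT_l^EE^*}=\frac12(\rho^l_{\TT_l^EE^*}+\rho^r_{\TT_l^EE^*})$ and the averaged bracket $B^A_{\TT_l^EE^*}$ is in play.

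Next I would dispose of the easy cases. Case (iv) is immediate: all anchor and bracket terms feed $\Omega_{\TT_l^EE^*}$ a pair of vertical arguments, and $\Omega_{\TT_l^EE^*}(\kappa^{\mathbf v}_l,\nu^{\mathbf v}_l)=0$. Case (iii) involves $\Omega_{\TT_l^EE^*}(\sigma^{(D^l){\mathbf h}}_l,\nu^{\mathbf v}_l)=\nu(\sigma)\circ\tau_{E^*}$, which is a basic (pulled-back) function; the $\rho^A$-derivative terms then vanish because a vertical anchor annihilates $\tau_{E^*}$-basic functions while the horizontal anchor is $\tau_{E^*}$-projectable, and one checks the two surviving contributions cancel against the bracket terms $B^A_{\TT_l^EE^*}(\sigma^{(D^l){\mathbf h}}_l,\nu^{\mathbf v}_l)=\frac12(D^l_\sigma\nu-D^r_\sigma\nu)^{\mathbf v}_l$ — here the fact that $D^l-D^r$ measures the failure of the pairing to be flat is exactly what is needed, and the cancellation is the Leibniz rule for the connections. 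Case (ii) is similar but a little longer: the genuinely new term is $B^A_{\TT_l^EE^*}(\sigma^{(D^l){\mathbf h}}_l,\bar\sigma^{(D^l){\mathbf h}}_l)$, whose horizontal part $B_E^A(\sigma,\bar\sigma)^{(D^l){\mathbf h}}_l$ pairs with a vertical $\nu^{\mathbf v}_l$ to give $\nu(B_E^A(\sigma,\bar\sigma))$, and whose vertical part $\frac12 R(\sigma,\bar\sigma,\cdot)^{\mathbf v}_l$ contributes $0$ against $\nu^{\mathbf v}_l$; assembling the three cyclic $\rho^A$-derivative terms and using the definition (\ref{Difflr}) of $d^{lr}$ together with $\Omega_{\TT_l^EE^*}=-d^{lr}\lambda_{\TT_l^EE^*}$ makes everything collapse.

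The main obstacle is Case (i), three horizontal lifts. Writing it out, the derivative terms all vanish since $\Omega_{\TT_l^EE^*}$ on a pair of horizontal lifts is zero by (\ref{OmegaEtE*}), so $d^A\Omega_{\TT_l^EE^*}$ on $(\sigma^{(D^l){\mathbf h}}_l,\bar\sigma^{(D^l){\mathbf h}}_l,\bar{\bar\sigma}^{(D^l){\mathbf h}}_l)$ reduces to the three cyclic bracket terms $-\Omega_{\TT_l^EE^*}(B^A_{\TT_l^EE^*}(\sigma^{(D^l){\mathbf h}}_l,\bar\sigma^{(D^l){\mathbf h}}_l),\bar{\bar\sigma}^{(D^l){\mathbf h}}_l)+\dots$. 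Now $B^A_{\TT_l^EE^*}(\sigma^{(D^l){\mathbf h}}_l,\bar\sigma^{(D^l){\mathbf h}}_l)=B_E^A(\sigma,\bar\sigma)^{(D^l){\mathbf h}}_l+\frac12 R(\sigma,\bar\sigma,\cdot)^{\mathbf v}_l$, and pairing this with the horizontal $\bar{\bar\sigma}^{(D^l){\mathbf h}}_l$ under $\Omega_{\TT_l^EE^*}$ kills the horizontal part and leaves $-\frac12\Omega_{\TT_l^EE^*}(R(\sigma,\bar\sigma,\cdot)^{\mathbf v}_l,\bar{\bar\sigma}^{(D^l){\mathbf h}}_l)=\frac12\langle R(\sigma,\bar\sigma)\bar{\bar\sigma}\rangle$-type term — more precisely $+\frac12$ of the pairing of $R(\sigma,\bar\sigma)(\cdot)$ with $\bar{\bar\sigma}$, i.e. $\frac12\,R(\sigma,\bar\sigma,\bar{\bar\sigma})$ read through the duality. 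Summing the three cyclic terms then gives, up to the overall constant, exactly $R(\sigma,\bar\sigma)\bar{\bar\sigma}+R(\bar\sigma,\bar{\bar\sigma})\sigma+R(\bar{\bar\sigma},\sigma)\bar\sigma$ — but one must be careful that $B_E^A$ is the skew-symmetrized bracket, so the hypothesis (\ref{CurvSkew1}) that $R$ is skew in its first two slots is what legitimately lets us replace $R$ by its "curvature-like" part consistently with $B^A_E$, and then (\ref{CurvBianchi1}), the first Bianchi identity for $R$, makes the cyclic sum vanish. I would close by remarking that this is precisely the pattern of (\ref{CurvSkew}) and (\ref{CurvBianchi}) in the Lie-algebroid case, where $R=R^{\nabla^\GG}$ automatically satisfies both hypotheses, so Theorem~\ref{Teocloded} recovers the closedness of the canonical symplectic section on $\TT^EE^*$.
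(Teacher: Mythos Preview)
Your approach is essentially the paper's: write down the induced skew-symmetric algebroid $(B^A_{\TT_l^EE^*},\rho^A_{\TT_l^EE^*})$ on $\TT_l^EE^*$, then evaluate $d^A_{\TT_l^EE^*}\Omega_{\TT_l^EE^*}$ on the four types of basis triples; the three-horizontal case produces the cyclic sum $\widehat{R(\sigma_\a,\sigma_\beta)\sigma_\gamma}+\widehat{R(\sigma_\beta,\sigma_\gamma)\sigma_\a}+\widehat{R(\sigma_\gamma,\sigma_\a)\sigma_\beta}$ which vanishes by (\ref{CurvBianchi1}), and the two-horizontal-one-vertical case reduces to the identity $B_E^A(\sigma_\a,\sigma_\beta)=\tfrac12(D^l_{\sigma_\a}\sigma_\beta-D^l_{\sigma_\beta}\sigma_\a+D^r_{\sigma_\a}\sigma_\beta-D^r_{\sigma_\beta}\sigma_\a)$, exactly as the paper records.

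Two small computational slips to fix. First, from (\ref{BracketEtE*}) one has $B_{\TT_l^EE^*}(\kappa^{\mathbf v}_l,\sigma^{(D^l){\mathbf h}}_l)=-(D^r_\sigma\kappa)^{\mathbf v}_l$, so the skew-symmetrization gives
\[
B^A_{\TT_l^EE^*}(\sigma^{(D^l){\mathbf h}}_l,\nu^{\mathbf v}_l)=\tfrac12\bigl(D^l_\sigma\nu+D^r_\sigma\nu\bigr)^{\mathbf v}_l,
\]
with a plus, not a minus. Second, the vertical part of $B^A_{\TT_l^EE^*}(\sigma^{(D^l){\mathbf h}}_l,\bar\sigma^{(D^l){\mathbf h}}_l)$ is $\tfrac12\bigl(R(\sigma,\bar\sigma,\cdot)-R(\bar\sigma,\sigma,\cdot)\bigr)^{\mathbf v}_l$, which under hypothesis (\ref{CurvSkew1}) equals $R(\sigma,\bar\sigma,\cdot)^{\mathbf v}_l$ with no $\tfrac12$; your $\tfrac12 R(\sigma,\bar\sigma,\cdot)^{\mathbf v}_l$ matches neither expression. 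Neither slip is fatal---case (iii) is in fact trivially zero on basis elements (the pairings $\sigma^\gamma(\sigma_\a)=\delta^\gamma_\a$ are constants and all bracket terms land in vertical-vertical pairs), so the sign never bites; and in case (i) a spurious overall $\tfrac12$ would still leave the Bianchi sum vanishing---but you should correct them so the intermediate formulas agree with the paper's (\ref{BracketA}) and (\ref{dAHoriz}).
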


\begin{proof}
First, we are going to build the skew-symmetric algebroid induced on $\tau_{\TT_l^EE^*} : \TT_l^EE^* \rightarrow E^*$. The anchor map of this skew-symmetric algebroid is given by
\begin{equation}
\begin{array}{l}
\rho^A_{\TT_l^EE^*} (\sigma^{(D^l){\mathbf h}}_l) = \frac{1}{2} (\sigma^{(D^l){\mathbf h}} + \sigma^{(D^r){\mathbf h}})\\ \vspace{-0.3cm} \\
\rho^A_{\TT_l^EE^*}(\kappa_l^{\mathbf v}) = \kappa^{\mathbf v}.
\end{array} \end{equation}
and the skew-symmetric bracket is
\begin{equation}
\begin{array}{l}
B^A_{\TT_l^EE^*} (\sigma^{(D^l){\mathbf h}}_l , \bar
\sigma^{(D^l){\mathbf h}}_l) = B^A_E( \sigma , \bar
\sigma)^{(D^l){\mathbf h}}_l + R(\sigma, \bar \sigma,
\cdot)^{\mathbf v}_l  \\ \vspace{-0.3cm} \\
B^A_{\TT_l^EE^*} (\sigma^{(D^l){\mathbf h}}_l , \kappa_l^{\mathbf v}) = \frac{1}{2} (D^l_{\sigma}  \kappa + D^r_{\sigma}  \kappa)^{\mathbf v}_l  = -
B^A_{\TT_l^EE^*} (\kappa_l^{\mathbf v}, \sigma^{(D^l){\mathbf h}}_l)  \\ \vspace{-0.3cm} \\
B^A_{\TT_l^EE^*}(\kappa_l^{\mathbf v}, \nu_l^{\mathbf v}) = 0
\end{array} \label{BracketA}
\end{equation}  for $\sigma, \bar \sigma \in \Gamma(\tau_E)$ and $\kappa, \nu \in \Gamma(\tau_{E^*})$.

To obtain the first equation of (\ref{BracketA}) we used the fact that the tensor $R$ in (\ref{BracketEtE*}) verifies the skew-symmetric property: $R(\sigma,\bar \sigma)=-R(\bar \sigma,\sigma)$.

In order to compute the skew-symmetric differential of the tensor $\Omega_{\TT_l^EE^*}$, consider a local basis $\{\sigma_\a\}$ of $\Gamma(\tau_E)$ and the corresponding dual basis $\{\sigma^{\a}\}$ of $\Gamma(\tau_{E^*})$.
A straightforward computation gives that
\begin{eqnarray}
(d_{\TT_l^EE^*}^{A} \Omega_{\TT_l^EE^*}) \left( (\sigma_\a)^{(D^l){\mathbf h}}_l ,
(\sigma_\beta)^{(D^l){\mathbf h}}_l, (\sigma_\gamma)^{(D^l){\mathbf h}}_l \right)
& = & \widehat{R(\sigma_\a, \sigma_\beta) \sigma_\gamma} + \widehat{R(\sigma_\beta,
\sigma_\gamma) \sigma_\a} \nonumber \\ && + \widehat{R(\sigma_\gamma, \sigma_\a) \sigma_\beta} =  0, \label{dAHoriz} \\
(d_{\TT_l^EE^*}^{A}\Omega_{\TT_l^EE^*}) \left( (\sigma_\a)^{(D^l){\mathbf h}}_l ,
(\sigma_\beta)^{(D^l){\mathbf h}}_l,(\sigma^\gamma)_l^{\mathbf v}\right) & = & 0, \label{dAHorizVert} \\
(d_{\TT_l^EE^*}^{A}\Omega_{\TT_l^EE^*}) \left( (\sigma_\a)^{(D^l){\mathbf h}}_l ,
(\sigma^\beta)_l^{\mathbf v}, (\sigma^\gamma)_l^{\mathbf v} \right) &=&
0, \nonumber\\
(d_{\TT_l^EE^*}^{A}\Omega_{\TT_l^EE^*}) \left(
(\sigma^\a)_l^{\mathbf v} , (\sigma^\beta)_l^{\mathbf v},
(\sigma^\gamma)_l^{\mathbf v} \right) &=& 0, \nonumber
\end{eqnarray}
where for the proof of (\ref{dAHorizVert}) we have used the fact that the bracket $B_E^A$ can be written as $$B_E^A (\sigma_\a , \sigma_\beta) = \frac{1}{2} \left( D^l_{\sigma_\a} \sigma_\beta - D^l_{\sigma_\beta}\sigma_\a + D^r_{\sigma_\a} \sigma_\beta - D^r_{\sigma_\beta} \sigma_\a \right).$$ Since the (0,3)-tensor $d_{\TT_l^EE^*}^{A}\Omega_{\TT_l^EE^*}$ is skew-symmetric the proof is complete.

\end{proof}

Natural choices of a $(1,3)$-tensor field $R$ verifying properties
(\ref{CurvSkew1}) and  (\ref{CurvBianchi1}) are $R\equiv 0$ and
the curvature $R=R^{\nabla^{\GG}}$ of a Levi-Civita connection in
the case when $E$ is a Lie algebroid with a bundle metric $\GG$
(see (\ref{CurvSkew}) and (\ref{CurvBianchi})). From the last
case, it is possible to construct new direct examples of a tensor
field of type $(1,3)$ satisfying (\ref{CurvSkew1}) and
(\ref{CurvBianchi1}). Consider a
 Lie algebroid $\tau_E: E\to Q$ with Lie algebroid
structure $(\lcf \cdot, \cdot \rcf_{E}, \rho_{E})$ and
$R_E=R^{\nabla^{\GG}}$ the curvature of the Levi-Civita connection
associated to a bundle metric $\GG$. Take now
 a vector subbundle $\tau_D: D\to Q$ of $E$, $i_{D}: D\to E$ being the canonical inclusion, equipped with an algebroid structure
 $(B_{D}, \rho^l_{D}, \rho^r_{D})$
and an arbitrary vector bundle morphism $F: E\to D$. Then, we may
construct the $(1,3)$-tensor field:
\[
R_D(\sigma, \bar \sigma) \bar{\bar
\sigma}=F(R^{\nabla^{\GG}}(i_D\circ\sigma, i_D\circ\bar \sigma)
(i_D\circ \bar{\bar \sigma}))
\]
for all $\sigma, \bar \sigma, \bar{\bar \sigma}\in \Gamma(D)$. It
follows that $R_D$ satisfies both conditions (\ref{CurvSkew1})
and  (\ref{CurvBianchi1}). Observe, for instance,  that it is
precisely  the case of nonholonomic mechanics discussed in
Subsection \ref{qaz}, where now $F$ is the orthogonal projector
$P$.

\begin{remark} {\rm
There is a natural extension of symmetric and skew-symmetric
differentials on tensors of type $(0,k)$. That is,
 on $\Psi^A \in \Gamma(\tau_{\bigwedge^k\bar E^*})$
 the {\it skew-symmetric differential} $d_{\bar E} ^A$ of $\Psi^A$
 is a section of $\tau_{\bigwedge^{k+1}\bar E^*} : \bigwedge^{k+1}
\bar E^* \rightarrow Q$ defined by
\begin{eqnarray*} (d_{\bar E}^A \, \Psi^A) (\sigma_0,\sigma_1,..., \sigma_k) &=&
\sum_{i=1}^k (-1)^i \rho_{\bar E}^A(\sigma_i)(\Psi^A (\sigma_0,..., \widehat \sigma_i,..., \sigma_k))
\\ && + \sum_{i<j} (-1)^{i+j} \Psi^A(B_{\bar E}^A(\sigma_i, \sigma_j),
\sigma_0,\sigma_1, ..., \widehat \sigma_i,...,\widehat
\sigma_j,...,\sigma_k)\end{eqnarray*} and on a symmetric tensor
$\Psi^S$ in $\Gamma(\tau_{\otimes_k^0\bar E^*})$ the {\it
symmetric differential} $d_{\bar E}^S\Psi^S$ of $\Psi^S$ is the
symmetric tensor in $\Gamma(\tau_{\otimes_{k+1}^0\bar E^*})$
defined by
\begin{eqnarray*} (d_{\bar E}^S \, \Psi^S) (\sigma_0,\sigma_1,..., \sigma_k) &=&
\sum_{i=1}^k  \rho_{\bar E}^S(\sigma_i)(\Psi^S (\sigma_0,..., \widehat \sigma_i,..., \sigma_k))
\\ && - \sum_{i<j} \Psi^S (B_{\bar E}^S(\sigma_i, \sigma_j),
\sigma_0,\sigma_1, ..., \widehat \sigma_i,...,\widehat \sigma_j,...,\sigma_k),
\end{eqnarray*}
for $\sigma_0,\sigma_1,..., \sigma_k \in \Gamma(\tau_{\bar E})$.

Note that
\[
d_{\bar{E}}^A(\Psi^A \wedge \mu^A) = d_{\bar{E}}^A \Psi^A \wedge
\mu^A + (-1)^k \Psi^A \wedge d_{\bar{E}}^A\mu^A, \; \;
d_{\bar{E}}^S(\Psi^S \vee \mu^S) = d_{\bar{E}}^S \Psi^S \vee \mu^S
+ \Psi^S \vee d_{\bar{E}}^S\mu^S,
\]
for $\Psi^A \in \Gamma(\tau_{\Lambda^k\bar{E}^*})$, $\mu^A \in
\Gamma(\tau_{\Lambda^l\bar{E}^*})$, $\Psi^S \in
\Gamma(\tau_{\otimes^0_k\bar{E}^*})$, $\mu^S \in
\Gamma(\tau_{\otimes^0_l\bar{E}^*})$, with $\Psi^S$ and $\mu^S$
symmetric tensors and $\vee$ being the symmetric product. On the
other hand, if $\alpha, \beta \in \Gamma(\tau_{\bar{E}^*})$ we
have that
\[
\alpha \otimes \beta = \displaystyle \frac{1}{2} (\alpha \wedge
\beta + \alpha \vee \beta)
\]
and thus
\[
d^{AS}_{\bar{E}}(\alpha \otimes \beta) = \displaystyle \frac{1}{2}
(d_{\bar{E}}^A \alpha \wedge \beta - \alpha \wedge
d_{\bar{E}}^A\beta)+ \frac{1}{2} (d_{\bar{E}}^S\alpha \vee \beta +
\alpha \vee d_{\bar{E}}^S\beta).
\]

}
\end{remark}

\begin{remark}{\rm
\begin{enumerate}
\item
The {\it skew-symmetric differential} was defined in \cite{LeMaMa}
as the {\it almost differential} on an {\it almost Lie algebroid}.
Note that   $(d^A_{\bar E} )^2 = 0 $ if and only if $(B_{\bar{E}}^A,
\rho_{\bar{E}}^A)$ is a Lie algebroid structure on $\tau_{\bar
E}:\bar E \rightarrow Q$.

\item
Let $\bar{E}$ be the tangent bundle of the manifold $Q$ and
$\nabla$ be a linear connection on $Q$. Then, $(B_{TQ}^{\nabla},
id_{TQ}, -id_{TQ})$ is a symmetric algebroid structure on $TQ$,
where
\[
B_{TQ}^{\nabla}(X, Y) = \nabla_{X}Y + \nabla_{Y}X, \; \; \mbox{
for } X, Y \in {\frak X}(Q).
\]
Moreover, the corresponding symmetric differential $d_{TQ}^S$ was
considered in \cite{HeBoPe}. In fact, in \cite{HeBoPe} using the
symmetric differential and the symmetric Lie derivative, the
derivations of the algebra of symmetric tensors are classified and
the Fr\"olicher-Nijenhuis bracket for vector valued symmetric
tensors is introduced. This theory is the symmetric counterpart of
the theory of vector valued differential forms which was developed
by Fr\"olicher-Nijenhuis \cite{FroNi}.

\end{enumerate}
}
\end{remark}

\medskip

\section{Examples revisited}
\setcounter{equation}{0}

\subsection{The symmetric case: Gradient extension of dynamical systems}
(See Subsection \ref{grad-ext}).

In this case, we have a Riemannian manifold $(Q, {\mathcal G})$
and the vector bundle $\tau_{TQ}: TQ \rightarrow Q$ endowed with
the symmetric product
\[
B_{TQ}(X, Y) = \nabla^{\mathcal G}_{X}Y + \nabla^{\mathcal
G}_{Y}X, \; \; \; \mbox{ for } X, Y \in {\mathfrak X}(Q).
\]
The anchor maps are $\rho_{TQ}^l = id_{TQ}$ and $\rho_{TQ}^r =
-id_{TQ}$. Thus,
\[
B_{TQ}(X, Y) = D^{l}_{X}Y - D^{r}_{X}Y
\]
where $D^l$ (respectively, $D^r$) is the $\rho_{TQ}^l$-connection
(respectively, the $\rho_{TQ}^r$-connection) defined by
\[
D^l_XY = \nabla^{\mathcal G}_{X}Y
\]
(respectively, $D^r_XY = -\nabla_{X}^{\mathcal G}Y$). Moreover, it
is easy to prove that the $TQ$-tangent bundle to $T^*Q$,
${\mathcal T}_{l}^{TQ}T^*Q$, may be identified with the vector
bundle $\tau_{T(T^*Q)}: T(T^*Q) \rightarrow T^*Q$. Under this
identification, we have that (see (\ref{SigmaalpDh}))
\begin{equation}\label{34'}
\left( \frac{\partial}{\partial q^i}\right)^{D^l{\mathbf h}}_{l} =
\frac{\partial}{\partial q^i} + \Gamma_{ij}^k p_{k}
\frac{\partial}{\partial p_{j}}, \quad \left(
\frac{\partial}{\partial q^i}\right)^{D^r{\mathbf h}}_{l} =
-\frac{\partial}{\partial q^i} - \Gamma_{ij}^k p_{k}
\frac{\partial}{\partial p_{j}}, \quad (dq^i)^{\mathbf v}_{l} =
\frac{\partial}{\partial p_{i}},
\end{equation}
where $(q^i, p_i)$ are fibred coordinates on $T^*Q$ and
$\Gamma_{ij}^k$ are the Christoffel symbols of the Levi-Civita
connection $\nabla^{\mathcal G}$. Therefore, using
(\ref{Explocalsym}), we deduce that the exact symplectic structure
$\Omega_{T(T^*Q)}$ is just the canonical symplectic structure of
$T^*Q$
\begin{equation}\label{34''}
\Omega_{T(T^*Q)} = dq^i \wedge dp_{i}
\end{equation}
(note that $\Gamma_{ij}^k = \Gamma_{ji}^k$).

On the other hand, from (\ref{Anclas}) and (\ref{34'}), it follows
that
\[
 \rho^r_{T(T^*Q)}\left(\frac{\partial}{\partial q^i}\right) = - \frac{\partial}{\partial
q^i} -2 \Gamma_{ij}^k p_k \frac{\partial}{\partial p_j}, \; \; \;
\rho^r_{T(T^*Q)}\left( \frac{\partial}{\partial p_i}\right) =
\frac{\partial}{\partial p_i}.
\]
Consequently, if $H \in C^{\infty}(T^*Q)$ we obtain that
\[d^r_{T(T^*Q)} H = \left( - \frac{\partial H}{\partial q^i} -2
\Gamma_{ij}^k p_k \frac{\partial H}{\partial p_j} \right) dq^i +
\frac{\partial H}{\partial p_j} \, dp_j
\]
which implies that the right-Hamiltonian section of $H$ is the
vector field on $T^*Q$ given by
\[
{\mathcal H}_{H}^{(\Omega_{T(T^*Q)}, r)} = \displaystyle
\frac{\partial H}{\partial p_{i}} \frac{\partial}{\partial q^i} +
\left(\frac{\partial H}{\partial q^i} + 2 \Gamma_{ij}^{k} p_{k}
\frac{\partial H}{\partial p_{i}}\right)\frac{\partial}{\partial
p_{i}}.
\]
Thus, if we apply the above construction to the Hamiltonian
function $H = H_{X} = \hat{X}$, with $X \in {\mathfrak X}(Q)$, we
reobtain the Hamilton equations (\ref{HamilGrad}).

\subsection{Skew-symmetric mechanics: Nonholonomic systems}
 (See Subsection \ref{qaz}).

Consider $(\mathfrak{g}, [\cdot , \cdot ]_{\mathfrak{g}})$ a Lie
algebra of finite dimension. In this case, the Lie bracket is the
Lie algebra structure $[\cdot , \cdot ]_{\mathfrak{g}}$ and the
anchor map is the null map.

Consider now a  nonholonomic mechanical system on $\mathfrak{g}$,
that is a vector subspace $\mathfrak{d} \subset \mathfrak{g}$ of
kinematic constraints ($\mathfrak{d}$ is not, in general, a Lie
subalgebra) and a lagrangian function $L:\mathfrak{g} \rightarrow
\R$ of mechanical type induced by a scalar product ${\mathcal G}$
on $\mathfrak{g}$. As we did in Example 2.2 we assert that
$(\mathfrak{d} , [\cdot, \cdot]_{\mathfrak{d}},0)$ is a
skew-symmetric algebroid with the bracket given by $[\xi,
\eta]_{\mathfrak{d}} = P([ i_{\mathfrak{d}} (\xi) ,
i_{\mathfrak{d}}(\eta) ]_{\mathfrak{g}})$ (see Subsection
\ref{qaz}).

In what follows we are going to use the formalism in $\TT^{\dd}
\dd^*$  proposed in Section \ref{section-main} to find an exact
symplectic form and the corresponding Hamilton equations.

Let us consider a basis $\{\xi_a \}$ of $\dd$ and $\{\xi^a\}$ the
dual basis of $\dd^*$.

In this case,  we choose the 0-connection $D= D^l = D^r$ to be
$D_{\xi_a} \xi_b = \frac{1}{2} [\xi_a , \xi_b ]_{\dd}$ and thus
$\Gamma_{ab}^c = \frac{1}{2} c_{ab}^c$ where $c_{ab}^c$ are the
structure constants of the skew-symmetric algebroid $(\mathfrak{d}
, [\cdot, \cdot]_{\mathfrak{d}},0)$.

Now, it is easy to prove that $\TT^{\dd}\dd^*$ may be identified
with $\dd^* \times \dd \times \dd^*$ and, under this
identification, the vector bundle projection
$$\tau_{\TT^{\dd}\dd^*} : \TT^{\dd}\dd^* \rightarrow \dd^*$$ is
just the canonical projection on the first factor
\[
pr_{1}: \dd^* \times \dd \times \dd^* \rightarrow \dd^*.
\]

Since $\dd$ satisfies the hypotheses of Remark \ref{RemBracket2},
it is easy to see that a suitable structure of skew-symmetric
algebroid on  $\TT^{\dd}\dd^* \simeq \dd^* \times \dd \times \dd^*
\to \dd^*$ is determined by the following relations:
$$B_{\TT^{\dd}\dd^*} \left( (\cdot, \sigma,
\upsilon), (\cdot, \sigma', \upsilon') \right)(\kappa) = (\kappa,
[\sigma, \sigma']_{\dd}, 0)$$ for $\kappa \in \dd^*$, $(\sigma,
\upsilon), (\sigma', \upsilon') \in \dd \times \dd^*$ and
$$\rho_{\TT^{\dd}\dd^*}^l (\kappa, \sigma, \upsilon) = (\kappa,
\upsilon).$$
A straightforward computation shows that
$$\Omega_{\TT^{\dd}\dd^*} \left( (\kappa, \sigma, \upsilon),
(\kappa, \sigma', \upsilon') \right) =  \upsilon'(\sigma) -
\upsilon(\sigma') - \kappa\left( [\sigma, \sigma']_{\dd}\right)$$
for $(\kappa, \sigma, \upsilon), (\kappa, \sigma', \upsilon') \in
\dd^* \times \dd \times \dd^*$.

Now, we consider the basis $\{E_{a}, \tilde{E}^a\}$ of
$\Gamma(\tau_{\TT^{\dd}\dd^*})$ defined as
\begin{eqnarray*} E_a &=& (\cdot
, \xi_a, 0) \ \ \ \mbox{such that} \ \ \ E_a(\kappa) = (\kappa, \xi_a,0)\\
\tilde E^a &=& (\cdot , 0, \xi^a) \ \ \ \mbox{such that} \ \ \
\tilde E^a(\kappa) = (\kappa , 0, \xi^a).
\end{eqnarray*}
Then, we obtain
$$B_{\TT^{\dd}\dd^*} (E_a, E_b) = c_{ab}^c E_c$$
$$B_{\TT^{\dd}\dd^*} (E_a, \tilde E^b) = B_{\TT^{\dd}\dd^*} (\tilde E^a, E_b)
= B_{\TT^{\dd}\dd^*} (\tilde E^a, \tilde E^b) = 0$$ and the anchor
map is $$\rho_{\TT^{\dd}\dd^*} (E_a) = 0 \ \mbox{    and    } \
\rho_{\TT^{\dd}\dd^*} (\tilde E^b) = \xi^b.$$ Thus we conclude
that
$$\Omega_{\TT^{\dd}\dd^*} = -\frac{1}{2} c_{ab}^c p_c E^a \wedge E^b
+ E^a \wedge \tilde E_a$$ where $\{E^a, \tilde E_b\}$ is the dual
basis induced by $\{E_a, \tilde E^b\}$ and $p_a$ are the
coordinates in $\dd^*$ induced by $\xi^a$. Moreover,
$$d_{\TT^{\dd}\dd^*}H(E_a) = 0 \ \mbox{    and    } \ d_{\TT^{\dd}\dd^*}H(\tilde E^a) =
\frac{\partial H}{\partial p_a}.$$

Therefore, we have that the unique solution of Equation
\begin{equation}\label{symplectic-equation-Lie}
i_X \Omega_{\TT^{\dd}\dd^*}=d_{\TT^{\dd}\dd^*} H\, ,
\end{equation}
is
\[
\HH^{\Omega_{\TT^{\dd}\dd^*}}_H=\frac{\partial H}{\partial p_a}E_a+c_{ab}^cp_c\frac{\partial H}{\partial p_b}\tilde{E}^a
\]
Now, using the anchor map we obtain that the corresponding
Hamiltonian vector field on $\dd^*$:
 $$\HH^{\Pi_{\dd^*}}_H =c_{ab}^cp_c\frac{\partial H}{\partial p_b}\xi^a= \rho_{\TT^{\dd}\dd^*} (\HH^{
\Omega_{\TT^{\dd}\dd^*}}_H).
$$

Its integral curves are precisely the \emph{nonholonomic
Lie-Poisson equations} (see \cite{CdLMM2007} and references
therein)
$$\dot p_a = c_{ab}^c p_c \frac{\partial H}{\partial p_b},$$ that is,
using a classical notation,
$$\dot \kappa = ad^{\dd^*}_{\ \frac{\partial H}{\partial \kappa}}
\kappa,
\; \; \mbox{ for } \kappa \in \dd^*
$$
where $ad^{\dd^*} : \dd \times \dd^* \rightarrow
\dd^*$ is the map defined as $(ad^{\dd^*}_{\xi} (\kappa)) (\eta) =
\kappa([\xi, \eta]_{\dd})$ for $\xi, \eta \in \dd$ and $\kappa \in
\dd^*$. Note that if $\dd = \mathfrak{g}$ then $ad^{\dd^*} = ad^*$
is the infinitesimal coadjoint representation.

\subsection{Mixed mechanics: }
\ \

\noindent 6.3.1 {\bf Generalized nonholonomic mechanics on a Lie
algebra}. (See Subsection 2.3.1).

As in the previous example, consider a Lie algebra $(\mathfrak{g},
[\cdot , \cdot ]_{\mathfrak{g}})$ of finite dimension, a subspace
$\mathfrak{d} \subset \mathfrak{g}$ and a lagrangian $L:\mathfrak
g \rightarrow \R$ of mechanical type induced by a scalar product
${\mathcal G}$ on $\mathfrak g$. Since we are considering a
generalized nonholonomic system, $\dd$ is endowed with an
algebroid structure $(\dd, B_{\dd}, 0,0)$ given by
(\ref{GNHbracket}), (\ref{GNHanchorl}) and (\ref{GNHanchorr}), (in
this case, the anchors are zero but the bracket is not necessarily
skew-symmetric). In fact,
$$B_{\dd}(\sigma, \sigma') = P[ \sigma,
\Pi(\sigma')]_{\mathfrak g}$$ for $\sigma, \sigma' \in \dd$ and $P
: \mathfrak g = \dd \oplus \dd^{\perp} \rightarrow \dd \subseteq
\mathfrak g$, $\Pi: \mathfrak g = \tilde \dd \oplus \dd^{\perp}
\rightarrow \tilde \dd \subseteq \mathfrak g$ the corresponding
projectors.

As in the previous example, the space $\TT^{\dd}\dd^*$ may be
identifiaed with the product $\dd^* \times \dd \times \dd^*$ and,
under this identification, the vector bundle projection is the
canonical projection on the first factor
$$pr_1 :\dd^* \times \dd \times \dd^*
\rightarrow \dd^*.$$ Then, we obtain that a suitable bracket on
$\Gamma(\TT^{\dd}\dd^*)$ has the following form
$$B_{\TT^{\dd}\dd^*} \left( (\cdot, \sigma, \upsilon), (\cdot,
\sigma', \upsilon') \right)(\kappa) = (\kappa, B_{\dd}(\sigma,
\sigma'), R(\kappa, \sigma, \upsilon, \sigma', \upsilon')),$$ for
$\kappa \in \dd^*$, $(\sigma, \upsilon), (\sigma', \upsilon') \in
\dd^*$, with $R(\kappa, \sigma, \upsilon, \sigma', \upsilon') \in
\dd^*$. The anchor maps, in this case, are
\begin{eqnarray*} \rho_{\TT^{\dd}\dd^*}^l (\kappa, \sigma, \upsilon)
&=&(\kappa, \upsilon)\\
\rho_{\TT^{\dd}\dd^*}^r (\kappa, \sigma, \upsilon) &=&
\left(\kappa, \upsilon + i_{\dd}^*(\Pi^*({\nabla}^{\GG}_{\sigma}
P^*\kappa ) - {\nabla}^{\GG}_{\Pi(\sigma)} P^*\kappa) \right)
\end{eqnarray*}
where $i_{\dd}: \dd \to \mathfrak g$ is the canonical inclusion
and $\nabla^{\mathcal G}$ is the Levi-Civita connection of the
scalar product ${\mathcal G}$ on $\mathfrak g$. Thus,
$$\Omega_{\TT^{\dd}\dd^*} \left( (\kappa, \sigma, \upsilon),
(\kappa, \sigma', \upsilon') \right) = \upsilon'(\sigma) -
\upsilon(\sigma') - \kappa\left( P({\nabla}^{\GG}_{\sigma'}
\Pi(\sigma) - {\nabla}^{\GG}_{\sigma}\Pi(\sigma') )\right)$$ for
$(\kappa, \sigma, \upsilon), (\kappa, \sigma', \upsilon') \in
\dd^* \times \dd \times \dd^*$.

Considering the same basis as in the previous example $\{E_a, \tilde
E^b\}$ (but with $\{\xi_a, \xi_A\}$ an adapted basis to $\dd \oplus
\tilde \dd^{\perp}$)  and the dual basis $\{E^a, \tilde E_b\}$ we
get that the left and right anchor maps are, respectively
$$\rho^l_{\TT^{\dd}\dd^*} (E_a) = 0 \ \mbox{    and    } \
\rho^l_{\TT^{\dd}\dd^*} (\tilde E^b) = \xi^b,$$
$$\rho^r_{\TT^{\dd}\dd^*} (E_a) = -\Xi_{ab}^c p_c \xi^b  \ \mbox{    and    } \
\rho^r_{\TT^{\dd}\dd^*} (\tilde E^b) = \xi^b,$$ where $\Xi_{ab}^c
= (D^l)_{ab}^c - (D^r)_{ab}^c$, with $(D^l)_{ab}^c$ and $
(D^r)_{ab}^c$ the Christoffel symbols of the left and right
connections for the generalized nonholonomic systems (see Appendix
B).

A similar computation, as in the nonholonomic case, shows that
$$\Omega_{\TT^{\dd}\dd^*} = \frac{1}{2}(- \tilde c_{ab}^c p_c +
\Xi_{ba}^c ) E^a \wedge E^b + E^a \wedge \tilde E_a.$$ Finally by
means of the left-right Hamiltonian vector field of $H$, we get
$$\dot p_a = \tilde c_{ab}^c p_c \frac{\partial H}{\partial p_b}.$$
Therefore, we obtain \emph{the generalized nonholonomic Lie-Poisson
equations},
$$\dot \kappa = ad^{\dd^*}_{\ \frac{\partial H}{\partial \kappa}} \kappa$$
for $\kappa \in \dd^*$ and where $ad^{\dd^*} : \dd \times \dd^*
\rightarrow \dd^*$ is the map defined as $(ad^{\dd^*}_{\xi}
(\kappa)) (\eta) = \kappa(B_{\dd}(\xi, \eta))$, for $\xi, \eta \in
\dd$ and $\kappa\in \dd^*$.

\ \\

\noindent 6.3.2 {\bf Lagrangian mechanics for modifications of the
standard Lie bracket.}

Let us reconsider the Example in Subsection 2.3.2,  in the case
when the tensor field $T$ is given by $T(X, Y) = S(X, Y) - S(Y,
X)$, for $X, Y \in \mathfrak X(Q)$, where $S$ is the contorsion
tensor field induced by an affine connection $\nabla$. The
horizontal and vertical lifts induced by the connection $\nabla$
give rise to an almost Lie algebroid structure on $TT^*Q\to T^*Q$.
Straightforward computations permit to deduce that
\[
\Omega_{TT^*Q}=dq^i\wedge (dp_i-S_{ji}^kp_k dq^j)\; .
\]
Since
\begin{eqnarray*}
d^r_{TT^*Q}H&=&\frac{\partial H}{\partial q^i}\,
dq^i+\frac{\partial H}{\partial p_i}\, dp_i,
\end{eqnarray*}
then the hamiltonian vector field
\begin{equation}\label{almostsymplectic}
i_{{\mathcal
H}_H^{(\Omega_{T(T^*Q)},lr)}}\Omega_{TT^*Q}=d^r_{TT^*Q}H
\end{equation}
is
\[
{\mathcal H}_{H}^{(\Omega_{T(T^*Q)}, lr)} =\frac{\partial
H}{\partial p_i} \frac{\partial }{\partial q^i}-\left(\frac{\partial
H}{\partial q^i}-(S_{ji}^k-S_{ij}^k)p_k\frac{\partial H}{\partial
p_k}\right) \frac{\partial }{\partial p_i}.
\]
Thus, the integral curves of ${\mathcal H}_{H}^{(\Omega_{T(T^*Q)},
lr)}$ are just the solutions of Eqs.(\ref{H-modif}). Observe that
Equation (\ref{almostsymplectic}) exactly reproduces the almost
symplectic realization of generalized Chaplygin systems (see
\cite{Cort}).

\bigskip

\section{Conclusions and future work}

A symplectic realization of the Hamiltonian dynamics on an
algebroid is derived. In fact, we prove that Hamiltonian systems
on an algebroid can be described by a symplectic equation
constructed in the same way than in the standard one. For this
purpose, the theory of generalized connections on an anchored
vector bundle $\tau_{E}: E \rightarrow Q$ is widely used. In
particular, we used the corresponding theory of horizontal and
vertical lifts of tensor fields on $\tau_{E}: E \rightarrow Q$ to
vector fields on the dual vector bundle $E^*$. Taking into account
that there exists a lot of examples of Hamiltonian systems on an
algebroid (gradient systems, nonholonomic mechanical systems,
generalized nonholonomic mechanical systems,...), the above
results show the ubiquity of the symplectic Hamiltonian equations
in Mechanics.

In this paper, we suppose that the constraints (kinematic or
variational) are linear. It would be interesting to discuss the
more general case when the constraints are not linear and, more
precisely, the case of affine constraints.

Another goal we have proposed is to develop a Klein formalism for
Lagrangian systems on algebroids.

Finally, a different aspect on which we intend to work is a
Hamilton-Jacobi theory for Hamiltonian systems on algebroids.

\bigskip

\appendix

\section*{Appendix A: anchored vector bundles, connections and horizontal (vertical) lifts}
\setcounter{equation}{0}
\newcounter{apendice}
\setcounter{apendice}{0}
\def\theequation{A.\arabic{equation}}
\renewcommand{\thesection}{A}
\label{SubSecLifts}

\begin{definition}\cite{CaLa}\label{defAnchored}
An {anchored vector bundle} is a real vector bundle $\tau_E : E
\rightarrow Q$ over a manifold $Q$ and a vector bundle morphism
$\rho _E : E\rightarrow TQ$. The map $\rho_E : E \rightarrow TQ$
is called the {anchor map} of the anchored vector bundle.
\end{definition}

Now, suppose that $(\tau_E: E \rightarrow Q, \rho_E)$ is an anchored  vector bundle over $Q$ and denote by $\Gamma(\tau_E)$ the space of
$C^{\infty}$-sections of the vector bundle $\tau_E: E \rightarrow Q$.

\begin{definition} \cite{CaLa} \label{defRhoconn}
A {$\rho_E$-connection} on the anchored vector bundle $(\tau_E: E
\rightarrow Q, \rho_E)$
 is a $\R$-bilinear map $D :\Gamma(\tau_E) \times \Gamma(\tau_E) \rightarrow \Gamma(\tau_E)$ such that
\begin{equation} D_{f \sigma} \gamma = f D_{\sigma} \gamma \ \ \ \ \mbox{and} \ \ \ \
D_{\sigma}(g \gamma) = \rho_E(\sigma)(g) \gamma + g D_{\sigma} \gamma \label{Condcon}
\end{equation} for $f \in C^{\infty}(Q)$, and $\sigma, \gamma \in \Gamma(\tau_E)$.
\end{definition}

\begin{remark} \label{remRhoconn}
{\rm Every vector bundle $\tau_E : E \rightarrow Q$ admits a
$\rho_E$-connection. In fact, let $\nabla : {\mathfrak X}(Q)
\times \Gamma(\tau_E) \rightarrow \Gamma(\tau_E)$ be an standard
linear connection on $\tau_E : E \rightarrow Q$. Then, if we
define the map $D: \Gamma(\tau_E) \times \Gamma(\tau_E)
\rightarrow \Gamma(\tau_E)$ as $$ D_{\sigma} \gamma =
\nabla_{\rho_E(\sigma)} \gamma \ \ \ \ \mbox{for} \ \sigma, \gamma
\in \Gamma(\tau_E),$$ it is easy to prove that $D$ is a
$\rho_E$-connection.}

\end{remark}

Let $D$ be a $\rho_{E}$-connection on the anchored vector bundle
$(\tau_{E}: E \rightarrow Q, \rho_{E})$. If $(q^i)$ are local
coordinates on $Q$ and $\{\sigma_{\alpha} \}$ is a local basis of
$\Gamma(\tau_E)$ then
$$ D_{f^{\a} \sigma_{\a}}(g^{\beta} \sigma_{\beta})  = \left(
f^{\a}g^{\beta} D_{\a {\beta}}^{\gamma} + f ^{\a}(\rho_E)^i_{\a}
\frac{\partial g^{\gamma}}{\partial q^i} \right) \sigma_{\gamma}
$$ for $f^{\a}, g^{\beta} \in C^{\infty}(Q)$, where
$$\rho_E (\sigma_{\a}) = (\rho_E)_{\a}^i \frac{\partial}{\partial q^i} \ \ \ \mbox{and}
\ \ \ D_{\sigma_{\a}}\sigma_{\beta} = D_{\a \beta}^{\gamma}\sigma
_{\gamma}.$$

$D_{\alpha\beta}^{\gamma}$ are the Christoffel symbols of the
connection $D$ with respect to the local basis
$\{\sigma_{\alpha}\}$.

Now, suppose that $\sigma_q$ is an element of the fiber $E_q$,
with $q \in Q$. Then, we may introduce the $\R$-linear map
$D_{\sigma_q} : \Gamma(\tau_E) \rightarrow E_q$ given by
$$D_{\sigma_q} \gamma = (D_{\sigma} \gamma)(q),\ \ \ \mbox{for} \ \gamma
\in \Gamma(\tau_E),$$ where $\sigma \in \Gamma(\tau_E)$ and
$\sigma(q) = \sigma_q$. Note that, using (\ref{Condcon}), one
deduces that the map $D_{\sigma_q}$ is well defined. Thus, if
$\kappa_q \in E^*_q$, we may consider the linear map
$$^{D{\mathbf h}}_{\kappa_q} : E_q \rightarrow T_{\kappa_q}E^*, \ \ \ \sigma_q \mapsto
{(\sigma_q)}^{D{\mathbf h}}_{\kappa_q}$$ where ${(\sigma_q)}^{D{\mathbf h}}_{\kappa_q}$ is
the tangent vector to $E^*$ at $\kappa_q$ which is characterized by
the following conditions
\begin{equation}{(\sigma_q)}^{D{\mathbf h}}_{\kappa_q} (f \circ \tau_{E^*}) = \rho_E(\sigma_q)(f)
 \ \ \mbox{and} \ \ {(\sigma_q)}^{D{\mathbf h}}_{\kappa_q}(\widehat \gamma) = \kappa_q(D_{\sigma_q}
 \gamma), \label{DefDh} \end{equation}
for $f \in C^{\infty}(Q)$ and $\gamma \in \Gamma(\tau_E)$. Here,
$\widehat \gamma : E^* \rightarrow \R$ is the linear function on $E^*$
induced by the section $\gamma$.

In particular, if $\sigma \in \Gamma(\tau_E)$ we may define the
\emph{$D$-horizontal lift to $E^*$} as the vector field
$\sigma^{D{\mathbf h}}$ on $E^*$ given by
$$\sigma^{D{\mathbf h}}(\kappa_q) = (\sigma(q))^{D{\mathbf h}}_{\kappa_q}, \ \ \ \mbox{for} \ \kappa_q \in E^*_q, \ \mbox{with} \ q \in Q.$$
It is clear that
\begin{equation}\label{A.38'}(\sigma + \sigma
')^{D{\mathbf h}} = \sigma^{D{\mathbf h}} + (\sigma')^{D{\mathbf
h}}, \ (f\sigma)^{D{\mathbf h}} = (f \circ \tau_{E^*})
\sigma^{D{\mathbf h}},
\end{equation}
 for $\sigma, \sigma' \in \Gamma(\tau_E)$ and $f \in C^{\infty}(Q)$.

Moreover, if $(q^i)$ are local coordinates on $Q$ and
$\{\sigma_{\a} \}$ is a local basis of $\Gamma(\tau_E)$, then we
have the corresponding local coordinates $(q^i, p_{\alpha})$ on
$E^*$ and
\begin{equation}\sigma_{\a}^{D{\mathbf h}} = (\rho_E)^i_{\a} \frac{\partial}{\partial q^i} +
D_{\a \beta}^{\gamma} p_{\gamma} \frac{\partial}{\partial
p_{\beta}}, \label{SigmaalpDh}
\end{equation}
(for more details, see \cite{CaLa}).

 On the other hand, if
$\kappa'_q \in E^*_q$ we may consider the \emph{standard vertical
lift}
 as the linear map
 $$
 ^{\mathbf v}_{\kappa'_q} : E^*_q \rightarrow T_{\kappa'_q}E^*, \ \ \
 \kappa_q \mapsto {(\kappa_q)}^{\mathbf v}_{\kappa'_q}
 $$
 with ${(\kappa_q)}^{\mathbf v}_{\kappa'_q}$ being
 the tangent vector to $E^*$ at $\kappa'_q$ which is characterized by the following
 conditions
 \begin{equation}\label{A.39.0}
 {(\kappa_q)}^{\mathbf v}_{\kappa'_q}(f \circ \tau_{E^*}) = 0
 \ \ \ \mbox{and} \ \ \ {(\kappa_q)}^{\mathbf v}_{\kappa'_q}(\widehat \gamma)
  = \kappa_q(\gamma(q))\end{equation}
   for $f \in C^{\infty}(Q)$ and $\gamma \in
  \Gamma(\tau_E)$.

Thus, if $\kappa \in \Gamma(\tau_{E^*}) $ is a section of the dual
vector bundle $\tau_{E^*} : E^* \rightarrow Q$ then the
\emph{vertical lift to $E^*$} is the vector field $\kappa^{\mathbf
v}$ on $E^*$ given by
$$\kappa^{\mathbf v}(\kappa'_q) = {(\kappa(q))}^{\mathbf v}_{\kappa'_q} \ \ \mbox{for} \ \kappa'_q \in E^*_q, \ \mbox{with} \ q \in Q.$$

It is clear that
\begin{equation}\label{A.39'}
(\kappa + \kappa ')^{\mathbf v} = \kappa^{\mathbf v} +
(\kappa')^{\mathbf v}, \ (f\kappa)^{\mathbf v} = (f \circ
\tau_{E^*}) \kappa^{\mathbf v},
\end{equation}
for $\kappa, \kappa' \in \Gamma(\tau_{E^*})$ and $f \in
C^{\infty}(Q)$.

Moreover, if $(q^i)$ are local coordinates on $Q$, $\{\sigma_\a
\}$ is a local basis of $\Gamma(\tau_E)$, $\{\sigma^\a\}$ is the
dual basis of $\Gamma(\tau_{E^*})$ and $(q^i, p_{\alpha})$ the
corresponding local coordinates on $E^*$ then
\begin{equation}\label{A.39.2}
{(\sigma^\a)}^{\mathbf v} = \frac{\partial}{\partial p_\a}.
\end{equation}

\begin{remark} {\rm If $\{\sigma_{\a}\}$
is a local basis of $\Gamma(\tau_E)$ and $\{\sigma^{\a}\}$ is the
dual basis of $\Gamma(\tau_{E^*})$ then $\{\sigma_{\a}^{D{\mathbf h}},
{(\sigma^{\a})}^{\mathbf v} \}$ is not, in general, a local basis of vector
fields on $E^*$. Note that $\rho_E$ is not, in general, an
epimorphism.}
\end{remark}

\begin{remark}
{\rm
The $\rho_E$-connection $D$ induces a $\rho_E$-connection $D^*$ on the
dual vector bundle $\tau_{E^*} : E^* \rightarrow Q$
 which is defined by $$(D_{\sigma}^* \kappa) (\gamma) = \rho_E(\sigma)(\kappa(\gamma))
 -\kappa(D_{\sigma}\gamma), $$ for $\sigma, \gamma \in \Gamma(\tau_E)$ and $\kappa \in \Gamma(\tau_{E^*})$.
 If $\{\sigma_\a \}$ is a local basis of
$\Gamma(\tau_E)$ and $\{\sigma^\a\}$ is the dual basis of
$\Gamma(\tau_{E^*})$ then $D^*_{\sigma_\a} \sigma^{\gamma} = - D_{\a
\beta}^{\gamma} \sigma^{\beta}$, where $D_{\a \beta}^{\gamma}$ are
the Christoffel symbols of the connection $D$.
   Therefore,
 if $\sigma \in \Gamma(\tau_E)$ it is possible to consider the corresponding $D^*$- horizontal lift to
 $E$ as a vector field $\sigma^{D^*{\mathbf h}}$ on $E$.

 }
\end{remark}

 The above results are a generalization of some lifting operations previously
 defined in \cite{YaIs,PatYano-0, PatYano} for the case $E=TQ$ and $\rho_{E} =
 \rho_{TQ} = id_{TQ}$.

\bigskip

\section*{Appendix B: generalized nonholonomic systems}
\setcounter{equation}{0}
\def\theequation{B.\arabic{equation}}

Let us consider a vector bundle $\tau_E: E \rightarrow Q$ with a
Lie algebroid structure $(\lcf\cdot ,\cdot \rcf_E, \rho_E)$. A
{\it linear generalized nonholonomic system} on $E$ is a
mechanical system determined by a regular lagrangian function $L:E
\rightarrow \R$ and two distributions, the kinematic constraints
described by a vector subbundle $\tau_D:D \rightarrow Q$ and the
variational constraints given by the vector subbundle
$\tau_{\tilde D} : \tilde D \rightarrow Q$. As we have explained
in section 2.3.1, the distribution $\tilde D$ is the subspace
where the constraint forces are doing null work. It is clear that
in the (classical) nonholonomic systems $D = \tilde D$.
Generalized nonholonomic systems were studied in
\cite{BaSo,c,CeGr,m}.

We will assume that the lagrangian is of mechanical type, that is,
we have a bundle metric ${\mathcal G}$ on $E$ and a real
$C^{\infty}$-function $V: Q \rightarrow \mathbb{R}$ such that
\[
L(e) = \displaystyle \frac{1}{2} {\mathcal G}(e, e) -
V(\tau_{E}(e)), \; \; \; \mbox{ for } e \in E.
\]
Moreover, we will assume that the following compatibility
condition holds
\[
E= D \oplus \tilde D^{\perp}
\]
where $\tilde D^{\perp}$ is the orthogonal complement of the
variational distribution $\tilde D$ with respect to the bundle
metric $\GG$.

 We have that the equations of motion of such a
system are given by $\delta L_{\gamma(t)} \in \tilde
D^0_{\tau_D(\gamma(t))}$, for a $\rho_E$-admissible curve $\gamma: I
\rightarrow D$. Then, the equations of motion are
\begin{equation} \left\{ \begin{array}{l}
\displaystyle \frac{dq}{dt} = \rho_{E} \circ \gamma, \\
\nabla^{\GG}_{{\gamma}(t)} {\gamma}(t)
 + grad_{\GG}V(q(t)) \in \tilde D^{\perp}_{q(t)}, \\
 {\gamma}(t) \in D_{q(t)},
\end{array} \right. \label{Agnhmotion}
\end{equation} where $\nabla^{\GG}$ is the Levi-Civita
connection of $\GG$, $\mbox{grad}_{\mathcal G}(V)$ is the section
of $\tau_{E}: E \rightarrow Q$ given by
\[
{\mathcal G}(\mbox{grad}_{\mathcal G}(V), \sigma) =
\rho_{E}(\sigma)(V), \; \; \; \mbox{ for } \sigma \in
\Gamma(\tau_{E}),
\]
and $q = \tau_{D} \circ \gamma$.

Suppose that $(q^i)$ are local coordinates on an open subset $U$
of $Q$ and that $\{\sigma_{\a}\} = \{\sigma_{a}, \sigma_{A}\}$ is
a basis of sections of the vector bundle $\tau_{E}^{-1}(U) \to U$
adapted to the decomposition $E=D \oplus  \tilde D^{\perp}$. We
will denote by $(q^i, v^{\a}) = (q^i, v^{a}, v^A)$ the
corresponding local coordinates on $E$. We will assume that the
bundle metric ${\mathcal G}$ can be locally written as $\GG =
\GG_{\a\beta} \sigma^{\a} \otimes \sigma^{\beta}$. We will also
assume that $\sigma_a$ (respectively, $\sigma_A$) is an
orthonormal basis of $\Gamma(\tau_{D})$ (respectively,
$\Gamma(\tau_{\tilde D ^{\perp}})$). Thus, we have that $\GG_{ab}
= \delta_{a}^{b}$ (respectively $\GG_{AB} = \delta_{A}^{B}$) and
it is easy to see that $\tilde D = span \{\sigma_d - \GG_{d A}
\sigma_{A} \}$. Then the system (\ref{Agnhmotion}) can be written
for $\gamma = v^a \sigma_a + v^A \sigma_A$ and $q = \tau_{E} \circ
\gamma$
\begin{equation} \left\{ \begin{array}{l} \GG
(\nabla^{\GG}_{\gamma(t)} \gamma(t) + grad_{\GG}V(q(t)), \sigma_d
- \GG_{d B} \sigma_{B} ) = 0 \\ \
\\ \dot{q}^i = (\rho^l_D)^i_av^a, \; \;  v^{A} = 0.
\end{array} \right. \label{Agnhmotion2}
\end{equation}

A straightforward computation shows that the system (\ref{Agnhmotion2})  is equivalent to
\begin{equation} \left\{ \begin{array}{l} \dot v^c +  v^a v^b \Gamma_{ab}^c
+ (\rho^r_D)_c^i \frac{\partial V}{\partial q^i} =0\\
\dot q^i = (\rho^l_D)_a^i v^a
\end{array} \right. \label{AgnhMotion3}
\end{equation}
where $\Gamma_{ab}^c$ are the Christoffel symbols of the Levi-Civita
connection in $\tau_E:E\rightarrow Q$ and
\begin{eqnarray}(\rho^l_D)_a^i & = & (\rho_E)_a^i \nonumber \\
(\rho^r_D)_c^i & = & \GG^{cd}((\rho_E)_d^i - \GG_{d A}
(\rho_E)_{A}^i) \label{AlocStuc}
\end{eqnarray} with $\GG^{\a\beta}$ the inverse matrix of
$\GG_{\a\beta}$ (note that ${\mathcal G}^{eC} = -{\mathcal
G}^{ef}{\mathcal G}_{fC}$ and that ${\mathcal G}^{ef} + {\mathcal
G}^{eC}{\mathcal G}_{Cf} = \delta^{e}_{f}$).

Now we can write these symbols $\Gamma_{ab}^c$ in terms of the
local structure functions of the Lie algebroid $\tau_E:E
\rightarrow Q$ using the expression
$$\Gamma_{ab}^c = \frac{1}{2}\GG^{c\a}\left([\a, a;b] + [\a,b;a] + [a,b;\a] \right)$$
where $[\a,\beta;\gamma] = \frac{\partial \GG_{\a \beta}}{\partial
q^i}(\rho_E)_{\gamma}^i + C_{\a \beta}^{\mu} \GG_{\mu \gamma}$
(see \cite{CdLMM2007,CorMar}). Then, since ${\mathcal G}^{cA} =
-{\mathcal G}^{cd}{\mathcal G}_{dA}$, it is easy to prove that
\[
\Gamma_{ab}^{c} v^a v^b = \GG^{cd} \left[ C_{db}^a + \GG_{a A}
C_{db}^A - \GG_{dA} C_{Ab}^{a} - \GG_{dA} \GG_{aB} C_{Ab}^{B} -
\GG_{dA} \frac{\partial \GG_{aA}}{\partial q^i} (\rho_E)_b^i
\right]v^a v^b.
\]
Thus, if we denote by $\tilde{C}_{bc}^a$ the real function given
by
\begin{equation}\label{Ctilde}
\tilde{C}_{bc}^a = -\GG^{cd} \left[ C_{db}^a + \GG_{a A} C_{db}^A
- \GG_{dA} C_{Ab}^{a} - \GG_{dA} \GG_{aB} C_{Ab}^{B} - \GG_{dA}
\frac{\partial \GG_{aA}}{\partial q^i} (\rho_E)_b^i \right]
\end{equation}
it follows that Eqs. (\ref{AgnhMotion3}) may be written as follows
$$ \left\{
\begin{array}{l} \dot v^c =  v^a v^b \tilde C_{bc}^a -
(\rho^r_D)_c^i \frac{\partial V}{\partial q^i} \\
\dot q^i = (\rho^l_D)_a^i v^a.
\end{array} \right.
$$
where $(\rho^l_D)_a^i$ and $(\rho^r_D)_c^i$ are defined as in
(\ref{AlocStuc}).

In what follows we are going to see how the functions $\tilde
C_{bc}^a$, $(\rho^l_D)_a^i$ and $(\rho^r_D)_c^i$ can be interpreted
as the local structure functions of an algebroid structure on
$\tau_D:D \rightarrow Q$.

First, let us consider the following projectors, $$P: D \oplus
D^{\perp} \rightarrow D \ \ \ \ \ \ \mbox{and} \ \ \ \ \ \Pi :
\tilde D \oplus D^{\perp} \rightarrow \tilde D$$ and the natural
inclusions $$i_D: D \rightarrow E \ \ \ \ \ \ \mbox{and} \ \ \ \ \
i_{\widetilde D} : \tilde D \rightarrow E.$$

\begin{prop} \label{ApLeibniz} Suppose that on the vector bundle $\tau_E:E \rightarrow
Q$ we have a Lie algebroid structure $(\lcf\cdot ,\cdot \rcf_E,
\rho_E)$. Then, on the vector subbundle $\tau_D : D \rightarrow Q$
we have an algebroid structure given by the bracket
$$B_D(\sigma, \eta) = P( \lcf i_D (\sigma ), (i_{\widetilde D} \circ \Pi)(\eta)
\rcf_E)$$ for $\sigma, \eta \in \Gamma(\tau_D)$ and the anchor maps
$$\rho^l_D = \rho_E \circ i_D \ \ \ \ \ \ \mbox{and} \ \ \ \ \ \rho^r_D =
\rho_E \circ i_{\widetilde D} \circ \Pi.$$ Moreover, in the local
basis $\{\sigma_a, \sigma_A\}$ adapted to the decomposition $D
\oplus \tilde D^{\perp}$, this algebroid $(B_D, \rho^l_D, \rho^r_D)$
has local structure functions given by (\ref{AlocStuc}) and
(\ref{Ctilde}).
\end{prop}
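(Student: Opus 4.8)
The plan is to prove the two assertions of the proposition in turn: first that $(B_D,\rho^l_D,\rho^r_D)$ is an algebroid structure on $\tau_D:D\to Q$, i.e.\ that $B_D$ satisfies the Leibniz-type identity (\ref{Eqfun}) with these anchors, and then that in the adapted basis $\{\sigma_a,\sigma_A\}$ its structure functions are exactly (\ref{AlocStuc}) and (\ref{Ctilde}). For the algebroid property I would argue directly from (\ref{Eqfun}). The maps $i_D$, $i_{\widetilde D}$, $\Pi$ and $P$ are vector bundle morphisms, hence $C^{\infty}(Q)$-linear on sections; moreover the compatibility condition (\ref{compCond}), $E=D\oplus\widetilde D^{\perp}$, gives $D\cap\widetilde D^{\perp}=0$, which says the pairing $\GG:D\times\widetilde D\to\R$ is nondegenerate, hence also $\widetilde D\cap D^{\perp}=0$ and so $E=\widetilde D\oplus D^{\perp}$ (so $\Pi$ is defined on all of $E$). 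Combining the splittings $E=D\oplus D^{\perp}$ and $E=D\oplus\widetilde D^{\perp}$ one gets the two key identities
$$P\circ i_D=\mathrm{id}_D,\qquad P\circ i_{\widetilde D}\circ\Pi\circ i_D=\mathrm{id}_D;$$
the second holds because for $\sigma\in\Gamma(\tau_D)$, writing $i_D\sigma=u+w$ with $u\in\Gamma(\tau_{\widetilde D})$, $w\in\Gamma(\tau_{D^{\perp}})$, we have $\Pi(i_D\sigma)=u$, $i_{\widetilde D}(u)=u$, and $P(u)=P(i_D\sigma)=\sigma$ since $P$ annihilates $D^{\perp}$. Now plugging $f\sigma$, $f'\eta$ into $B_D(\sigma,\eta)=P(\lcf i_D\sigma,(i_{\widetilde D}\circ\Pi\circ i_D)\eta\rcf_E)$, pulling the functions out of the morphisms, expanding with the Leibniz rule of the Lie algebroid bracket $\lcf\cdot,\cdot\rcf_E$, and then applying $P$ together with the two identities above collapses the result into precisely the right-hand side of (\ref{Eqfun}) with $\rho^l_D=\rho_E\circ i_D$ and $\rho^r_D=\rho_E\circ i_{\widetilde D}\circ\Pi$. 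This step is routine.

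For the coordinate expressions, the left anchor is immediate: $\rho^l_D(\sigma_a)=\rho_E(\sigma_a)=(\rho_E)^i_a\,\partial/\partial q^i$, so $(\rho^l_D)^i_a=(\rho_E)^i_a$. For the right anchor I would first compute $\Pi(\sigma_a)$ in the adapted basis. Using $\GG_{ab}=\delta_{ab}$ and $\GG_{AB}=\delta_{AB}$ one checks $\widetilde D=\mathrm{span}\{\sigma_d-\GG_{dA}\sigma_A\}$ and $D^{\perp}=\mathrm{span}\{\sigma_A-\GG_{Ad}\sigma_d\}$; decomposing $\sigma_a$ along $\widetilde D\oplus D^{\perp}$ then reduces to a small linear system whose matrix is $M_{cb}=\delta_{cb}-\GG_{cA}\GG_{Ab}$, and since $\GG^{cd}M_{db}=\delta^c_b$ (this is exactly the content of the two metric identities $\GG^{eC}=-\GG^{ef}\GG_{fC}$ and $\GG^{ef}+\GG^{eC}\GG_{Cf}=\delta^e_f$ quoted before (\ref{AlocStuc})) one obtains $\Pi(\sigma_a)=\GG^{ad}(\sigma_d-\GG_{dA}\sigma_A)$. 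Applying $\rho_E$ gives $(\rho^r_D)^i_a=\GG^{ad}((\rho_E)^i_d-\GG_{dA}(\rho_E)^i_A)$, which is (\ref{AlocStuc}).

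Finally, for the bracket I would use (\ref{notorsion}) to write $B_D(\sigma_a,\sigma_b)=P(\nabla^{\GG}_{\sigma_a}\Pi(\sigma_b)-\nabla^{\GG}_{\Pi(\sigma_b)}\sigma_a)$, substitute $\Pi(\sigma_b)=\GG^{bd}(\sigma_d-\GG_{dA}\sigma_A)$, expand the connection in Christoffel symbols of $\nabla^{\GG}$ and those in terms of the Lie algebroid structure functions $C^{\gamma}_{\a\beta}$ and the metric coefficients via the Koszul formula, and project with $P$ using $P(\sigma_c)=\sigma_c$ and $P(\sigma_C)=\GG_{Ce}\sigma_e$. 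Reading off the $\sigma_c$-component and simplifying with the same two metric identities reproduces the function $\widetilde C$ of (\ref{Ctilde}); as a consistency check, symmetrizing in $a,b$ must recover the contraction $\Gamma^c_{ab}v^av^b$ already expanded just before (\ref{Ctilde}) in the derivation of (\ref{AgnhMotion3}). I expect this last computation to be the main obstacle: it is by far the longest and most error-prone, and one must be careful both that $\widetilde D^{\perp}$ is not $P$-invariant (so $P(\sigma_C)\neq 0$ contributes) and with index placement, since $B_D$ is not skew-symmetric and $\widetilde C^c_{ab}\neq-\widetilde C^c_{ba}$ in general — so the symmetric-in-$(a,b)$ contraction that appears in the equations of motion does not by itself pin down the full structure-function tensor, and the direct computation (not merely the comparison with the dynamics) is really needed.
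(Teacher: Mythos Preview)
Your proposal is correct and, for the algebroid axiom and the anchor computations, essentially matches the paper: the paper also reduces the Leibniz check to the single identity $P\circ\Pi_{|D}=\mathrm{id}_D$ (which is your $P\circ i_{\widetilde D}\circ\Pi\circ i_D=\mathrm{id}_D$), and obtains $\Pi(\sigma_a)=\GG^{ad}(\sigma_d-\GG_{dA}\sigma_A)$ in the same way.

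The one genuine difference is in the structure-function computation for $B_D$. You pass through the Levi--Civita connection via (\ref{notorsion}), write $B_D(\sigma_a,\sigma_b)=P(\nabla^{\GG}_{\sigma_a}\Pi(\sigma_b)-\nabla^{\GG}_{\Pi(\sigma_b)}\sigma_a)$, and then expand the Christoffel symbols through the Koszul formula before projecting. The paper instead computes $P(\lcf\sigma_b,\Pi(\sigma_c)\rcf_E)$ directly: it substitutes $\Pi(\sigma_c)$, applies the Leibniz rule for $\lcf\cdot,\cdot\rcf_E$ to produce derivative terms plus $C^\gamma_{\alpha\beta}$-terms, projects with $P(\sigma_a)=\sigma_a$, $P(\sigma_A)=\GG_{cA}\sigma_c$, and then simplifies the derivative terms using the identity obtained by differentiating $\GG^{cd}\GG_{dA}\GG_{aA}=\GG^{ca}+\delta^c_a$. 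This avoids introducing and then cancelling the symmetric Christoffel pieces, so it is noticeably shorter and less error-prone; your route is valid but does extra work that the direct computation sidesteps. Your remark that the symmetric contraction appearing in the equations of motion does not determine the full $\widetilde C^a_{bc}$ is well taken, and indeed the paper does the full direct computation rather than appealing to that contraction.
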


\begin{proof} In the local basis $\{\sigma_a, \sigma_A\}$ adapted to the decomposition $D \oplus
\tilde D^{\perp}$ we have that
\[ P(\sigma_a) = \sigma_a \ \ \ \ \
\ \mbox{and} \ \ \ \ \ P(\sigma_A) = \GG_{cA} \sigma_c.
\]
Note that $\sigma_{A} - {\mathcal G}_{cA} \sigma_{c} \in
\Gamma(\tau_{D^{\perp}})$. Moreover, since $\sigma_{a} - {\mathcal
G}^{ad}(\sigma_{d} - {\mathcal G}_{dA} \sigma_{A}) \in
\Gamma(\tau_{D^{\perp}})$, we deduce that
\[\Pi(\sigma_a) =
\GG^{ad}(\sigma_d - \GG_{dA} \sigma_A).
\]
Then it is simple to prove that $P \circ \Pi_{|D} = id_D$ and from
this it is obtained that the bracket and the anchor maps given above
define an algebroid structure on $\tau_{D}: D \rightarrow Q$.

On the other hand,
\[\begin{array}{rcl}
B_{D}(\sigma_{b}, \sigma_{c}) &=& (\displaystyle \frac{\partial
{\mathcal G}^{ca}}{\partial q^i}(\rho_{E})^i_b - \frac{\partial
{\mathcal G}^{cd}}{\partial q^i} (\rho_{E})^i_b {\mathcal
G}_{dA}{\mathcal G}_{aA} - {\mathcal G}^{cd}\frac{\partial
{\mathcal G}_{dA}}{\partial q^i}(\rho_{E})^i_b {\mathcal
G}_{cA})\sigma_{a} \\[8pt]&&+ \GG^{cd} ( C_{bd}^a + \GG_{a A} C_{bd}^A -
\GG_{dA} C_{bA}^{a} - \GG_{dA} \GG_{aB} C_{bA}^{B})\sigma_{a}.
\end{array}
\]
Now, using that ${\mathcal G}^{cd}{\mathcal G}_{dA}{\mathcal
G}_{aA} = {\mathcal G}^{ca} + \delta^c_a$, it follows that
\[
\displaystyle \frac{\partial {\mathcal G}^{ca}}{\partial q^i} =
\frac{\partial {\mathcal G}^{cd}}{\partial q^i} {\mathcal
G}_{dA}{\mathcal G}_{aA} + {\mathcal G}^{cd}\frac{\partial
{\mathcal G}_{dA}}{\partial q^i}{\mathcal G}_{aA} + {\mathcal
G}^{cd}{\mathcal G}_{dA} \frac{\partial {\mathcal
G}_{aA}}{\partial q^i}
\]
which implies that (see (\ref{Ctilde}))
\[
B_{D}(\sigma_{b}, \sigma_{c}) = \tilde{C}_{bc}^a \sigma_{a}.
\]
This ends the proof of the result.
\end{proof}

\

Now let us define the following two maps:
\begin{eqnarray}
D^l: \Gamma(\tau_D) \times \Gamma(\tau_D) \rightarrow \Gamma(\tau_D)
\ & \mbox{such that} & \ D^l_{\sigma} \bar \sigma = P
(\nabla^{\GG}_{\sigma} \Pi \bar \sigma) \label{D1} \\
D^r: \Gamma(\tau_D) \times \Gamma(\tau_D) \rightarrow
\Gamma(\tau_D) \ & \mbox{such that} & \ D^r_{\sigma} \bar \sigma =
P (\nabla^{\GG}_{\Pi \sigma} \bar \sigma). \label{D2}
\end{eqnarray}

\begin{prop} \label{iconn}
The map $D^l$ defined in (\ref{D1}) is a $\rho^l_{D}$-connection and
analogously the map $D^r$ defined in (\ref{D2}) is a
$\rho^r_{D}$-connection with $\rho^l_{D}$ and $\rho^r_{D}$ defined
as in Proposition \ref{ApLeibniz}.
\end{prop}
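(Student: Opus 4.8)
The plan is to verify directly the two defining properties of a $\rho$-connection from Definition \ref{defRhoconn} (equation (\ref{Condcon})), treating $D^l$ and $D^r$ in parallel. The only preliminary remark needed is a careful reading of (\ref{D1}) and (\ref{D2}): in $D^l_\sigma \bar\sigma = P(\nabla^\GG_\sigma \Pi\bar\sigma)$ the symbol $\Pi\bar\sigma$ means $(i_{\widetilde D}\circ\Pi\circ i_D)(\bar\sigma)\in\Gamma(\tau_E)$, the covariant derivative $\nabla^\GG$ is the Levi-Civita connection of the bundle metric $\GG$ on the \emph{Lie algebroid} $\tau_E:E\to Q$ (hence a genuine $\rho_E$-connection, satisfying (\ref{Condcon}) with anchor $\rho_E$), the direction $\sigma$ is read as $i_D\sigma\in\Gamma(\tau_E)$, and the final application of $P$ brings the result back into $\Gamma(\tau_D)$ so that $D^l$ and $D^r$ are genuinely $\R$-bilinear maps $\Gamma(\tau_D)\times\Gamma(\tau_D)\to\Gamma(\tau_D)$; likewise for $D^r_\sigma\bar\sigma = P(\nabla^\GG_{\Pi\sigma}\bar\sigma)$ the direction is $(i_{\widetilde D}\circ\Pi\circ i_D)(\sigma)$ and the second argument is $i_D\bar\sigma$.

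For the first property, $C^\infty(Q)$-linearity in the first argument, I would use that $i_D$, $i_{\widetilde D}\circ\Pi\circ i_D$ and $P$ are vector bundle morphisms over the identity of $Q$, hence $C^\infty(Q)$-linear on sections, together with the $C^\infty(Q)$-linearity of $\nabla^\GG$ in its direction argument. For instance $D^l_{f\sigma}\bar\sigma = P(\nabla^\GG_{f\,i_D\sigma}(i_{\widetilde D}\Pi i_D\bar\sigma)) = fP(\nabla^\GG_{i_D\sigma}(i_{\widetilde D}\Pi i_D\bar\sigma)) = fD^l_\sigma\bar\sigma$, and the analogous chain for $D^r$ uses moreover the $C^\infty(Q)$-linearity of $i_{\widetilde D}\circ\Pi\circ i_D$ to pull $f$ out of the direction slot.

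For the Leibniz property in the second argument, I would pull the function out of the section using $C^\infty(Q)$-linearity of $i_D$ (resp. of $i_{\widetilde D}\circ\Pi\circ i_D$), apply the Leibniz rule of the $\rho_E$-connection $\nabla^\GG$, and then apply $P$. For $D^l$ this gives $D^l_\sigma(g\bar\sigma) = P(\rho_E(i_D\sigma)(g)\,i_{\widetilde D}\Pi i_D\bar\sigma + g\,\nabla^\GG_{i_D\sigma}(i_{\widetilde D}\Pi i_D\bar\sigma)) = \rho_E(i_D\sigma)(g)\,P(i_{\widetilde D}\Pi i_D\bar\sigma) + g\,D^l_\sigma\bar\sigma$; here $\rho_E(i_D\sigma) = \rho^l_D(\sigma)$ by definition of the left anchor, and $P\circ i_{\widetilde D}\circ\Pi\circ i_D = \mathrm{id}_D$ is exactly the identity $P\circ\Pi_{|D} = \mathrm{id}_D$ established in the proof of Proposition \ref{ApLeibniz}, so the middle term equals $\rho^l_D(\sigma)(g)\bar\sigma$ and (\ref{Condcon}) holds for $D^l$. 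For $D^r$ the same computation yields $\rho_E(i_{\widetilde D}\Pi i_D\sigma)(g)\,P(i_D\bar\sigma) + g\,D^r_\sigma\bar\sigma$; now $\rho_E(i_{\widetilde D}\Pi i_D\sigma) = \rho^r_D(\sigma)$ by definition of the right anchor and $P\circ i_D = \mathrm{id}_D$ because $P:D\oplus D^\perp\to D$ is the orthogonal projector, giving the Leibniz identity for $D^r$.

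There is no real obstacle here; the argument is a direct check. The only point deserving care — and the single place where something genuinely has to be invoked rather than computed — is matching the coefficient term produced by the Leibniz rule of $\nabla^\GG$ with the prescribed anchors $\rho^l_D$, $\rho^r_D$, together with the two composition identities $P\circ i_D = \mathrm{id}_D$ and $P\circ\Pi_{|D} = \mathrm{id}_D$, which are precisely what make the surviving middle coefficients come out as $\bar\sigma$ and not some projection of it.
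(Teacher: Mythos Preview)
Your proposal is correct and follows essentially the same approach as the paper: both verify the two conditions in (\ref{Condcon}) directly, using the $C^\infty(Q)$-linearity of $\nabla^{\GG}$ in its direction slot for the first property and the Leibniz rule of $\nabla^{\GG}$ together with $P\circ\Pi_{|D}=\mathrm{id}_D$ (for $D^l$) and $P\circ i_D=\mathrm{id}_D$ (for $D^r$) for the second. Your write-up is simply a bit more explicit about the inclusions $i_D$ and $i_{\widetilde D}$, which the paper leaves implicit.
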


\begin{proof}:
It is sufficient to see that the maps $D^l$ and $D^r$ verify
equation (\ref{Condcon}). In fact, for $\sigma,\bar \sigma \in
\Gamma (\tau_D)$ and $f \in C^{\infty}(D)$, we have
\begin{eqnarray*}
D^l_{f \sigma} \bar \sigma &=& P (\nabla^{\GG}_{ f\sigma} \Pi \bar
\sigma) = f\, P (\nabla^{\GG}_{\sigma} \Pi \bar
\sigma) = f D^l_{\sigma} \bar \sigma, \\
D^l_{\sigma} f\bar \sigma &=& P (\nabla^{\GG}_{\sigma} \Pi (f \bar
\sigma) ) = P \left( f \nabla^{\GG}_{\sigma} \Pi \bar \sigma +
\rho_E(\sigma)(f) \Pi \bar \sigma \right) = \ f\, D^l_{\sigma}
\bar \sigma + \rho^l_{D} (\sigma) (f) \bar \sigma,
\end{eqnarray*}
where in the last equality we use again that $P \circ \Pi|_D =
id_D$.

On the other hand, it follows that
\begin{eqnarray*}
D^r_{f \sigma} \bar \sigma & = & P (\nabla^{\GG}_{\Pi (f\sigma)}
\bar \sigma) = P (\nabla^{\GG}_{f \Pi \sigma} \bar \sigma) = f\, P
(\nabla^{\GG}_{\Pi \sigma} \bar
\sigma) = f D^r_{\sigma} \bar \sigma, \\
D^r_{\sigma} f\bar \sigma &=& P (\nabla^{\GG}_{\Pi \sigma} f \bar
\sigma) = P \left( f \nabla^{\GG}_{\Pi \sigma} \bar \sigma +
\rho_E(\Pi \sigma)(f) \bar \sigma \right) = f\, D^r_{\sigma} \bar
\sigma + \rho^r_{D} (\sigma) (f) \bar \sigma.
\end{eqnarray*}
\end{proof}

Now, due to Proposition \ref{PropBracket} (section
\ref{section-main}), if we define
$$B_D( \sigma, \eta) = D^l_{\sigma} \eta - D^r_{\eta} \sigma,$$ for $\sigma,\eta \in \Gamma(\tau_D)$,
then $(B_D, \rho^l_{D}, \rho^r_{D})$ is an algebroid structure on
$\tau_D: D \rightarrow Q$.

Having the definitions of the left and right connections $D^l$ and
$D^r$ in terms of the Levi-Civita connection ${\nabla}^\GG$, it is
very easy to see that this bracket $B_D$ coincides with the one
defined in Proposition \ref{ApLeibniz}.

\end{document}